\documentclass[11pt]{article} %
\usepackage{neutral_preprint} %

\usepackage{amsmath,amsfonts,bm}

\def\eqref#1{equation~\ref{#1}}

\def\1{\bm{1}}

\DeclareMathAlphabet{\mathsfit}{\encodingdefault}{\sfdefault}{m}{sl}
\SetMathAlphabet{\mathsfit}{bold}{\encodingdefault}{\sfdefault}{bx}{n}

\newcommand{\E}{\mathbb{E}}

\newcommand{\R}{\mathbb{R}}

\newcommand{\softmax}{\mathrm{softmax}}

\newcommand{\KL}{D_{\mathrm{KL}}}
\newcommand{\Var}{\mathrm{Var}}

\newcommand{\Cov}{\mathrm{Cov}}

\DeclareMathOperator*{\argmax}{arg\,max}
\DeclareMathOperator*{\argmin}{arg\,min}

\usepackage{needspace}

\usepackage[utf8]{inputenc} %
\usepackage[T1]{fontenc}    %

\usepackage{amsmath, amssymb, amsthm, amsfonts}
\usepackage{mathtools}   %
\usepackage{thmtools, thm-restate}
\usepackage{nicefrac}       %
\usepackage{bbm} %
\mathtoolsset{showonlyrefs=true}  %

\newtheoremstyle{pplain}{6pt}{6pt}{\itshape}{}{\bfseries}{.}{0.5em}{}
\newtheoremstyle{pdef}{6pt}{6pt}{\normalfont}{}{\bfseries}{.}{0.5em}{}
\theoremstyle{pplain}
\newtheorem{lemma}{Lemma}
\newtheorem{theorem}{Theorem}
\newtheorem{corollary}{Corollary}
\newtheorem{proposition}{Proposition}

\theoremstyle{pdef}
\newtheorem{assumption}{Assumption}
\newtheorem{definition}{Definition}
\newtheorem{remark}{Remark}

\usepackage{graphicx} %
\usepackage{pgfplots}
\pgfplotsset{compat=1.18}
\usepgfplotslibrary{fillbetween, groupplots}
\usetikzlibrary{arrows.meta, positioning, calc, patterns.meta}
\usepackage{tikz}
\usepackage{pgfplots}
\usetikzlibrary{calc,matrix,positioning}
\usepgfplotslibrary{fillbetween,groupplots}
\pgfplotsset{compat=1.18}

\pgfplotsset{
  colormap={mairl heat}{
    rgb255(0cm)=(18,78,120)
    rgb255(1cm)=(23,105,170)
    rgb255(2cm)=(87,151,191)
    rgb255(3cm)=(168,201,222)
    rgb255(4cm)=(205,224,236)
  },
}

\definecolor{mairlInk}{HTML}{17212B}
\definecolor{mairlMuted}{HTML}{667085}
\definecolor{mairlGrid}{HTML}{D9DEE7}
\definecolor{mairlPanel}{HTML}{F7F8FA}
\definecolor{mairlActive}{HTML}{1769AA}
\definecolor{mairlExpert}{HTML}{D58B32}
\definecolor{mairlZeroSum}{HTML}{16856F}
\definecolor{mairlFollower}{HTML}{C4493D}
\definecolor{mairlFixed}{HTML}{8A725D}
\definecolor{mairlOnline}{HTML}{8A6FB0}
\definecolor{mairlOracle}{HTML}{667085}
\definecolor{mairlHeatOne}{HTML}{124E78}
\definecolor{mairlHeatTwo}{HTML}{1769AA}
\definecolor{mairlHeatThree}{HTML}{357EB0}
\definecolor{mairlHeatFour}{HTML}{5797BF}
\definecolor{mairlHeatFive}{HTML}{7FB0D0}
\definecolor{mairlHeatSix}{HTML}{A8C9DE}
\definecolor{mairlHeatSeven}{HTML}{CDE0EC}

\pgfplotsset{
  mairl axis/.style={
    axis line style={draw=mairlGrid, line width=0.35pt},
    tick style={draw=mairlGrid, line width=0.35pt},
    tick label style={font=\scriptsize, text=mairlMuted},
    label style={font=\footnotesize, text=mairlInk},
    xlabel style={yshift=1.5pt},
    ylabel style={
      at={(axis description cs:-0.105,0.5)},
      anchor=south,
    },
    title style={font=\small, text=mairlInk},
    grid=major,
    grid style={draw=mairlGrid, line width=0.25pt},
    legend style={draw=none, fill=none, font=\scriptsize},
    legend cell align=left,
    clip mode=individual,
  },
  mairl log axis/.style={mairl axis, ymode=log},
  mairl active/.style={draw=mairlActive, line width=0.65pt},
  mairl expert/.style={draw=mairlExpert, line width=0.65pt},
  mairl zero sum/.style={draw=mairlZeroSum, line width=0.65pt},
  mairl follower/.style={draw=mairlFollower, line width=0.65pt},
  mairl fixed/.style={draw=mairlFixed, line width=0.65pt},
  mairl online/.style={draw=mairlOnline, line width=0.65pt},
  mairl warm/.style={solid},
  mairl no warm/.style={dashed},
  mairl ci/.style={draw=none, fill opacity=0.09},
}

\tikzset{
  mairl pending/.style={
    draw=mairlGrid,
    fill=mairlPanel,
    text=mairlMuted,
    rounded corners=1pt,
    inner xsep=5pt,
    inner ysep=3pt,
    font=\scriptsize,
  },
  mairl value box/.style={
    fill=white,
    rounded corners=0.5pt,
    line width=0.25pt,
    inner xsep=1.5pt,
    inner ysep=0.6pt,
    font=\tiny,
  },
}

\usepackage{algorithm}
\usepackage{algpseudocode}

\usepackage{booktabs}       %
\usepackage{tabularx}
\usepackage{caption}
\usepackage[most]{tcolorbox}
\tcbuselibrary{skins}
\usepackage{lipsum}

\usepackage{natbib}
\usepackage{titletoc}
\usepackage{microtype}      %

\usepackage{xcolor}         %
\usepackage{url}            %
\usepackage{hyperref}       %
\usepackage{footnotehyper} %

\usepackage[disable]{todonotes}

\definecolor{bluegrey}{HTML}{FAFAFF}
\definecolor{mygreen}{HTML}{008000}
\definecolor{pastelorange}{RGB}{245, 170, 120}
\definecolor{revisionviolet}{HTML}{6F2DBD}
\definecolor{EstColor}{RGB}{35,90,155}   %
\definecolor{CovColor}{RGB}{0,120,110}   %
\definecolor{roadmapgray}{RGB}{50,50,50}

\newcommand{\cm}[1]{\textcolor{pastelorange}{\textit{#1}}}

\DeclareMathOperator{\SubOpt}{SubOpt}
\DeclareMathOperator{\NashGap}{NashGap}
\DeclareMathOperator{\Regret}{Reg}
\DeclareMathOperator{\tr}{Trace}
\DeclareMathOperator{\spn}{span}
\DeclareMathOperator{\TV}{TV}
\DeclareMathOperator{\rank}{rank}
\DeclareMathOperator{\Image}{Im}

\DeclarePairedDelimiter{\norm}{\lVert}{\rVert}
\DeclarePairedDelimiterXPP{\dualnorm}[2]{}{\lVert}{\rVert}{_{#2,*}}{#1}
\DeclarePairedDelimiterX{\inner}[2]{\langle}{\rangle}{#1, #2}

\newcommand{\cA}{\mathcal{A}}
\newcommand{\cB}{\mathcal{B}}
\newcommand{\cD}{\mathcal{D}}

\newcommand{\cG}{\mathcal{G}}
\newcommand{\cH}{\mathcal{H}}
\newcommand{\cI}{\mathcal{I}}
\newcommand{\cL}{\mathcal{L}}
\newcommand{\cM}{\mathcal{M}}
\newcommand{\cO}{\mathcal{O}}
\newcommand{\cR}{\mathcal{R}}
\newcommand{\cS}{\mathcal{S}}

\newcommand{\cV}{\mathcal{V}}
\newcommand{\cW}{\mathcal{W}}
\newcommand{\cX}{\mathcal{X}}

\newcommand{\Prob}{\mathbb{P}}

\newcommand{\what}{\hat{\omega}}

\newcommand{\ceq}{\coloneqq}
\newcommand{\ud}{\mathop{}\!\mathrm{d}}

\title{Learning Rewards in Strategic Games}

\author{Antoine Bergerault \And Cyrille Kone \And Giorgia Ramponi}
\sharedaffiliation{ALPI Lab, University of Zurich\\
  \texttt{\{bergerault,kone,ramponi\}@ifi.uzh.ch}}

\begin{document}

\maketitle

\begin{abstract}
Understanding the objectives that drive agents’ behavior is crucial in strategic environments, where multiple agents interact and adapt to one another. This challenge arises in domains such as cybersecurity, autonomous driving, and online markets, where agents’ underlying incentives are typically unknown and must be inferred from their observed behavior. Inverse reinforcement learning (IRL) provides a natural framework for this problem, but extending IRL to strategic multi-agent settings introduces a fundamental challenge: a reward that explains observed behavior does not necessarily induce optimal strategies when used for decision-making. In this work, we study this problem through the lens of \emph{strategic identifiability}, and show that learning from a fixed policy pair can be insufficient to strategically identify rewards, independent of the number of expert samples $K$. Motivated by this, we introduce an active IRL algorithm for two-player repeated Stackelberg games, in which the leader adaptively selects policies to elicit informative responses about the follower’s reward. Combining maximum-likelihood estimation with experimental design, we derive finite-sample guarantees on the Stackelberg sub-optimality induced by the learned reward, matching up to polylogarithmic factors a new minimax learning rate of $K^{-1/2}$. Finally, we evaluate our approach in a realistic anti-poaching setting, demonstrating the benefits of active exploration for reward recovery and downstream strategic performance.\footnote{The code is made available at \href{https://github.com/ALPIuzh/learning-rewards-in-strategic-games}{https://github.com/ALPIuzh/learning-rewards-in-strategic-games}.}

\end{abstract}

\section{Introduction}
Strategic decision-making often requires reasoning about agents whose objectives are unknown. A defender may need to anticipate the incentives of an attacker, an autonomous vehicle the preferences of other drivers, or a wildlife-protection agency the objectives of poachers. In such settings, agents' behavior provides an indirect signal about their underlying incentives. Recovering these objectives is valuable not only for explaining observed behavior, but, crucially, for predicting how agents may respond to new policies or interventions.

Inverse reinforcement learning (IRL) \citep{ng2000algorithms, abbeel2004apprenticeship} provides a natural framework for inferring latent reward functions from the behavior of approximately optimal agents. While IRL has been extensively studied in single-agent decision-making, strategic interactions fundamentally change the problem. An agent's observed behavior depends not only on its own reward, but also on the behavior of the agents it interacts with. Consequently, a reward that perfectly explains behavior under one interaction may fail to predict how the same agent responds when its opponent changes strategy. This raises a fundamental question for IRL in games: 

\vspace{0.6em}

\begin{center}
    \emph{when does observed behavior contain enough information about an agent's reward\\to support reliable strategic decision-making?} 
\end{center}

\vspace{0.6em}

Rather than requiring recovery of the ground-truth reward itself, we argue that reward learning should preserve the strategic consequences that matter at deployment. We study this question through the lens of \emph{strategic identifiability}, relating ambiguity in the inferred reward to the sub-optimality incurred when using it for downstream strategic decision-making. We illustrate this objective in Figure~\ref{fig:security-games} for a wildlife protection application, where an error in reward estimation from a ranger patrol can incur highly sub-optimal behavior at deployment.

\vspace{0.7em}

\begin{figure}[H]
  \centering
  \input{plots/illustration}
  \caption{\textbf{Illustration of the impact of a poacher reward error on downstream tasks in a wildlife preservation setting.} At deployment, a small error in reward estimation can lead to highly sub-optimal behavior from a rational patrol.}
  \label{fig:security-games}
\end{figure}

\vspace{0.7em}

A key difficulty is that passive observations may provide insufficient information for this objective. Observing an agent repeatedly against a fixed policy only reveals its preferences under that particular interaction. Distinct reward functions may therefore generate indistinguishable behavior while inducing substantially different responses to alternative policies. As we show in this work, this can prevent low strategic sub-optimality even with infinitely many observations. This limitation suggests a different approach: rather than passively observing strategic behavior, can we actively choose interactions that reveal the reward information needed for downstream decision-making?

We answer this question in repeated Stackelberg games, where a leader can adaptively select policies and observe the responses of a bounded-rational follower. This setting naturally allows the leader to trade off exploiting its current estimate of the follower's reward with selecting interactions that are informative about the follower's underlying objectives. We develop an active IRL approach that combines maximum-likelihood reward estimation with experimental design to progressively resolve strategically relevant reward ambiguity, and we provide finite-sample guarantees on the strategic performance induced by the learned reward.

\paragraph{Contributions} We summarize our main contributions as follows:
\begin{itemize}
\item 
We formalize the objective of IRL in games in terms of the strategic performance induced by the learned reward. We show that observations generated by a fixed policy pair can be insufficient to identify a strategically useful reward, even with infinitely many samples (Lemma~\ref{lem:reward-id-subopti}), while on hard instances any adaptive sampling strategy must suffer a Stackelberg sub-optimality lower bounded by $\Omega(K^{-1/2})$ after $K$ observations (Theorem~\ref{thm:lower}).

\item 
Motivated by this impossibility, we introduce an active IRL framework for repeated Stackelberg normal-form games, in which the leader adaptively selects policies to elicit informative responses from a bounded-rational follower with linear rewards. Our algorithm combines maximum-likelihood reward estimation with experimental design and achieves a high-probability Stackelberg sub-optimality of order $\tilde{\mathcal O}(d/\sqrt{K})$ after $K$ interactions where $d$ is the dimensionality of the follower features (Theorem~\ref{thm:stackelberg-static}). This matches the optimal dependence on $K$ up to polylogarithmic factors.

\item 
We then evaluate our approach on synthetic normal-form games and a realistic wildlife-protection environment. Our experiments demonstrate the benefits of actively selecting interactions for reward recovery and downstream strategic performance.

\end{itemize}

\paragraph{Related work} Our work connects inverse reinforcement learning and learning in games. In single-agent settings, IRL has long studied the ambiguity inherent in recovering rewards from demonstrations \citep{ng2000algorithms, abbeel2004apprenticeship}. In multi-agent settings, recent work has focused on whether policies learned from demonstrations remain robust when interacting with strategic agents, typically through notions of exploitability \citep{tang2024multi, freihaut2026learning, bergerault2026matching}.

Emerging work on IRL in games has instead characterized sets of rewards compatible with observed strategic behavior \citep{freihaut2026feasible, barbara2026optimal}. We take a complementary perspective: rather than only characterizing which rewards are consistent with observed behavior, we ask which interactions provide the information needed to identify a strategically useful reward, and how such interactions can be actively selected with finite-sample guarantees.%

A more extensive discussion on previous work can be found in Appendix~\ref{appx:previous-work}.

\section{Preliminaries and formal setup}

\paragraph{Notation} For the rest of the paper, we will use the following notations. For any measurable space $\cX$, $\Delta(\cX)$ denotes the probability simplex over $\cX$. Moreover, we define the Shannon entropy $\cH_{\cX}: \Delta(\cX) \to \R$ as $\cH_{\cX}(P) = -\int \log(P(x)) \ud P(x)$, and we will use the simplified notation $\cH$ when $\cX$ is clear from context. For any positive definite matrix $P \in \R^{n \times n}$ and any $x\in\R^n$, we denote its norm with respect to $P$ as $\norm{x}_P = \sqrt{x^\top Px}$. Finally, for any positive $N \in \mathbb{Z}_+$, we use the notation $[N] \ceq \left\{1, \dots, N\right\}$, and for any finite set $\cS$ we denote by $|\cS|$ its cardinality.

We define a two-player normal-form game as a tuple $\cM \coloneqq (\cA, \cB, r_1, r_2)$ where $\cA, \cB$ represent the finite action spaces of the two players, and for $i \in \{1, 2\}$, we denote by $r_i: \cA \times \cB \to \R$, $r_i \in \cR_i \subseteq \R^{\cA \times \cB}$ the individual rewards. We denote by $\mu \in \Delta(\cA)$ and $\nu \in \Delta(\cB)$ the policies of the two players, such that joint actions are sampled from the product distribution $\mu \otimes \nu \in \Delta(\cA \times \cB)$.

In this environment, player $i$ aims to maximize the objective $J_{r_i}(\mu, \nu) = \E_{(a, b) \sim (\mu, \nu)}\left[r_i(a, b)\right]$ such that given $\mu$, player 2 plays a best-response $\nu^*_{\mu}(r_2) \in \argmax_{\tilde{\nu}} J_{r_2}(\mu, \tilde\nu)$, similarly for the $\mu$-player.

\subsection{Multi-agent learning objectives}\label{sec:multi-agent-objectives}

Several metrics can be used to represent the sub-optimality of learning players in games.

Adapted from the single-agent learning literature \citep{cesa2006prediction,orabona2019modern}, the \textit{regret} measures the opportunity cost of each agent in the environment induced by the other player.
\begin{definition}[Regret]\label{def:regret}
    We define the (external) regret of the players as follows:
    \begin{equation}
        \Regret^1_r(\mu, \nu) \ceq \max_{\mu^*} J_{r_1}(\mu^*, \nu) - J_{r_1}(\mu, \nu),
        \quad 
        \Regret^2_r(\mu, \nu) \ceq \max_{\nu^*} J_{r_2}(\mu, \nu^*) - J_{r_2}(\mu, \nu).
    \end{equation}
\end{definition}

Following \citet{ramponi2023imitation, tang2024multi}, recent multi-agent imitation learning works often aggregate these regrets as\footnote{Or similarly $\NashGap_r(\mu, \nu) = \Regret^1_r(\mu, \nu) + \Regret^2_r(\mu, \nu)$.} $\NashGap_r(\mu, \nu) = \max\left\{\Regret^1_r(\mu, \nu), \Regret^2_r(\mu, \nu)\right\}$ such that each pair $(\mu, \nu)$ satisfying $\NashGap_r(\mu, \nu) \leq \epsilon$ is an $\epsilon$-Nash equilibrium \citep{nash1950equilibrium}.

This is appealing for two main reasons. First, when offline methods are used as a bootstrap to further policy optimization \citep{macuglia2025fine, wagenmaker2025posterior} seeking a Nash equilibrium. Second, as a deployment guarantee with a fixed opponent, i.e. how much a player could improve by deviating when the other players are deployed \textit{without adapting to these deviations}.

However, when learning and deploying in the presence of adaptive strategic players, regret notions are \textit{oblivious}, as they ignore how other agents will react to a change of policy \citep{mertikopoulos2026regret}. To better measure this strategic coupling between the agents, we therefore take the perspective of a Stackelberg setting \citep{sessa2020learning,10.5555/3540261.3542236, chen2025active}.

Under this view, a leader starts by committing to a policy $\mu$, based on which a follower decides on its own policy $\nu$. Given $\mu$, we assume that the follower is optimal by choosing $\nu = \nu^*_{\mu}(r_2)$\footnote{Here we assume the best-response of the follower is unique to simplify notations. This is true under entropy-regularization of the follower, assumed later (Assumption~\ref{ass:nu-entropy}).}, i.e. the follower \textit{best-responds} to $\mu$ with respect to its reward $r_2$. The goal of the leader is then to determine a policy $\mu^* \in \argmax_{\tilde\mu} J^{\text{leader}}_r(\tilde\mu)$ maximizing its expected rewards against an optimal follower:
\begin{equation}
    J^{\text{leader}}_r(\mu) \ceq J_{r_1}(\mu, \nu^*_{\mu}(r_2)).
\end{equation}

From this formulation, we define our measure of sub-optimality as follows.
\begin{definition}[Stackelberg sub-optimality]
    Given a leader policy $\mu$, we define its Stackelberg sub-optimality with respect to the reward $r$ as:
    \begin{equation}
        \SubOpt_r(\mu) \ceq \max_{\mu^*} J^{\text{leader}}_r(\mu^*) - J^{\text{leader}}_r(\mu).
    \end{equation}    
\end{definition}
Compared to $\Regret^1_r(\mu, \nu)$, this objective no longer measures the learner's sub-optimality against a fixed $\nu$, but rather how the $\mu$-player can improve when facing an adaptive follower.

Given $K$ episodes, we state the goal of an inverse reinforcement learning algorithm as finding an approximate reward $\hat{r}_2$ such that $\SubOpt_r(\hat\mu)$ is minimal for $\hat\mu \in \argmax_{\tilde{\mu}} J^{\text{leader}}_{(r_1, \hat{r}_2)}(\tilde\mu)$. In other words, an optimal leader for $\hat{r}_2$ should be close to optimal under the true reward $r_2$.

\subsection{Modeling assumptions}

For finite action spaces, $\cR_2$ is in general the entire set of tabular rewards. However, in order to better model the reward function, we make the following popular modeling assumption.

\begin{assumption}[Linear rewards]\label{ass:linear}
    There exists a known feature map $\phi: \cA \times \cB \to \R^d$, $S > 0$ and a parameter $\omega \in \Omega \ceq \left\{\tilde\omega \in \R^d : \norm{\tilde\omega}_2 \leq S\right\}$ for some $d \in \mathbb{Z}_+$ such that:
    \begin{equation}
        r_2(a, b) = \langle \phi(a, b), \omega \rangle, \qquad \forall(a, b) \in \cA \times \cB.
    \end{equation}

    Additionally, we assume the features are normalized, i.e. $\max_{a, b} \norm{\phi(a, b)}_2 \leq 1$.\footnote{This is without loss of generality. Indeed, for any $R > 1$ such that $\max_{a, b} \norm{\phi(a, b)}_2 = R$, one can always equivalently use the normalized features $\tilde{\phi} = \phi/R$ with the inverse temperature $\tilde{\eta}=\eta R$ (see Assumption~\ref{ass:nu-entropy}).}
\end{assumption}

The problem of IRL therefore reduces to estimating the parameter $\omega$ from observed interactions. This precise structure of $r_2$ is interpretable (the feature map is known) and allows our algorithm to optimize the information on the reward function via a Fisher information matrix on its coefficients.

As a consequence, we model our estimate $\hat{r}_2$ as a linear map with features $\phi$ and parameter $\hat{\omega}$, and will use $\SubOpt_{\omega}$ and $\SubOpt_{\hat\omega}$ to denote $\SubOpt_{r}$ and $\SubOpt_{(r_1, \hat{r}_2)}$, respectively.

\begin{remark}
    Any reward can be linearly parametrized with features being the canonical basis of $\R^{|\cA|\cdot|\cB|}$ and coefficients $\omega_{ab} = r_2(a, b)$, therefore satisfying Assumption~\ref{ass:linear} with $d = |\cA|\cdot|\cB|$. In general, $d$ can be much smaller with a better choice of features.
\end{remark}

Compared to the tabular setting, Assumption~\ref{ass:linear} therefore allows to learn at a faster statistical rate, as a single interaction can carry more information on the underlying reward.

Additionally, we assume an entropy regularized follower. This common assumption \citep{fu2017learning, geist2019theory} is motivated by \emph{bounded rationality} in human decision-making and policies.%

\begin{assumption}[Bounded rationality of the $\nu$-player]\label{ass:nu-entropy}
    The $\nu$-player is entropy-regularized with inverse temperature parameter $\eta > 0$, i.e. $\nu^*_{\mu}(\tilde r) = \argmax_{\tilde\nu} J_{\tilde r}(\mu, \tilde\nu) + \eta^{-1}\cH(\tilde\nu)$.
\end{assumption}

To make the setting learnable, we ignore the trivial case where the $\nu$-player is completely independent of its reward\footnote{See Appendix~\ref{appx:notations:features} how to make that explicit in terms of the features.}.
\begin{assumption}[Non-degeneracy]\label{ass:non-degeneracy}
    There exists $\mu \in \Delta(\cA)$ and two linear rewards $r_2, \tilde r_2 \in \cR_2$ with feature map $\phi$ such that $\nu^*_\mu(r_2) \neq \nu^*_\mu(\tilde r_2)$.
\end{assumption}

For notational convenience, we also define the bound on the $\mu$-player rewards as follows. Note that $R_{\text{max}}$ is always well-defined since the action spaces are finite.
\begin{definition}
    We denote $R_{\text{max}} \ceq \norm{r_1}_{\infty}$, i.e. $R_{\text{max}} = \max_{a \in \cA, \ b \in \cB} |r_1(a, b)|$.
\end{definition}

In the next section, we introduce the problem of reward identifiability and its geometric implications for learning from a fixed policy pair. Moving to the statistical perspective, we further derive a lower bound on the sub-optimality of any adaptive algorithm in the finite sample regime. Then, in Section~\ref{sec:static-stackelberg}, we derive a principled adaptive algorithm (Algorithm~\ref{alg:stackelberg-static}) for Stackelberg sub-optimality minimization in MA-IRL that matches our lower bound up to poly-logarithmic factors.

\section{On reward identifiability in games}
\label{sec:reward_identifiability}

The goal of inverse reinforcement learning is to leverage demonstrations to extract a reward function inducing similar behaviors, and identifiability measures the ability of recovering such reward. The strongest notion aims to extract the exact reward $\hat r_2 = r_2$, whereas in multi-agent applications it would be enough to preserve strategic behaviors defined from a given learning objective (see Section~\ref{sec:multi-agent-objectives}).

In order to learn reward information, the approach relies on the assumption of some (approximately) optimal players. IRL for the $\nu$-player therefore aims to find a function $\hat{r}_2$ that belongs to the feasible reward set $\cR^{\text{feas}}_{2}(\mu, \nu^*_{\mu}) \ceq \left\{\tilde{r}_2 \in \cR_2: \nu^*_{\mu} = \nu^*_{\mu}(\tilde{r}_2)\right\}$ for any sampled policy pair $(\mu, \nu^*_{\mu})$, where sub-optimality can be measured using different learning objectives.

In the entropy-regularization setting, where the follower is boundedly-rational by trading-off between rewards and entropy, the relation between the best-response and the reward of the $\nu$-player can be made explicit. In fact, a classical result (see e.g. \citet{geist2019theory}) is the following:
\begin{lemma}\label{lem:softmax-br}
    Under Assumption~\ref{ass:nu-entropy}, the unique best-response $\nu^*_{\mu}$ to the policy $\mu$ is such that:
    \begin{equation}\label{eq:softmax-static-star}
        \nu^*_{\mu}(b) = \frac{\exp\left(\eta \sum_{a \in \cA} \mu(a)r_2(a, b)\right)}{\sum_{b' \in \cB} \exp\left(\eta \sum_{a \in \cA} \mu(a)r_2(a, b')\right)}, \qquad \forall b \in \cB.
    \end{equation}
\end{lemma}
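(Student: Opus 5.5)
The plan is to reduce the follower's problem to a strictly concave maximization over the simplex $\Delta(\cB)$ and solve it with the Gibbs variational principle. Fix $\mu \in \Delta(\cA)$ and define the expected payoff vector $q_\mu(b) \ceq \sum_{a \in \cA} \mu(a) r_2(a,b)$ for $b \in \cB$. By bilinearity of $J_{r_2}$ we get $J_{r_2}(\mu,\tilde\nu) = \sum_b \tilde\nu(b) q_\mu(b)$. Under Assumption~\ref{ass:nu-entropy} the follower therefore solves
\begin{equation}
    \max_{\tilde\nu \in \Delta(\cB)} \; F(\tilde\nu) \ceq \sum_{b} \tilde\nu(b) q_\mu(b) - \eta^{-1}\sum_b \tilde\nu(b)\log\tilde\nu(b).
\end{equation}
Let $\nu^\star$ denote the right-hand side of \eqref{eq:softmax-static-star}, i.e.\ $\nu^\star(b) = e^{\eta q_\mu(b)}/Z$ with $Z \ceq \sum_{b'} e^{\eta q_\mu(b')}$. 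The goal is to show that $\nu^\star$ is the unique maximizer of $F$.

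The key step is an exact identity rather than a KKT argument, which avoids boundary issues on the simplex. Substituting $q_\mu(b) = \eta^{-1}\left(\log\nu^\star(b) + \log Z\right)$ into $F$ gives, for every $\tilde\nu \in \Delta(\cB)$ (with the convention $0\log 0 = 0$),
\begin{equation}
    F(\tilde\nu) = \eta^{-1}\log Z - \eta^{-1}\sum_b \tilde\nu(b)\log\frac{\tilde\nu(b)}{\nu^\star(b)} = \eta^{-1}\log Z - \eta^{-1}\KL(\tilde\nu \,\|\, \nu^\star).
\end{equation}
Here we use $\sum_b \tilde\nu(b) = 1$. Since $\nu^\star$ has full support, the KL term is finite and well defined for all $\tilde\nu$.

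The conclusion then follows from Gibbs' inequality: $\KL(\tilde\nu\,\|\,\nu^\star) \ge 0$, with equality if and only if $\tilde\nu = \nu^\star$. Hence $\max F = \eta^{-1}\log Z$, and this maximum is attained exactly at $\nu^\star$. This gives existence and uniqueness of the best response simultaneously, and justifies the notation $\nu^*_\mu$.

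The only delicate point is the treatment of boundary points $\tilde\nu$ with zero entries. The identity handles them through the $0\log 0$ convention, and Gibbs' inequality holds on the whole simplex, so no separate case is needed. An alternative route is Lagrangian stationarity combined with strict concavity of $F$, which follows from strict concavity of the entropy. That route would require an extra argument ruling out boundary maximizers, for example that the derivative of the entropy blows up as $\tilde\nu(b) \to 0$. This is why I prefer the KL identity.
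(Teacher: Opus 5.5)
Your proof is correct and complete. The identity $F(\tilde\nu) = \eta^{-1}\log Z - \eta^{-1}\KL(\tilde\nu \,\|\, \nu^\star)$ holds on all of $\Delta(\cB)$: zero entries of $\tilde\nu$ contribute nothing because $\log\nu^\star(b)$ is finite. Gibbs' inequality then gives existence, uniqueness and the optimal value $\eta^{-1}\log Z$ at once.

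The paper does not prove this lemma itself. It invokes it as a classical fact and points to \citet{geist2019theory}, where it follows from Legendre--Fenchel duality. There, the convex conjugate over the simplex of $\eta^{-1}$ times the negative entropy is the scaled log-sum-exp $q \mapsto \eta^{-1}\log\sum_{b} e^{\eta q(b)}$. The regularized greedy policy is the gradient of this conjugate, namely $\softmax(\eta q)$.

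The duality route is more general: it applies verbatim to any strongly convex regularizer and ties the best response to the soft value function. In the entropic case, however, computing that conjugate is exactly the Gibbs variational identity you write down.

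Your argument is the elementary, self-contained version, and it yields slightly more. It shows that the follower's regularized suboptimality of any response $\tilde\nu$ equals exactly $\eta^{-1}\KL(\tilde\nu \,\|\, \nu^*_\mu)$. As you note, it also avoids the boundary issues of a KKT/strict-concavity argument.
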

Hence, the average reward $r_2^{\mu}(b) \ceq \inner{\mu}{r_2(\cdot, b)}$ such that $\nu^*_{\mu} = \softmax(\eta \cdot r_2^{\mu})$ can be learned by inverting this multinomial logistic model, but only up to constant shifts \citep{freihaut2026feasible}.

Under Assumption~\ref{ass:linear} of a linear follower reward, this relation can be simplified. Specifically, letting $\psi_\mu(b) = \sum_{a \in \cA} \phi(a, b) \mu(a)$ be the average feature vector of the $\nu$-player in the environment induced by $\mu$, \citet{pmlr-v267-liao25i} notes that $\omega$ is solution to the following linear system:

\begin{equation}\label{eq:lin-sys-linear}
    \underbrace{\begin{bmatrix}
        \psi_\mu(b_2)^\top - \psi_\mu(b_1)^\top\\
        \vdots \\
        \psi_\mu(b_{|\cB|})^\top - \psi_\mu(b_1)^\top
    \end{bmatrix}}_{=: \Phi(\mu) \in \R^{(|\cB| - 1) \times d}} \omega = \eta^{-1} \begin{bmatrix}
        \log\left(\nu^*_{\mu}(b_2) / \nu^*_{\mu}(b_1)\right)\\
        \vdots \\
        \log\left(\nu^*_{\mu}(b_{|\cB|}) / \nu^*_{\mu}\left(b_1\right)\right)
    \end{bmatrix}.
\end{equation}

From this relation, we can study different notions of identifiability for the reward function.

\noindent\textbf{Recovering the exact reward function.} As noticed by \citet{pmlr-v267-liao25i, rolland2022identifiability, freihaut2026feasible}, the linear system in Equation~\eqref{eq:lin-sys-linear} is invertible if and only if $\rank(\Phi(\mu)) = d$. From basic linear algebra, one can already conclude that the reward cannot be recovered if $|\cB|-1 < d$. Intuitively, this holds because there always exists a non-zero $\omega_0 \in \mathbb{R}^d$ such that $\psi_{\mu}(b)^\top \omega_0$ is constant over all $b$, hence the constant shift ambiguity is not removed by the linear structure of Assumption~\ref{ass:linear}.

When the rank condition holds, previous work such as \citet{pmlr-v267-liao25i} learn the ordinary least squares solution of the system, estimating the target vector from samples. The final error bound on $\norm{\omega - \hat \omega}_2$ relies on concentration results scaling with the inverse minimum probability of actions, of order $|\cB|e^{2\eta S}$ in the worst-case. Crucially, this bound is in the standard Euclidean geometry and not weighted by the directions spanned by the features. If the rank condition is not satisfied, such a bound might never vanish and the standard workaround is to study convergence in Hausdorff distance between the true and recovered feasible sets \citep{metelli2021provably, pmlr-v267-liao25i, freihaut2026feasible, barbara2026optimal}.

In this work, we weaken this goal of exact identifiability and focus on learning a reward $\hat r_2$ (approximately) preserving the strategic behaviors of the agents\footnote{On the contrary, previous algorithms with Hausdorff distance guarantees focus on defining a set that for all plausible rewards \emph{contains} a good approximation of it, and inversely.}.

\noindent\textbf{Strategic identifiability of the reward.} Toward this goal, we focus on learning a reward inducing (approximately) the same best-response followers as the true reward. Indeed, as we assume the $\nu$-player to best-respond during training, we also assume this capability at deployment, and therefore take the Stackelberg sub-optimality $\SubOpt_r$ instead of $\Regret^1_r$ as the most natural strategic measure for the learned policy.

Learning from a fixed policy pair, exact reward identifiability is impossible when the linear system of Equation~\eqref{eq:lin-sys-linear} is not invertible. We extend below this observation to strategic identifiability, where no choice of $\mu$ allows to arbitrarily minimize sub-optimality. We report the proof in Appendix~\ref{appx:hardness}.

\begin{restatable}{lemma}{lemRewardIdSubopt}\label{lem:reward-id-subopti}
    Fix $\eta > 0$. There exists a set $(r_1, \cA, \cB, \phi)$ such that for any reference policy $\mu_{\text{ref}}$ and any number of samples $K$, no algorithm only sampling from $(\mu_{\text{ref}}, \nu^*_{\mu_{\text{ref}}})$ can guarantee that $\SubOpt_{r}(\hat\mu) \leq c_{\eta}$ with probability greater than $1/2$. $c_{\eta} > 0$ only depends on $\eta$.
\end{restatable}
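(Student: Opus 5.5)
The argument is a two-point (Le Cam–style) indistinguishability construction. We build an instance $(r_1,\cA,\cB,\phi)$ and two parameters $\omega,\omega'\in\Omega$ with the following properties. First, for every reference policy $\mu_{\text{ref}}$, the best responses coincide, $\nu^*_{\mu_{\text{ref}}}(\omega)=\nu^*_{\mu_{\text{ref}}}(\omega')$. Second, the optimal leader policies under $\omega$ and under $\omega'$ are far apart, so that no single $\hat\mu$ has Stackelberg sub-optimality below $c_\eta$ under both. Since $\mu_{\text{ref}}$ is arbitrary and the instance must not depend on it, the cleanest route is to make the data distribution independent of the parameter in a way that is uniform in $\mu_{\text{ref}}$. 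The case we can control exactly is $|\cB|=2$ with $\Phi(\mu)$ of rank at most one, combined with a setting in which the ambiguity is genuinely strategic.

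For a concrete choice, take $\cA=\{a_1,a_2\}$ and $\cB=\{b_1,b_2\}$, and let the follower's relative preference be given by Lemma~\ref{lem:softmax-br}: $\nu^*_\mu(b_2)=\sigma(\eta\,\langle\psi_\mu(b_2)-\psi_\mu(b_1),\omega\rangle)$. If we require equal outputs for all $\mu_{\text{ref}}$, the follower's responses coincide at every $\mu$, which makes both hypotheses strategically identical, so that requirement is too strong. Instead we exploit the fact that the instance is fixed but the two hypotheses may depend on $\mu_{\text{ref}}$; the statement's quantifier order ("there exists a set \dots\ such that for any $\mu_{\text{ref}}$") allows this. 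Set $d=2$ and $\phi(a_i,b_2)-\phi(a_i,b_1)=e_i$, so that $\Phi(\mu)=\mu^\top$ (a $1\times 2$ row), and restrict $\|\omega\|\le S$. Given $\mu_{\text{ref}}=(p,1-p)$, choose $\omega=(s,-s')$ and $\omega'=-\omega$ in the kernel direction, so that $p\,\omega_1+(1-p)\,\omega_2=0$. Then the follower is uniform at $\mu_{\text{ref}}$ under both hypotheses, and the sample distributions are identical for every $K$. The follower's utility difference $g_\omega(q)=q\,\omega_1+(1-q)\,\omega_2$ is linear in the leader's mixing weight $q$, vanishes at $q=p$, and has opposite slopes under $\omega$ and $\omega'$. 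Hence $\omega$ makes $b_2$ preferred for $q>p$ and $\omega'$ makes it preferred for $q<p$. Scale $\omega$ so that its norm is at least a constant of order $1/\eta$; this keeps the softmax responses at the endpoints $q\in\{0,1\}$ bounded away from $1/2$ uniformly in $\eta$-scaled units, which is where the dependence of $c_\eta$ on $\eta$ enters. The degenerate endpoints $p\in\{0,1\}$ are handled the same way, by taking $\omega$ along the coordinate not observed.

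Next we choose $r_1$ so that the leader only cares about inducing $b_2$: $r_1(a,b_2)=1$ and $r_1(a,b_1)=0$. Then $J^{\text{leader}}_\omega(q)=\sigma(\eta g_\omega(q))$. Under $\omega$ the optimum is $q=1$ (or whichever endpoint maximizes $g_\omega$), and under $\omega'$ the optimum is at the opposite side of $p$. The sum $\SubOpt_\omega(\hat\mu)+\SubOpt_{\omega'}(\hat\mu)$ is then bounded below uniformly over $\hat q$. Using $\sigma(x)+\sigma(-x)=1$, the two leader values at $\hat q$ sum to exactly $1$, whereas the two optima sum to $\sigma(\eta a)+\sigma(\eta b)$, where $a,b\ge0$ are the maximal follower advantages on each side. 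This gives a gap of at least $\sigma(\eta m)-1/2=:2c_\eta$, where $m=\min(a,b)$. The main obstacle is making $m$ bounded below independently of $p$: when $p$ is near $0$ or $1$, one side of the interval is short. To fix this, we avoid pinning $\omega$ to a single slope and instead choose $\omega$ with $\|\omega\|=S$ placed so that both endpoint values $|\omega_1|$ and $|\omega_2|$ are at least $S/2$-ish. For $p$ near $1$, the kernel condition forces $\omega_1$ to be small, so we switch the roles of the action labels, or use $d=3$ with a third action whose feature is orthogonal, so that $\Phi(\mu_{\text{ref}})$ can never reach full rank and the extra coordinate always gives a free direction with constant leverage. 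I expect this uniformity over $\mu_{\text{ref}}$ to be the delicate step, and I would first try the three-action variant.

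Finally, since the laws of the $K$ samples coincide under $\omega$ and $\omega'$, any algorithm outputs the same distribution over $\hat\mu$ under both hypotheses. Because $\SubOpt_\omega(\hat\mu)+\SubOpt_{\omega'}(\hat\mu)\ge 2c_\eta$ pointwise, for each $\hat\mu$ at least one of the two sub-optimalities exceeds $c_\eta$. It follows that $\Prob_\omega(\SubOpt>c_\eta)+\Prob_{\omega'}(\SubOpt>c_\eta)\ge 1$, so under one of the two hypotheses the probability of achieving $\SubOpt_r(\hat\mu)\le c_\eta$ is at most $1/2$, which is the claim.
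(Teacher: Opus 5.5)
\paragraph{Verdict: right construction, but the uniformity step is missing.}
Your construction is essentially the paper's, and the overall route matches it. You use a $2\times 2$ game and a leader paid only for inducing one follower action. Your two follower parameters give utility differences that vanish at $q=p_{\text{ref}}$ with opposite slopes, so the follower is uniform at $\mu_{\text{ref}}$ under both and the sample laws coincide for every $K$. The identity $\sigma(x)+\sigma(-x)=1$ makes the two leader values at any $\hat q$ sum to one. Using difference features $e_i\in\R^2$ instead of the paper's tabular features in $\R^4$ is immaterial.

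The genuine gap is the step you flag as delicate: you never get a constant that is uniform in $\mu_{\text{ref}}$. The bound you write, $\sigma(\eta m)-\tfrac12$ with $m=\min(a,b)$, really does tend to $0$ as $p\to 0$ or $1$. It is exactly $0$ at $p\in\{0,1\}$, so those endpoints are not ``handled the same way.'' Your $c_\eta$ therefore depends on $p$. None of the proposed repairs closes this:
\begin{itemize}
\item Requiring both $|\omega_1|,|\omega_2|\gtrsim S/2$ is incompatible with $p\,\omega_1+(1-p)\,\omega_2=0$ when $p$ is near $0$ or $1$.
\item Relabeling the actions only moves the problem to the other end, by symmetry.
\item The three-action variant is never carried out.
\end{itemize}

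\paragraph{The missing observation.}
You do not need both sides of $p$ to be long. The two excesses $\sigma(\eta a)-\tfrac12$ and $\sigma(\eta b)-\tfrac12$ are each nonnegative, so the sum of sub-optimalities is at least $\sigma(\eta\max(a,b))-\tfrac12$. Take $\omega=t(1-p,-p)$ with $t$ a fixed constant rather than scaled with $1/\eta$. With $t=1$, $\norm{\omega}_2\le 1$ for every $p$, so you need only $S\ge 1$, the same kind of implicit requirement the paper makes. Then $g_\omega(q)=t(q-p)$, $a=t(1-p)$, $b=tp$, and $\max(a,b)\ge t/2$ uniformly, endpoints included. The paper gets a slightly better constant from concavity of $x\mapsto\sigma(\eta x)$ on $[0,\infty)$, which gives $\sigma(\eta(1-p))+\sigma(\eta p)-1\ge\sigma(\eta)-\tfrac12$ for all $p\in[0,1]$.

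\paragraph{A strictness slip.}
A pointwise sum $\ge 2c_\eta$ only guarantees one term $\ge c_\eta$. The events $\{\SubOpt_\omega\le c_\eta\}$ and $\{\SubOpt_{\omega'}\le c_\eta\}$ then need not be disjoint, and the ``probability at most $1/2$'' conclusion fails. Choose $c_\eta$ strictly below half the gap, as the paper does with $c_\eta=\tfrac13(\sigma(\eta)-\tfrac12)$.
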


The above lemma is a geometric artifact built on the same degeneracy as exact non-identifiability. This degeneracy makes it impossible to learn information that could help distinguish between two, strategically unequal coefficients, even in the infinite-data regime. Observations from a passive leader can therefore be insufficient to recover strategy-preserving rewards.

This motivates the need for active interaction strategies where the learner deliberately selects different leader policies to probe the follower in useful directions. In particular, these adaptively chosen directions should be both informative and strategically important for the reward coefficients. Under this framing, learning rewards remains constrained by the following statistical lower bound.

\begin{restatable}[Minimax lower bound]{theorem}{thmLower}\label{thm:lower}
    Let $d \geq 32$, $\epsilon = \frac{1}{2\eta}\sqrt{d/K} \leq S$ and $\omega^{(i)} \ceq \varepsilon e_i$ for $i \in [d]$. There exists a game such that for every algorithm producing an output $\hat\mu$ after $K \geq 4d$ interactions,
    \begin{equation}
        \max_{i \in [d]}\ \E^{(i)}\left[\SubOpt_{\omega^{(i)}}(\hat\mu)\right] \ \geq\ \frac{1}{44}\, R_{\max}\sqrt{\frac{d}{K}},
    \end{equation}
    where $\E^{(i)}$ is for the expectation under the trajectory law $\mathbb{P}^{(i)}$ induced by $\omega^{(i)}$.
\end{restatable}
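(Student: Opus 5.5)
We will reduce the problem to a $d$-ary hypothesis test and then apply a Fano-type argument. Three ingredients are needed: a game in which the leader's optimal commitment depends on which hypothesis $\omega^{(i)}$ holds, a check that choosing the wrong commitment costs order $R_{\max}\varepsilon\eta$, and a bound on the total information available in $K$ rounds.

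\textbf{Construction.} Take $\cA = [d]$ (or $[d]$ together with a few auxiliary actions) and $\cB = \{b_0, b_1\}$. Use features $\phi(a, b_1) = e_a$ and $\phi(a, b_0) = 0$. Then for any leader policy $\mu$,
\[
\nu^*_\mu(b_1) = \sigma\bigl(\eta\,\langle \mu, \omega\rangle\bigr),
\]
where $\sigma$ is the logistic function. Under $\omega^{(i)}$ this equals $\sigma(\eta\varepsilon\mu(i))$. Set the leader reward to $r_1(a, b_1) = R_{\max}$ and $r_1(a, b_0) = 0$, possibly with an additional term that penalises mixing, so that $J^{\text{leader}}_{\omega^{(i)}}(\mu) = R_{\max}\,\sigma(\eta\varepsilon\mu(i))$. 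Under hypothesis $i$ the optimum is $\mu = e_i$. The gap is
\[
R_{\max}\bigl(\sigma(\eta\varepsilon) - \sigma(\eta\varepsilon\mu(i))\bigr) \ge c\, R_{\max}\,\eta\varepsilon\,(1 - \mu(i)),
\]
since $\eta\varepsilon = \tfrac12\sqrt{d/K} \le \tfrac14$ by $K \ge 4d$, so $\sigma$ is nearly linear with slope about $1/4$. Plugging in $\varepsilon$ gives a gap of order $R_{\max}\sqrt{d/K}\,(1 - \hat\mu(i))$. The statement therefore follows once we show $\max_i \E^{(i)}[1 - \hat\mu(i)] \ge$ a constant.

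\textbf{Information bound.} Add a reference hypothesis $\omega^{(0)} = 0$, under which every response is a fair coin. Each round $t$ produces one Bernoulli observation with parameter $\sigma(\eta\varepsilon\mu_t(i))$. The per-round KL divergence from $\mathbb{P}^{(0)}$ is at most $\tfrac{1}{2}\cdot\eta^2\varepsilon^2\mu_t(i)^2 \cdot c'$, with the constant coming from the bound $\mathrm{KL}(\mathrm{Ber}(1/2+x)\Vert\mathrm{Ber}(1/2)) \le 4x^2$ and $|\sigma(z) - 1/2| \le z/4$. By the chain rule for adaptively collected data,
\[
\KL\bigl(\mathbb{P}^{(0)}, \mathbb{P}^{(i)}\bigr) \le c\,\eta^2\varepsilon^2\, \E^{(0)}\Bigl[\sum_t \mu_t(i)^2\Bigr].
\]
Because $\sum_i \mu_t(i)^2 \le \sum_i \mu_t(i) = 1$, summing over $i$ gives $\sum_i \KL \le c\,\eta^2\varepsilon^2 K = c\,d/4$. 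Hence the average KL divergence is a constant, at most $c/4$.

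\textbf{Conclusion.} Since $\sum_i \hat\mu(i) = 1$, we have $\E^{(0)}[\hat\mu(i)] \le 2/d$ for at least half of the indices $i$. For such an $i$, Pinsker's inequality (or Bretagnolle--Huber) together with the averaged KL bound, via Jensen over indices, yields $\E^{(i)}[\hat\mu(i)] \le 2/d + \sqrt{\KL/2}$. Averaged over $i$ this is at most $2/d + \sqrt{c/8}$, which is bounded away from $1$ once $d \ge 32$. This gives $\max_i \E^{(i)}[1 - \hat\mu(i)] \ge$ a constant, and therefore the $\Omega(R_{\max}\sqrt{d/K})$ lower bound.

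\textbf{Main obstacle.} The hard part is tracking constants so that the final constant is $1/44$, with $d \ge 32$ and $K \ge 4d$ entering exactly there. A secondary difficulty is making sure the leader cannot gain information by playing very peaked $\mu_t$. This is handled by the $\mu_t(i)^2 \le \mu_t(i)$ summation above, because each round carries only $O(\eta^2\varepsilon^2)$ information in total across all coordinates.
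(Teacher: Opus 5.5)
Your proposal is correct and is essentially the paper's argument. It uses the same hard instance (Definition~\ref{def:hard-instance}, up to relabeling the follower actions) and compares each $\mathbb{P}^{(i)}$ with the null $\omega^{(0)}=0$ through the KL chain rule and the budget $\sum_i \mu_t(i)^2 \le 1$, followed by Pinsker and the logistic slope on $[0,\tfrac14]$. The only difference is that the paper picks a single index $i^\star$ by pigeonhole and thresholds on $\{\hat\mu_{i^\star} < \tfrac12\}$, whereas you average over $i$ and use the linear gap in $1-\hat\mu(i)$.

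A few small corrections. Your argument is Le Cam against a common null, not Fano. Averaging the TV bound over all $i$ directly gives $\frac1d\sum_i \E^{(i)}[\hat\mu(i)] \le \frac1d + \sqrt{\tfrac{1}{2d}\sum_i \KL(\mathbb{P}^{(0)}\|\mathbb{P}^{(i)})}$, so your ``half of the indices'' step is unnecessary. The chain rule under $\mathbb{P}^{(0)}$ needs $\KL(\mathrm{Ber}(\tfrac12)\,\|\,\mathrm{Ber}(\sigma(z)))$, not the direction you cite, but it satisfies the same $O(z^2)$ bound (e.g.\ $\eta^2\varepsilon^2\mu_t(i)^2/8$ by Hoeffding's lemma, as in Lemma~\ref{lem:kl-softmax}). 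The resulting constant is about $0.1$, which comfortably exceeds $1/44$, so no delicate constant tracking is needed.
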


In the next section, we show how to achieve such a learning rate up to poly-logarithmic factors.

\section{Active Stackelberg IRL in normal-form games}\label{sec:static-stackelberg}

In many applications, indeed, one can \emph{actively} interact with the environment over time. Examples include cybersecurity settings where defenders adapt to strategic attackers and online platforms interacting with self-interested users. In those cases, the learner is not restricted to passively observing a fixed behavior, but can deliberately choose policies inducing diverse environments for the $\nu$-player.

This allows the learner to elicit informative responses from the follower, related in spirit to the strategy of \citet{rolland2022identifiability}, who modify the dynamics of a single-agent setting to make the reward exactly identifiable from expert demonstrations. Similarly, in our setting, varying the leader policies helps reduce estimation uncertainty in strategically relevant directions of the follower’s reward function.

Specifically, we consider a leader who interacts repeatedly with a (boundedly-)rational follower, for a total of $N$ rounds. At each round $\tau \in [N]$, the leader commits to a (potentially mixed) strategy $\mu_\tau \in \Delta(\cA)$, and observes a sample $b_\tau \sim \nu_{\tau}$ from the follower's unique best-response $\nu_{\tau} \ceq \nu^*_{\mu_\tau}(r_2)$. After the $N$ iterations, the goal is to return an estimate $\hat{\omega}$ such as to minimize $\SubOpt_{\omega}(\hat{\mu})$ where $\hat{\mu}$ is optimal under the estimated reward $\hat{r}_2(\cdot, \cdot) = \langle \phi(\cdot, \cdot), \hat{\omega} \rangle$, i.e. $\SubOpt_{\hat\omega}(\hat{\mu}) = 0$.

Denoting the average features $\psi_{\mu_\tau}(b) = \sum_a \mu_\tau(a) \phi(a, b)$, Lemma~\ref{lem:softmax-br} gives for all $b \in \cB$,
\begin{equation}\label{eq:softmax-static}
    \nu_\tau(b) = \frac{1}{Z_{\mu_\tau}(\omega)} \exp\left(\eta \langle \psi_{\mu_\tau}(b), \omega \rangle\right), \qquad Z_{\mu_\tau}(\tilde\omega) \ceq \sum_{b' \in \cB} \exp\left(\eta \langle \psi_{\mu_\tau}(b'), \tilde\omega \rangle\right),
\end{equation}
from which one can derive likelihood values for the $\omega$. Our approach is based on this idea. 

\subsection{Combining optimal design and maximum likelihood estimation}

For any history of $n$ pairs $\cD = \{(b_\tau, \mu_\tau)\}_{\tau=1}^n$, let us denote the log-likelihood of $\tilde\omega \in \Omega$ as
\begin{equation}
    \cL(\tilde{\omega} | \cD) \ceq \sum_{\tau=1}^n l(\tilde\omega | b_\tau, \mu_\tau), \qquad l(\tilde{\omega} | b_\tau, \mu_\tau) \ceq \log \Prob(b_\tau | \omega = \tilde{\omega}, \mu_\tau),
\end{equation}
with the instantaneous log-likelihood $l(\tilde{\omega} | b_\tau, \mu_\tau) = \eta\left\langle\psi_{\mu_{\tau}}(b_\tau), \tilde{\omega}\right\rangle - \log Z_{\mu_\tau}(\tilde\omega)$. With the exponential family property of Equation~\eqref{eq:softmax-static}, the Fisher information matrix of $\mu_\tau$ also simplifies\footnote{$\nabla^2 l(\tilde\omega | b, \mu_\tau)$ is constant over all $b \in \cB$, we note that term $\nabla^2 l(\tilde\omega | \mu_\tau)$.}:
\begin{equation}
    \cI_{\mu_\tau}(\tilde\omega) \ceq -\nabla^2 l(\tilde\omega | \mu_\tau) = \eta^2\Cov_{b \sim \mathbb{P}(\cdot | \tilde{\omega}, \mu_k)}\left[\psi_{\mu_k}(b)\right],
\end{equation}
such that $-\nabla^2 \cL(\tilde{\omega} | \cD) = \sum_{\tau=1}^{n}\cI_{\mu_\tau}(\tilde\omega)$ is the cumulated Fisher information of $\cD$.

Following these remarks, we derive an algorithm reducing uncertainty in the Fisher information geometry, by using this relation on the sub-optimality, whose proof is deferred to Appendix~\ref{appx:transfer-lem-proof}.
\begin{lemma}[Reward transfer lemma]\label{lem:transfer-lemma}
    Let $\mu^*$ and $\hat\mu$ be optimal leaders under $\omega$ and $\hat\omega$. Then,
    \begin{equation}
        \SubOpt_{\omega}(\hat\mu) \leq 2R_{\text{max}} \cdot \max_{\tilde\mu \in \{\mu^*, \hat\mu\}} \chi\left(\eta \cdot d_{\tilde\mu}(\omega, \hat\omega)\right) \cdot \norm*{\omega - \hat\omega}_{\cI_{\tilde\mu}(\hat\omega)},
    \end{equation}
    with $\chi(x) \ceq 2x^{-1}\left(e^{x/2} - 1\right)$ for $x > 0$ and $d_{\tilde\mu}(\omega, \hat\omega) = \max_{b, d} \inner{\psi_{\tilde\mu}(b) - \psi_{\tilde\mu}(d)}{\omega - \hat\omega}$.
\end{lemma}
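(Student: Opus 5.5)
The plan is to reduce the sub-optimality gap to the distance between the follower responses induced by $\omega$ and by $\hat\omega$, and then control that distance along the path of softmax responses from $\hat\omega$ to $\omega$.

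\textbf{Step 1: decomposition.} Write $\Delta \ceq \omega - \hat\omega$, and let $J^{\text{leader}}_{\omega}$, $J^{\text{leader}}_{\hat\omega}$ denote the leader objectives under the true and estimated follower rewards. I would split
\begin{equation}
\SubOpt_\omega(\hat\mu) = \bigl[J^{\text{leader}}_{\omega}(\mu^*) - J^{\text{leader}}_{\hat\omega}(\mu^*)\bigr] + \bigl[J^{\text{leader}}_{\hat\omega}(\mu^*) - J^{\text{leader}}_{\hat\omega}(\hat\mu)\bigr] + \bigl[J^{\text{leader}}_{\hat\omega}(\hat\mu) - J^{\text{leader}}_{\omega}(\hat\mu)\bigr].
\end{equation}
The middle bracket is nonpositive because $\hat\mu$ is optimal under $\hat\omega$.

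For a fixed $\tilde\mu$, the two objectives differ only through the follower's response. Setting $r_1^{\tilde\mu}(b) = \sum_a \tilde\mu(a) r_1(a,b)$, we get
\begin{equation}
\bigl|J^{\text{leader}}_{\omega}(\tilde\mu) - J^{\text{leader}}_{\hat\omega}(\tilde\mu)\bigr| = \Bigl|\sum_b \bigl(\nu^*_{\tilde\mu}(\omega)(b) - \nu^*_{\tilde\mu}(\hat\omega)(b)\bigr) r_1^{\tilde\mu}(b)\Bigr| \leq R_{\max}\,\norm{\nu^*_{\tilde\mu}(\omega) - \nu^*_{\tilde\mu}(\hat\omega)}_1 .
\end{equation}
Bounding the sum of the two outer brackets by twice the larger one gives the factor $2R_{\max}\max_{\tilde\mu\in\{\mu^*,\hat\mu\}}$. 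It therefore remains to show
\begin{equation}
\norm{\nu^*_{\tilde\mu}(\omega) - \nu^*_{\tilde\mu}(\hat\omega)}_1 \leq \chi(\eta d_{\tilde\mu}(\omega,\hat\omega))\,\norm{\Delta}_{\cI_{\tilde\mu}(\hat\omega)}.
\end{equation}

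\textbf{Step 2: path argument.} Put $\omega_t = \hat\omega + t\Delta$ and $\nu_t = \nu^*_{\tilde\mu}(\omega_t)$, which is a softmax by Equation~\eqref{eq:softmax-static}. Define $X(b) = \eta\langle\psi_{\tilde\mu}(b),\Delta\rangle$. The exponential-family form gives
\begin{equation}
\tfrac{d}{dt}\nu_t(b) = \nu_t(b)\bigl(X(b) - \E_{\nu_t}X\bigr).
\end{equation}
Integrating over $t\in[0,1]$ and applying Cauchy--Schwarz yields
\begin{equation}
\norm{\nu_1-\nu_0}_1 \leq \int_0^1 \E_{\nu_t}\bigl|X-\E_{\nu_t}X\bigr|\,dt \leq \int_0^1 \sqrt{\Var_{\nu_t}(X)}\,dt .
\end{equation}
At $t=0$ we have $\Var_{\nu_0}(X) = \Delta^\top \cI_{\tilde\mu}(\hat\omega)\Delta = \norm{\Delta}^2_{\cI_{\tilde\mu}(\hat\omega)}$, using the covariance form of the Fisher information.

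\textbf{Step 3: change of measure (the main obstacle).} The key step is comparing $\Var_{\nu_t}$ to $\Var_{\nu_0}$ with exactly the constant that produces $\chi$. The likelihood ratio is
\begin{equation}
\frac{\nu_t(b)}{\nu_0(b)} = \frac{e^{tX(b)}}{\E_{\nu_0}e^{tX}}.
\end{equation}
Its numerator is at most $e^{t\max X}$ and its denominator is at least $e^{t\min X}$. The range $\max X - \min X$ equals exactly $\eta\, d_{\tilde\mu}(\omega,\hat\omega) =: D$, so the ratio is at most $e^{tD}$.

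Since the variance minimizes the mean squared deviation over all centers, I would use the $\nu_0$-mean as a (suboptimal) center:
\begin{equation}
\Var_{\nu_t}(X) \leq \E_{\nu_t}\bigl(X-\E_{\nu_0}X\bigr)^2 \leq e^{tD}\,\Var_{\nu_0}(X).
\end{equation}
Substituting into Step 2 gives
\begin{equation}
\norm{\nu_1-\nu_0}_1 \leq \norm{\Delta}_{\cI_{\tilde\mu}(\hat\omega)}\int_0^1 e^{tD/2}\,dt = \chi(D)\,\norm{\Delta}_{\cI_{\tilde\mu}(\hat\omega)},
\end{equation}
since $\int_0^1 e^{tD/2}\,dt = \tfrac{2}{D}(e^{D/2}-1)$.

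The degenerate case $D = 0$ is handled by continuity: there $X$ is constant, so $\nu_1 = \nu_0$ and both sides vanish. Combining Steps 1 to 3 concludes the proof.
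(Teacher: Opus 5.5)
Your proof is correct. Steps 1 and 2 match the paper's argument: the same three-term decomposition reduces everything to $2R_{\max}\max_{\tilde\mu}\|\nu^*_{\tilde\mu}(\omega)-\nu^*_{\tilde\mu}(\hat\omega)\|_1$, and the same path derivative with Jensen/Cauchy--Schwarz forms the first half of Lemma~\ref{lem:sc-1}.

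The genuine difference is Step 3. The paper gets $\cI_{\tilde\mu}(\omega_t)\preceq e^{tD}\cI_{\tilde\mu}(\hat\omega)$ from Lemma~\ref{lem:sc-fisher}, which rests on a self-concordance property in the span semi-norm. That property is a third-derivative bound on the log-partition in logit space (Lemma~\ref{lem:multi-concordance}), integrated as a Gr\"onwall-type estimate on $\log g(s)$ (Lemma~\ref{lem:instantaneous-control}) and pulled back through the feature lift. You instead bound the exponential-tilting likelihood ratio $\nu_t/\nu_0\le e^{tD}$ directly, and use the fact that the variance minimizes mean squared deviation over all centers.

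Your route gives the same constant, hence the same $\chi$, with a shorter and fully elementary argument. Because the ratio bound is uniform over test functions and also two-sided ($\nu_t/\nu_0\ge e^{-tD}$), it in fact re-proves the full Loewner sandwich of Lemma~\ref{lem:sc-fisher}, not just the variance along $\Delta$.

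The paper's route buys a standard self-concordant-like framework. It reuses that framework for the integral and trace comparisons (Lemmas~\ref{lem:sc-integral}, \ref{lem:sc-gamma}, \ref{lem:sc-2}) in the main theorem, and its differential template extends to losses without an exponential-family structure. Within this paper, however, those uses only need the Loewner comparison, which your change of measure already supplies.
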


Essentially, Lemma~\ref{lem:transfer-lemma} shows how the sub-optimality aligns with an estimation problem in the Fisher information geometry. This alignment depends on $d_{\tilde\mu}(\omega, \hat\omega)$, which represents the span semi-norm of the difference between logits induced by $\omega$ and $\hat\omega$.

We optimize this upper bound by combining three ingredients, presented in the proof outline below.

\makeatletter
\newtcolorbox{proofroadmap}{
    enhanced,
    breakable,
    frame hidden,
    interior hidden,
    boxrule=0pt,
    borderline west={0.5pt}{0pt}{black!50},
    left=7pt,
    right=1pt,
    top=1pt,
    bottom=1pt,
    before skip=0.55\baselineskip,
    after skip=0.65\baselineskip,
    parbox=false,
    before upper={\def\@mpfn{footnote}\def\thempfn{\thefootnote}},
}
\makeatother
\makesavenoteenv{proofroadmap}

\begin{proofroadmap}
\textcolor{roadmapgray}{\bfseries Proof roadmap}

Analyzing the second term, for any $\lambda > 0$ and $\cD \in (\cB \times \Delta(\cA))^n$, we change the measure as:
\begin{equation}
    \norm*{\omega - \hat\omega}^2_{\cI_{\tilde\mu}(\hat\omega)}
    \leq
    {\color{EstColor}
    \underbracket[0.5pt][3pt]{\norm*{\omega - \hat\omega}^2_{G^{\lambda}(\hat\omega)}}_{
        \text{estimation}
    }}
    \cdot
    {\color{CovColor}
    \underbracket[0.5pt][3pt]{\tr\left(\left[G^{\lambda}(\hat\omega)\right]^{-1} \cI_{\tilde\mu}(\hat\omega)\right)}_{
        \text{Fisher uncertainty}
    }},
\end{equation}
for the regularized $G^{\lambda}(\tilde\omega | \cD) \ceq \lambda I - \nabla^2\cL(\tilde\omega | \cD)$.

We optimize these terms with Maximum Likelihood Estimators (MLE) and optimal design.

\textup{(i)} \textit{Estimation.}\hspace{4pt} Specifically, for $\left\{\hat{\omega}_\tau \ceq \argmax_{\tilde{\omega} \in \Omega} \cL(\tilde{\omega} | \cD_{1:\tau})\right\}$\footnotemark the constrained MLE sequence, the {\color{EstColor} estimation} is controlled using the following confidence result from \citet{lee2024unified}.
\begin{theorem}[Informal version of Theorem~\ref{thm:CS}]
    With probability at least $1-\delta$,
    \begin{equation}
        \norm*{\omega - \hat\omega_\tau}^2_{\Gamma^{\lambda}_\tau(\hat\omega_\tau)} \leq \mathcal{O}\left(\lambda S^2 + \log \delta^{-1} + d\log\left(\max\left\{e, d^{-1}\tau\eta S\right\}\right)\right), \quad \forall \tau \in [n],
    \end{equation}
    where we define $\Gamma^{\lambda}_\tau(\tilde\omega) \ceq \int_0^1 (1-s)G^{\lambda}(\tilde\omega + s(\omega - \tilde\omega) | \cD_{1:\tau}) \ud s$ and fix $\delta \in (0, 1)$.
\end{theorem}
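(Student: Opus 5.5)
The plan is to combine a deterministic Taylor-expansion argument, which turns the weighted estimation error into a log-likelihood gap, with a time-uniform martingale bound on that gap. This follows the likelihood-ratio confidence-sequence approach of \citet{lee2024unified}. Throughout, fix $\tau$ and write $\cL_\tau(\cdot)=\cL(\cdot\,|\,\cD_{1:\tau})$.

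\textbf{Step 1 (deterministic reduction).} Taylor's theorem with integral remainder, expanded around $\hat\omega_\tau$ along the segment towards $\omega$, gives
\begin{equation}
\cL_\tau(\omega)=\cL_\tau(\hat\omega_\tau)+\inner{\nabla\cL_\tau(\hat\omega_\tau)}{\omega-\hat\omega_\tau}-\int_0^1(1-s)\,\norm{\omega-\hat\omega_\tau}^2_{-\nabla^2\cL_\tau(\hat\omega_\tau+s(\omega-\hat\omega_\tau))}\ud s .
\end{equation}
The log-likelihood is concave, and $\hat\omega_\tau$ maximizes it over the convex set $\Omega\ni\omega$. The first-order optimality condition therefore gives $\inner{\nabla\cL_\tau(\hat\omega_\tau)}{\omega-\hat\omega_\tau}\le 0$. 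Adding the $\lambda I$ part of $G^\lambda$, whose integral weight is $\int_0^1(1-s)\ud s=1/2$, and using $\norm{\omega-\hat\omega_\tau}_2\le 2S$, I obtain
\begin{equation}
\norm{\omega-\hat\omega_\tau}^2_{\Gamma^\lambda_\tau(\hat\omega_\tau)}\le \cL_\tau(\hat\omega_\tau)-\cL_\tau(\omega)+2\lambda S^2 .
\end{equation}
It then suffices to bound the likelihood gap $\sup_{\theta\in\Omega}\cL_\tau(\theta)-\cL_\tau(\omega)$ uniformly over $\tau$.

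\textbf{Step 2 (martingale via mixture).} Fix a prior $Q$, uniform on $\Omega$. For each fixed $\theta$, the sequence $\exp(\cL_\tau(\theta)-\cL_\tau(\omega))$ is a nonnegative martingale with unit mean under the true law. This holds even with adaptively chosen $\mu_\tau$, because each factor is a likelihood ratio of the conditional law of $b_\tau$ given the past. By Fubini, the mixture $M_\tau=\int\exp(\cL_\tau(\theta)-\cL_\tau(\omega))\ud Q(\theta)$ is also a nonnegative supermartingale with $M_0=1$. Ville's inequality then gives $M_\tau\le\delta^{-1}$ for all $\tau$ simultaneously, with probability at least $1-\delta$. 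The mixture is what makes the bound valid for the data-dependent point $\hat\omega_\tau$.

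\textbf{Step 3 (lower-bounding the mixture near the MLE).} This is the main obstacle: relating $M_\tau$ to the value $\exp(\cL_\tau(\hat\omega_\tau)-\cL_\tau(\omega))$ without losing more than a $d\log$ factor.

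First, the gradient of $l(\cdot\,|\,b,\mu)$ is $\eta(\psi_\mu(b)-\E[\psi_\mu])$, which has norm at most $2\eta$. Hence $\cL_\tau$ is $2\tau\eta$-Lipschitz. On the set $B_\varepsilon\cap\Omega$, where $B_\varepsilon$ is the ball of radius $\varepsilon$ around $\hat\omega_\tau$, this yields $\cL_\tau(\theta)\ge\cL_\tau(\hat\omega_\tau)-2\tau\eta\varepsilon$.

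Second, because $\Omega$ is a ball of radius $S$, the intersection $B_\varepsilon\cap\Omega$ contains a ball of radius $\varepsilon/2$ for $\varepsilon\le S$. So its $Q$-mass is at least $(\varepsilon/(2S))^d$. Combining,
\begin{equation}
\cL_\tau(\hat\omega_\tau)-\cL_\tau(\omega)\le \log\delta^{-1}+2\tau\eta\varepsilon+d\log(2S/\varepsilon).
\end{equation}
Choosing $\varepsilon=\min\{S,\,d/(\tau\eta)\}$ gives the bound $\log\delta^{-1}+O\big(d\log\max\{e,d^{-1}\tau\eta S\}\big)$. The two branches of the minimum produce the $\max\{e,\cdot\}$ in the statement.

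Plugging this into Step 1 yields the claimed time-uniform bound. The constants, and possibly a sharper handling of $\lambda$, would follow the formal Theorem~\ref{thm:CS}.
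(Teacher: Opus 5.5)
Your proposal is correct and follows the paper's proof. The paper uses the same Taylor expansion around $\hat\omega_\tau$ and first-order optimality of the constrained MLE to get $\norm*{\omega-\hat\omega_\tau}^2_{\Gamma^{\lambda}_\tau(\hat\omega_\tau)} \le 2\lambda S^2 + \cL_\tau(\hat\omega_\tau)-\cL_\tau(\omega)$, then bounds the likelihood gap by citing Theorem~3.1 of \citet{lee2024unified} via Lipschitzness of $\cL$; your Steps~2--3 (uniform-prior mixture martingale, Ville's inequality, small-ball Lipschitz lower bound around the MLE) are essentially the argument behind that theorem, and taking $\varepsilon=\min\{2S, d/L_\tau\}$, with $L_\tau$ the Lipschitz constant of $\cL_\tau$, recovers its constant $d\log\max\{e, 2eSL_\tau/d\}$.
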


\textup{(ii)}\enspace\textit{Design.}\hspace{4pt} This holds independently of the policies in $\cD$, that we therefore choose to minimize the {\color{CovColor} Fisher uncertainty}, or equivalently maximize the Fisher coverage and select:
\begin{equation}
    \mu_\tau\in\argmax_{\mu\in\Delta(\cA)} \tr\left(\left[G^{\lambda}(\what_{k-1} | \cD_{1:\tau})\right]^{-1}\,\cI_\mu(\what_{\tau-1})\right),
\end{equation}
i.e. greedily optimize the \emph{information gain} at each step.

\textup{(iii)}\enspace\textit{Self-concordance and warm-up.}\hspace{4pt} We combine everything together by deriving new self-concordance properties for Fisher information matrices and gains expressed in terms of $d_{\tilde\mu}$, which can be of independent interest.

To force self-concordance into a harmless regime, we adapt a pure exploration warm-up strategy by generalizing Kiefer-Wolfowitz results to our setting where the leader only has partial control over the feature coverage. Specifically, we collect $n_{\text{warm}}$ samples from allocation computed offline, and subsequently perform $K \ceq N - n_{\text{warm}}$ adaptive policy selections with the cumulative data:
\begin{equation}
    \cD_k \ceq \cD_{1:n_{\text{warm}}+k} = \cD_{\text{warm}} \cup \{(b^{\text{on}}_{\tau'}, \mu^{\text{on}}_{\tau'})\}_{\tau' = 1}^k, \qquad \forall k \in [K],
\end{equation}
where $\{(b^{\text{on}}_{\tau'}, \mu^{\text{on}}_{\tau'})\}_{\tau' = 1}^k$ are collected online and $\cD_{\text{warm}} \ceq \cD_{1:n_{\text{warm}}}$ is from the warm-up phase.

This warm-up guarantees that all MLE terms $\{\hat\omega^{\text{on}}_{k} \ceq \hat\omega_{n_{\text{warm}}+k}\}_{k\leq K}$ satisfy $\max_{\mu} d_{\mu}(\omega, \hat\omega^{\text{on}}_{k}) \leq 1/\eta$ simultaneously with high probability. This also bounds the first term of Lemma~\ref{lem:transfer-lemma} and concludes the analysis.
\end{proofroadmap}

\footnotetext[8]{We define the truncated history as $\cD_{1:j} \ceq \{(b_\tau, \mu_\tau)\}_{\tau=1}^j, \forall j \in [n]$.}

\subsection{An active algorithm with warm-up}
As mentioned in the proof outline and in contrast to standard exploration that would maximize a bonus of the form $\norm{\phi(\cdot, \cdot)}_{(\lambda I + \Lambda_k)^{-1}}$ with $\Lambda_k \ceq \sum_{\tau=1}^{k} \left[\psi_{\mu_\tau}(b_\tau)\psi_{\mu_\tau}(b_\tau)^\top\right]$ (e.g. LSVI-UCB \citep{jin2020provably}), we design our objective around minimizing uncertainty in \emph{strategically-relevant directions}, measured by the Fisher information (Lemma~\ref{lem:transfer-lemma}). We summarize our approach in Algorithm~\ref{alg:stackelberg-static}.

\begin{algorithm}[h]
    \caption{Active Stackelberg IRL for two-player normal-form games}
    \label{alg:stackelberg-static}
    \begin{algorithmic}[1]
    \State \textbf{Input:} horizon $K \geq 0$, confidence $\delta$, burn-in budget $n_{\text{warm}} \geq 0$, regularizer $\lambda \geq 0$, tolerance $\epsilon$.\vspace{6pt}
    
    \State \cm{Warm-up phase} (simplified, see the full procedure as Algorithm~\ref{alg:warm-up} in Appendix~\ref{appx:warmup})\vspace{2pt}
    
    \State Solve the following optimal design up to a factor $1+\epsilon$, using proxy $\tilde\omega = 0$:
    \begin{equation}
        w \in \argmin_{\tilde w \in \Delta(\cA)} \max_{a \in \cA} \tr\left(W(\tilde w)^\dagger \cI_{\delta_a}(0)\right), \quad \text{s.t.} \quad \Image \cI_{\delta_a}(0) \subseteq \Image W(\tilde w) \ \ \forall a \in \cA,
    \end{equation}
    where $\delta_a$ is the pure policy on $a$ and $W(\tilde w) \ceq \mathbb{E}_{a' \sim \tilde w}[\cI_{\delta_{a'}}(0)]$.
    \State Round $w$ to an allocation of $n_{\text{warm}}$ policies and collect $\cD_{\text{warm}}$.\vspace{6pt}
    
    \State \cm{Adaptive online phase}
    \State $\cD_0 \ceq \cD_{\text{warm}}$, \ $\what^{\text{on}}_0 \gets \argmax_{\tilde\omega\in\Omega}\cL(\tilde\omega | \cD_{\text{warm}})$ or arbitrary in $\Omega$ if $\cD_{\text{warm}} = \emptyset$.
    \For{$k=1,2,\dots,K$}
        \State $\mu^{\text{on}}_k\in\argmax_{\mu\in\Delta(\cA)}
      \tr\left(\left[G^{\lambda}(\what^{\text{on}}_{k-1} | \cD_{k-1})\right]^{-1}\,\cI_\mu(\what^{\text{on}}_{k-1})\right)$\footnotemark
        \State $b_k\sim\nu^*_{\mu^{\text{on}}_k}(\omega)$;\quad
      $\cD_k\gets \cD_{k-1} \cup \{(b_k, \mu^{\text{on}}_k)\}$;\quad $\what^{\text{on}}_k\gets\argmax_{\tilde\omega\in\Omega}\cL(\tilde\omega | \cD_k)$.
    \EndFor\vspace{6pt}
    \State \textbf{Output:} $\what:=\what^{\text{on}}_K$.
    \end{algorithmic}
\end{algorithm}

After an initial burn-in phase with $n_{\text{warm}} \approx \tilde{\cO}\left(\eta Se^{2\eta S}|\cB|d^2\log \delta^{-1}\right)$, we derive the following finite-sample Stackelberg sub-optimality guarantees.

\begin{theorem}[Informal]
    \label{thm:stackelberg-static}
    Let $\hat{\omega}$ the result of Algorithm~\ref{alg:stackelberg-static} with $n_{\text{warm}} \approx \tilde{\cO}\left(\eta Se^{2\eta S}|\cB|d^2\log \delta^{-1}\right)$, $K \geq 1$ and $\lambda = 0$. Then, with probability at least $1-\delta$, for any $\hat{\mu}$ such that $\SubOpt_{\hat\omega}(\hat{\mu}) = 0$:
    \begin{equation}
        \SubOpt_{\omega}(\hat\mu) \leq \tilde{\mathcal{O}}\left(\frac{R_{\text{max}}\sqrt{d\left(d \eta S + \log \delta^{-1} + d\right)}}{\sqrt{K}}\right),
    \end{equation}
    where $\tilde{\mathcal{O}}$ hides polylog terms in $\log \delta^{-1}$, $K$ and environment parameters.
\end{theorem}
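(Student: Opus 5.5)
The plan follows the proof roadmap. By Lemma~\ref{lem:transfer-lemma} applied with $\hat\omega=\hat\omega^{\text{on}}_K$, it suffices to control, for $\tilde\mu\in\{\mu^*,\hat\mu\}$, both the factor $\chi(\eta\, d_{\tilde\mu}(\omega,\hat\omega))$ and the local norm $\norm{\omega-\hat\omega}_{\cI_{\tilde\mu}(\hat\omega)}$. Everything is done on a single high-probability event, obtained by a union bound of two pieces.

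The first piece is the warm-up guarantee. With $n_{\text{warm}}\approx\tilde\cO(\eta S e^{2\eta S}|\cB| d^2\log\delta^{-1})$ samples allocated through the Kiefer--Wolfowitz-type design at proxy $\tilde\omega=0$, we get $\max_\mu d_\mu(\omega,\hat\omega^{\text{on}}_k)\le 1/\eta$ for all $k\le K$ simultaneously. The route is to compare $\cI_\mu(0)$ with $\cI_\mu(\tilde\omega)$ up to a factor $e^{2\eta S}|\cB|$, since softmax probabilities lie in $[e^{-2\eta S}/|\cB|,1]$. The design then guarantees that every $\cI_{\delta_a}$, and hence every $\cI_\mu$, is dominated by a multiple of the warm-up Fisher information. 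This uses linearity of $\psi_\mu$ in $\mu$, together with the span-seminorm fact that $d_\mu$ is controlled by the Fisher norm. This forces the span of the logit differences below $1/\eta$.

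The second piece is the confidence set of Theorem~\ref{thm:CS}, taken with $\lambda=0$ (or a vanishing $\lambda$). It bounds $\norm{\omega-\hat\omega_\tau}^2_{\Gamma_\tau(\hat\omega_\tau)}\le\beta:=\cO(\log\delta^{-1}+d\log(\cdots))$. On this event $\chi(\eta d)\le\chi(1)=\cO(1)$. A self-concordance lemma then shows that for logit spans at most $1/\eta$, the Fisher information matrices along the segment between $\hat\omega$ and $\omega$ agree up to constants. Hence $\Gamma_\tau(\hat\omega_\tau)\succeq c\,G(\hat\omega_\tau|\cD_\tau)$, and likewise $\cI_\mu(\hat\omega)$ is comparable to $\cI_\mu(\omega)$. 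I expect the additional $d\eta S$ in the final bound to come from the $\lambda S^2$ and log terms in the confidence radius.

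Next, the change of measure in the roadmap gives $\norm{\omega-\hat\omega}^2_{\cI_{\tilde\mu}(\hat\omega)}\le\beta\cdot\tr(G(\hat\omega)^{-1}\cI_{\tilde\mu}(\hat\omega))$. Since the last iterate uses all data, the remaining step is an elliptical-potential argument. By self-concordance, the trace at the final $\hat\omega^{\text{on}}_K$ is, up to constants, at most the trace evaluated at the true $\omega$ with the final Gram matrix. This is monotone: the trace with the final Gram matrix is at most the trace with any earlier $G_{k-1}$. By the greedy rule, that is at most the value achieved by $\mu^{\text{on}}_k$ at round $k$, again after switching from $\hat\omega^{\text{on}}_{k-1}$ to $\omega$ with constant loss. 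Averaging over $k$ gives
\[
\tr\bigl(G_K^{-1}\cI_{\tilde\mu}\bigr)\le\frac{C}{K}\sum_{k=1}^K\tr\bigl(G_{k-1}^{-1}\cI_{\mu_k}\bigr)\le\frac{C'}{K}\, d\log\Bigl(1+\frac{K}{d}\Bigr).
\]
The last inequality is a matrix elliptical-potential bound via $\log\det$. It needs $\tr(G_{k-1}^{-1}\cI_{\mu_k})\le 1$, which the warm-up ensures because $G_{\text{warm}}$ dominates every $\cI_\mu$. Combining, $\SubOpt_\omega(\hat\mu)\le 2R_{\max}\chi(1)\sqrt{\beta\, d\log K/K}$, which is the claim.

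The main obstacle is the self-concordance step. We must compare Fisher matrices at different parameters (the MLE $\hat\omega^{\text{on}}_{k-1}$ versus $\omega$) with constants depending only on $\eta d_\mu\le1$, not on $e^{\eta S}$. The approach I would try first is to write $\cI_\mu(\omega')=\eta^2\Cov_{p_{\omega'}}[\psi_\mu]$ and note that the density ratio $p_{\omega'}/p_{\omega}$ lies in $[e^{-\eta d_\mu},e^{\eta d_\mu}]$. This gives a two-sided Loewner comparison with constant $e^{2}$, and the same comparison extends uniformly over the Gram sums.
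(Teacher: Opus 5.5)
Your proposal is correct and follows the paper's proof essentially step for step: the transfer lemma, the warm-up certificate $d_*(\omega,\hat\omega^{\text{on}}_k)\le 1/\eta$ obtained from the time-uniform confidence sequence and the $\tilde\omega=0$ design, $\Gamma\succeq cG$ with trace splitting, the chain of monotonicity, greedy rule and parameter switching, and the whitened elliptical potential lemma. Your density-ratio argument is a more elementary proof of the Fisher comparison in Lemma~\ref{lem:sc-fisher} than the paper's span-seminorm self-concordance. One correction: with $\lambda=0$ the $d\eta S$ term does not come from $\lambda S^2$ (which vanishes) but from the $d\log N_K$ term of $\beta^2_{N_K}(\delta)$, because $N_K\ge n_{\text{warm}}$ is of order $e^{2\eta S}$; also, $G^{-1}$ should be read as a pseudo-inverse on $\cG$, where the warm-up makes $G$ positive definite.
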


\footnotetext{If $\lambda = 0$ and the inverse does not exist, this operation can be carried over a subspace instead. This requires some care, but the intuition remains (see Appendix~\ref{appx:main-results}).}

The proof follows the following thread: after an initial warm-up phase, we ensure that every online MLE is sufficiently localized near the true parameter $\omega$. Given this, we proceed by a greedy online phase using the running estimators in place of the true unknown reward coefficient. Finally, we conclude with a classical elliptical potential bound argument. The formal version of Theorem~\ref{thm:stackelberg-static} and its complete proof are in Appendix~\ref{proof:thm-stackelberg-static}.

\begin{remark}
    Our main result assumes an initial warm-up phase for $n_{\text{warm}}$ exponential in the parameters $\eta$ and $S$. This corresponds to the usual $\kappa$-dependency of the various previous works on preference-based learning and bandits with the logistic function \citep{faury2020improved, abeille2021instance, schlaginhaufen2025efficient, bergerault2026preferencebased}.
\end{remark}

While Theorem~\ref{thm:stackelberg-static} is given for a specific $n_{\text{warm}}$ and $\lambda = 0$, we note that in general and in practice, our algorithm can be run with smaller warm-up steps (or no warm-up at all), requiring $\lambda > 0$ for proper matrix inversion in the policy selection.

Lastly, by a standard reduction argument, this PAC result also yields the following:
\begin{equation*}
    \mathbb{E}\left[\SubOpt_{\omega}(\hat\mu)\right]  \leq \tilde{\mathcal{O}}\left(R_{\max}\sqrt{{d/K}}\right),
\end{equation*}
where the expectation is taken with respect to the follower best-responses.

This matches the lower bound in Theorem~\ref{thm:lower} up to polylogarithmic factors for sufficiently large $K$.

\section{Numerical validation}\label{sec:numerical-validation}

We empirically validate our algorithm on a realistic anti-poaching task from the \texttt{GUARD} benchmark \citep{krever2026guard}. We simulate both active methods where the controllable leader interacts with a best-response follower and fixed-expert methods. These fixed-expert baselines rely on the strong assumption of a known Stackelberg or quantal-response equilibrium oracle that we sample from.

\begin{figure}[h]
  \centering
  \input{plots/m1_source_convergence}
  \caption{\textbf{Mean $\operatorname{SubOpt}$ with the number of poacher actions.} 95\% CIs over 50 seeds.}
  \label{fig:source-convergence}
\end{figure}

In Figure~\ref{fig:source-convergence}, we compare our algorithm \texttt{Active IRL} with two active MLE baselines where a policy is sampled randomly at each step (\texttt{Online random}) or once at the beginning of training then fixed (\texttt{Fixed random}). Additionally, we compare to previous (fixed-expert) works with \texttt{Zero-sum OLS}\footnote{Learning from two times more samples by leveraging the known zero-sum structure.} (derived from \citet{pmlr-v267-liao25i}), \texttt{Follower-only OLS} and the version of MaxEntIRL in our setting as \texttt{Fixed-expert MLE} (derived from \citet{ziebart2008maximum}).

As shown, \texttt{Active IRL} approximately matches or improves over all the baselines, despite the structural advantage of fixed-expert methods. In terms of \emph{transfer} of the learned rewards to multiple leader objectives, we provide some illustrative results in Figure~\ref{fig:downstream-transfer} below for arbitrary downstream tasks\footnote{Selected to induce diverse variations of the reference distribution, independently of algorithms performance.}, comparing our algorithm with the second-to-best algorithm (\texttt{Fixed-expert MLE}).

\vspace{-4pt}

\begin{figure}[H]
  \centering
  \input{plots/m2_downstream_transfer_compact}
  \caption{\textbf{Patrol coverage and associated mean
  $\operatorname{SubOpt}$ across downstream tasks.} Trained on 1,600 poacher actions. Darker map cells indicate greater coverage; the triangle marks the ranger base.}
  \label{fig:downstream-transfer}
\end{figure}

\vspace{-14pt}

We provide additional details, experiments and ablations in Appendix~\ref{appx:experiments}.

\section{Conclusion}\label{sec:conclusion}

We study IRL for interactive agents in strategic games as the problem of learning reward functions that (approximately) preserve optimality of agents when deployed among strategic players. In two-player normal-form games under linearity and entropy-regularization assumptions, we develop an active algorithm that provably minimizes Stackelberg sub-optimality at a minimax optimal rate, ignoring polylogarithmic factors. We then empirically validate our algorithm on a realistic anti-poaching game and a set of synthetic environments.

\textbf{Future directions} A current open problem is principled learning in dynamic games. Specifically, open challenges include the derivation of exploration strategies over states and the computation of reward estimators from trajectories (beyond strong myopic assumptions of previous works).

\bibliography{references}
\bibliographystyle{references}

\clearpage

\appendix
\section*{Appendix}

The Appendix contains the proofs of the different results, as well as additional information.

\subsection*{Table of Contents}

\startcontents[sections]
\printcontents[sections]{l}{1}{\setcounter{tocdepth}{2}}

\vspace{1cm}

\newpage
\section{Previous work}\label{appx:previous-work}

We review the literature most relevant to our problem, with a particular focus on reward identifiability and inverse reinforcement learning. %

\paragraph{Reward identifiability} Strictly convex regularization of the objective, such as entropy, ensures the uniqueness of the optimal policies to any given set of rewards. Nevertheless, inverse RL remains ill-defined as rewards are invariant to potential shaping \citep{ng1999policy}. This identifiability issue was studied in the seminal work of \citet{cao2021identifiability}, later refined by \citet{rolland2022identifiability} and \citet{schlaginhaufen2024towards}. More recently, \citet{freihaut2026feasible} extends those results to games, showing that even the nature (e.g. collaborative or competitive) of a game cannot be determined from the feasible reward sets of an expert.

This line of work has motivated inverse RL works like \citet{metelli2021provably, lindner2022active}, that minimize the distance between the true and estimated feasible reward sets. \citet{metelli2021provably} assumes access to a generative model, an assumption later removed by \citet{lindner2022active}. However, these works do not provide a method for extracting a good reward or task representation from the interactions. This is the target of more traditional methods, inspired by the single-agent case.

\paragraph{Inverse reinforcement learning in games} When considering multiple interacting agents, inverse RL aims to recover the individual rewards explaining their behaviors. To this end, \citet{mehr2023maximum} empirically extend the classical MaxEntIRL framework of \citet{ziebart2008maximum} to games, learning from demonstrations of a Nash equilibrium. Similarly, \citet{pmlr-v267-liao25i} study reward recovery from samples of a fixed expert in two-player zero-sum game expert under coverability assumptions. Later, \citet{barbara2026optimal} derives sample complexity bounds for recovering the feasible reward set by sampling a fixed expert, evaluated in the Hausdorff distance. Under the strong assumption of oracle access to the loss gradient, \citet{goktas2025efficient} perform principled inverse multi-agent RL with parameter-dependent payoffs. All these settings assume access to reliable equilibrium demonstrations, i.e. that the game has been solved a-priori by a set of players we can learn from (e.g. humans). This is restrictive in many applications, particularly in non-cooperative games where a player must learn to act against an adversary with unknown incentives. This motivates our setting, in which a player learns to behave strategically in arbitrary downstream tasks, playing against a best-responding agent.

Closer to this setup, \citet{chen2023actions} extract the reward function from observing the follower in a two-player Stackelberg scenario. By controlling a prescription announced by the leader, they are able to bound the suboptimality of the optimal leader of the estimated environment. In contrast, we assume that the leader directly commits to a policy (as opposed to a prescription), and actively select leader strategies via optimal design to bound the offline suboptimality objective. This has been studied by \citet{chen2025active}, where the convergence of their algorithm is studied in the asymptotic regime. Contrary to their work, we derive an algorithm with sample complexity analysis and a matching minimax lower bound. Previously, \citet{wu2022inverse} also studied bounded rationality in the context of Stackelberg Games. In particular, the authors show that observations are more informative from a QRE than a Nash equilibrium. This means that bounded rationality helps to be more robust against an irrational opponent. Under finite action spaces, this makes it possible to construct algorithms with exponential dependencies on neither the number of leader actions nor the number of follower actions.

\textbf{Learning against unknown or adversarial opponents.}
Several works study learning in multi-agent environments with unknown or adaptive opponents, where non-stationarity and strategic interactions make learning particularly challenging. In particular, \citet{sessa2020learning} consider learning effective strategies against unknown opponents, while more recent approaches use demonstrations to improve learning in strategic settings, as in DIML \citep{an2026diml}. Other works study robustness against adversaries and worst-case behaviors during training \citep{liu2022learning}. In contrast, we focus on reward learning under strategic interactions rather than solely on policy optimization or robustness.\\

Additionally, we expand below on the recent developments around adjacent tasks, such as multi-agent imitation learning, and contrast with the scope and goals of the present paper.

\paragraph{Imitation learning in games} Recent imitation learning works in games measure sub-optimality of the learned behavior by the notion of Nash gap, initially introduced by \citet{ramponi2023imitation}. In general Markov games, \citet{tang2024multi} showed how harder this objective is compared to the common value gap extension from single-agent problems. Later, these remarks have been extended to more precise characterizations for imitation learning by \citet{freihaut2026learning, freihaut2026rate, bergerault2026matching}. In the non-interactive case, these works show that the hardness of equilibria extraction can be reduced to notions of concentrability and continuity at equilibrium. Further, \citet{freihaut2026learning, freihaut2026rate} show how to optimize such Nash gap regrets in the interactive setting by forcing the exploration of the state space to cover uncertain deviations. This is done by controlling all the agents sequentially. We argue that such control over the agents is not realistic in many scenarios, where agents are not collaborative in learning, e.g. in real world environments when learning from interactions with (partially) adversarial players. To this end, we adopt a Stackelberg objective to measure the optimality of extracted policies, evaluating a leader when deployed with other strategic players.%

\paragraph{On reward identifiability} Furthermore, imitation learning extracts a policy from (approximately) optimal agents corresponding to fixed incentives. Modeling agents' behavior for a fixed task without describing their intents prevents transferring to new environments or learner tasks, limiting adaptation of the learner at deployment. Fortunately, observing (approximately) optimal actions from strategic players allows to extract signals about their individual rewards. However, the complete rewards can in general not be recovered, even with infinite data.

This ambiguity on the reward function has first been explored in the single-agent case, with the analysis of potential-based transformations or potential shaping originating from \citet{ng1999policy}. In the normal-form game setting, these transformations reduce to constant shifts, whose invariance to optimal policies carry over to the entropy-regularized setting (i.e. softmax policies remain invariant to constant shifts of the rewards). The picture extends to games, where each individual reward is contained in an individual feasible reward set. In this scenario, even the nature of the game (e.g. cooperative, zero-sum) can sometimes not be identifiable from the feasible sets alone \citep{freihaut2026feasible}.

To mitigate this issue, one could consider learning in different environments, as pointed out by \citet{rolland2022identifiability, freihaut2026feasible}. A second possibility is to weaken the exact identifiability objective, which aims to exactly learn the reward function, in particular by minimizing the mismatch with the ground truth in the Euclidean norm \citep{pmlr-v267-liao25i}. Our work applies both propositions, where the learner agent induces different environments to the soft-optimal follower, and where only the \emph{strategic identifiability} of its reward is considered.

\paragraph{Empirical inverse reinforcement learning} In the single-agent case, the seminal work of \citet{ziebart2008maximum} develops $\text{MaxEntIRL}$, an algorithm recovering the reward matching the maximum-entropy solution given by the demonstrations. The proposed method operates under linear rewards. Their method essentially boils down to gradient descent on the reward parameters, where gradients, given by expected features, are approximated from rollouts of the policy in the environment.

$\text{MaxEntIRL}$ has later been empirically extended to games by \citet{mehr2023maximum}, and the theoretical guarantees of such extensions in terms of strategic exploitability are yet to be understood. Furthermore, one caveat of this method is the requirement that all players are controllable, which can be unreasonable in various applications. In contrast, we develop in this work a principled algorithm with PAC guarantees under a finite number of samples. In our two-player setting, only one agent is assumed controllable, while the non-controllable agent (the follower) is best-responding to the learner (the leader policy).

\section{Notations and Symbols}\label{appx:notations}

\subsection{Additional notations for the appendix}

For the simplicity of the analysis, we adopt throughout the appendix additional notations not introduced in the main content. We summarize all introduced notations at the end of this section.

Already, let us define the policy spaces as $\Pi_\mu \ceq \Delta(\cA)$ and $\Pi_\nu \ceq \Delta(\cB)$ for the $\mu$- and $\nu$-player, respectively.

\textbf{Features and feature spaces}\label{appx:notations:features}

We recall from the main content that without loss of generality, the known feature map $\phi: \cA \times \cB \to \R^d$ is such that $\norm*{\phi(a, b)}_2 \leq 1$ for all $a, b \in \cA \times \cB$.

For every $a \in \cA$, $b \in \cB$, and any policies $\mu \in \Pi_\mu$, $\nu \in \Pi_\nu$, we define the average features as:
\begin{equation}
    \psi_{\mu}(b) \ceq \sum_{a' \in \cA} \mu(a') \phi(a', b), \qquad \psi_{\nu}(a) \ceq \sum_{b' \in \cB} \nu(b')\phi(a, b'),
\end{equation}
where the notation makes it clear from context if we take the average over $\cA$ or $\cB$.

These averaged features are used to define the uniform-centered features $g_{\mu}$ spanning the space $\cG_1$:
\begin{equation}\label{eq:centered-features}
    g_{\mu}(b) \ceq \psi_{\mu}(b) - \frac{1}{|\cB|}\sum_{b' \in \cB} \psi_{\mu}(b'), \qquad \cG_1 \ceq \spn\{g_{\mu}(b)\}_{\mu, b}.
\end{equation}

For any $\mu \in \Pi_\mu$, $\nu \in \Pi_\nu$, let us define the following expected features:
\begin{equation*}
    \overline{\psi}_{\mu}(\tilde \omega) = \sum_{b \in \cB} \mathbb{P}(b | \tilde\omega, \mu) \psi_{\mu}(b) \qquad \text{and} \qquad \overline{\psi}_{\nu}(\tilde\omega) = \sum_{a \in \cA} \mathbb{P}(a | \tilde\omega, \nu) \psi_{\nu}(a),
\end{equation*}
where we can therefore recognize $g_{\mu}(b) = \psi_{\mu}(b) - \overline{\psi}_{\mu}(0)$, and similarly for $g_\nu$.

From the definition of the features, $d$ represents the ambient dimension. We also define the following two terms:
\begin{equation}
    d^*_{1} \ceq \dim \spn\{\psi_{\mu}(b)\}_{\mu, b}, \qquad d^\diamond_{1} \ceq \dim \cG_{1},
\end{equation}
where in general, $d^\diamond_{1} \leq d^*_{1} \leq d$. 

Using these notations, Assumption~\ref{ass:non-degeneracy} is equivalent to assuming $d^\diamond_1 \geq 1$.

To simplify notations, we will use $d^\diamond$, $d^*$ and $\cG$ to refer to $d^\diamond_1$, $d^*_1$ and $\cG_1$, respectively. This is because for our main result, the analysis only depends on these quantities for the leader player, i.e. player $1$.

\textbf{Measuring distances between logits}

Our analysis builds on how well the coefficients are estimated throughout the iterations of the algorithm. This is, in part, measured by the differences in induced logits for the best-response player. Intuitively, to allow for arbitrary constant shifts (it doesn't change the softmax), we project away this dimension and express our results with the following span semi-norm.

\begin{definition}[Span semi-norm]
    For any $n \in \mathbb{Z}_+$ and $v \in \R^n$, $\norm*{v}_{\text{span}} = \max_i v_i - \min_j v_j$.
\end{definition}

In particular, for all $\mu \in \Pi_\mu$, $\omega_1, \omega_2 \in \Omega$, 
\begin{equation}
    \norm*{\inner{\psi_{\mu}(\cdot)}{\omega_1 - \omega_2}}_{\text{span}} = \max_{b_1, b_2 \in \cB} \ \inner{\psi_{\mu}(b_1) - \psi_{\mu}(b_2)}{\omega_1 - \omega_2}.
\end{equation}

This latter quantity will be key to our analysis. We therefore define the following two functions:
\begin{equation}
    d_{\mu}(\omega_1, \omega_2) := \left\lVert \left\langle \psi_{\mu}(\cdot), \omega_1 - \omega_2 \right\rangle \right\rVert_{\text{span}}, \qquad d_*(\omega_1, \omega_2) := \max_{\tilde\mu} d_{\tilde\mu}(\omega_1, \omega_2),
\end{equation}
where by Cauchy-Schwarz and the triangle inequality: $d_*(\omega_1, \omega_2) \leq 2 \left\lVert \omega_1 - \omega_2 \right\rVert_2 \leq 4S$.

\textbf{Linear algebra on the identifiable subspace}

As mentioned before, the observable space has dimension $d^\diamond \leq d$, and the inequality can be strict. Any interaction can only reveal information about the projection of the reward coefficients in that space, and this is sufficient for any reward representation in terms of strategic identifiability. We therefore define this subspace more precisely here, and in particular appropriate operators over this subspace.

This allows for simpler notations and proofs, but the following is bookkeeping rather than deep machinery. Natural operations resolve cleanly.

Let $P_{\cG}$ denote the orthogonal projector onto $\cG$, and define
\begin{equation}
    \mathbb{S}(\cG) \ceq \left\{A \in \mathbb{S}^{d}: A = P_{\cG} A P_{\cG}\right\},
    \qquad
    \mathbb{S}_{+}(\cG) \ceq \mathbb{S}(\cG) \cap \mathbb{S}^{d}_{+},
\end{equation}
so that $\mathbb{S}_{+}(\cG)$ is the set of PSD ambient operators supported on $\cG$.

We further define the set of positive definite operators on $\cG$ as
\begin{equation}
    \mathbb{S}_{++}(\cG)
    \ceq
    \left\{A \in \mathbb{S}_{+}(\cG): x^\top A x > 0, \ \forall x \in \cG \setminus \{0\}\right\},
\end{equation}
precisely the set of PSD matrices such that $\Image A = \cG$, so that $A$ can cover the information of any pair $(\mu, \omega)$. We keep $\omega$ in $\R^d$ for the simplicity of the analysis and algorithms, so that Images and kernels are always understood as subspaces of the ambient space $\R^d$.

Now fix $U \in \R^{d \times d^\diamond}$ such that the columns of $U$ form an orthonormal basis of $\cG$. The definitions below are invariant to the specific choice of $U$.

For $A \in \mathbb{S}_{+}(\cG)$, we define its determinant in $\cG$ and use the extended-real convention as:
\begin{equation}
    \det\nolimits_{\cG}(A) \ceq \det(U^\top A U), \qquad \log \det\nolimits_{\cG}(A) = -\infty \quad \text{if} \quad A \notin \mathbb{S}_{++}(\cG).
\end{equation}

This $A$ also implies the following extended dual semi-norm for any $u \in \cG$:
\begin{equation}\label{eq:extended-inverse-norm-pinv}
    \dualnorm*{u}{A}
    =
    \begin{cases}
        \sqrt{u^\top A^\dagger u}, & u \in \Image A,\\[1mm]
        +\infty, & u \notin \Image A,
    \end{cases}
\end{equation}
where $A^\dagger$ denotes the ordinary Moore-Penrose pseudo-inverse.

In particular, when $A \in \mathbb{S}_{++}(\cG)$, we recover the standard semi-norm in $A^\dagger$:
\begin{equation}
    \dualnorm*{u}{A}^2 = \norm*{u}_{A^\dagger}^2 = u^\top A^\dagger u, \qquad A^\dagger = U(U^\top A U)^{-1}U^\top.
\end{equation}

For any $M \in \mathbb{S}_{+}(\cG)$ with eigen value-vector pairs $\{\lambda_i, u_i\}_i$, we extend the trace as:
\begin{equation}\label{eq:pinv-convention}
    \tr\left(A^\dagger M\right) = \sum_{i:\,\lambda_i > 0} \lambda_i \dualnorm*{u_i}{A}^2, \quad \text{such that} \quad \tr\left(A^\dagger M\right) = +\infty \text{ if } \Image M \not\subseteq \Image A.
\end{equation}

All the newly introduced operators behave similarly to the usual operators. We show such some properties below.
\begin{lemma}[Linear algebra on $\cG$]\label{lem:g-linear-algebra}
    Let $A,B \in \mathbb{S}_{+}(\cG)$, $u \in \cG$, and $x \in \R^d$. Then the following properties hold.

    \begin{enumerate}
        \item Generalized Cauchy-Schwarz inequality: $\left|\inner{u}{x}\right| \leq \norm*{x}_{A}\dualnorm*{u}{A}$.
        \item If $A \succeq cB$ for some $c>0$, then $\dualnorm*{u}{A}^2 \leq c^{-1}\dualnorm*{u}{B}^2$.\\
        Moreover, if $A,B \in \mathbb{S}_{++}(\cG)$, then $A^\dagger \preceq c^{-1}B^\dagger$.
        \item If $A \in \mathbb{S}_{++}(\cG)$ and $H \in \mathbb{S}(\cG)$, then
        \begin{equation}\label{eq:restricted-logdet-derivative}
            \left.\frac{\ud}{\ud t}\log\det\nolimits_{\cG}(A+tH)\right|_{t=0}
            = \tr(A^\dagger H),
            \qquad
            \tr(A^\dagger A)=d^\diamond.
        \end{equation}
    \end{enumerate}
\end{lemma}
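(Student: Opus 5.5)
The plan is to reduce everything to ordinary positive definite linear algebra in the coordinates of $\cG$ given by the orthonormal basis $U \in \R^{d \times d^\diamond}$. For $A \in \mathbb{S}_{+}(\cG)$ write $\tilde A \ceq U^\top A U \in \mathbb{S}^{d^\diamond}_{+}$. Since $A = P_{\cG} A P_{\cG}$ and $P_{\cG} = UU^\top$, we have $A = U\tilde A U^\top$. Two facts follow. First, $\Image A = U \Image \tilde A$. Second, the pseudo-inverse satisfies $A^\dagger = U \tilde A^\dagger U^\top$; to check this I would verify the four Moore–Penrose identities using $U^\top U = I$. For $u \in \cG$ write $u = U\tilde u$ with $\tilde u = U^\top u$. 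Then $u \in \Image A$ if and only if $\tilde u \in \Image \tilde A$, and $\dualnorm*{u}{A}^2 = \tilde u^\top \tilde A^\dagger \tilde u$. Finally, $A \in \mathbb{S}_{++}(\cG)$ exactly when $\tilde A$ is positive definite, in which case $\tilde A^\dagger = \tilde A^{-1}$.

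\textbf{Item 1.} If $u \notin \Image A$, the right-hand side is $+\infty$; the only delicate case is $\norm*{x}_A = 0$, where I would use the convention $0\cdot\infty = \infty$, or else note that the bound is only used with $u \in \Image A$. If $u \in \Image A$, then $u = A A^\dagger u$, so
\[
\inner{u}{x} = \inner{A^{\dagger/2} u}{A^{1/2} x},
\]
using $A^{1/2} A^{\dagger/2} = P_{\Image A}$. The ordinary Cauchy–Schwarz inequality then gives $|\inner{u}{x}| \leq \sqrt{u^\top A^\dagger u}\,\sqrt{x^\top A x}$.

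\textbf{Item 2.} The inequality $A \succeq cB$ gives $\ker A \subseteq \ker B$, hence $\Image B \subseteq \Image A$. If $u \notin \Image B$, the right-hand side is infinite and there is nothing to prove. Otherwise I would use the variational formula
\[
\dualnorm*{u}{A}^2 = \sup_{x}\bigl(2\inner{u}{x} - x^\top A x\bigr),
\]
valid for $u \in \Image A$ and infinite otherwise. Since $x^\top A x \geq c\, x^\top B x$, the supremum is at most $\sup_x \bigl(2\inner{u}{x} - c\,x^\top B x\bigr) = c^{-1}\dualnorm*{u}{B}^2$. When both matrices are in $\mathbb{S}_{++}(\cG)$, we have $\tilde A \succeq c\tilde B \succ 0$. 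The standard operator monotonicity of inversion then gives $\tilde A^{-1} \preceq c^{-1}\tilde B^{-1}$, and conjugating by $U$ yields $A^\dagger \preceq c^{-1} B^\dagger$.

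\textbf{Item 3.} We have $\log\det_{\cG}(A + tH) = \log\det(\tilde A + t\tilde H)$, where $\tilde H = U^\top H U$. Jacobi's formula gives the derivative at $t = 0$ as $\tr(\tilde A^{-1}\tilde H)$. This equals $\tr(U\tilde A^{-1}U^\top H)$ because $H = U\tilde H U^\top$ and the trace is cyclic, and that is $\tr(A^\dagger H)$. It remains to check that this trace agrees with the extended definition in \eqref{eq:pinv-convention}. Since $\Image A = \cG \supseteq \Image H$, there are no infinite terms. Moreover, $\sum_i \lambda_i\, u_i^\top A^\dagger u_i$ coincides with the ordinary trace by the spectral decomposition of $H$, even when $H$ is indefinite, where I would apply the same formula over the nonzero eigenvalues. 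Lastly, $\tr(A^\dagger A) = \tr(P_{\cG}) = d^\diamond$.

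\textbf{Main obstacle.} The only genuinely delicate point is the bookkeeping of the extended-value conventions: the infinite cases, and $0\cdot\infty$ in Item 1. Matching the extended trace definition with the ordinary trace is also only defined there for $M \in \mathbb{S}_+(\cG)$, so for indefinite $H$ in Item 3 I would simply interpret $\tr(A^\dagger H)$ as the ordinary trace.
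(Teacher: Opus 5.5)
Your proposal is correct and takes essentially the paper's route. The paper's proof just passes to the coordinates $\bar A = U^\top A U$, $\bar u = U^\top u$, $\bar x = U^\top x$ and cites the standard positive-(semi)definite facts; you carry out that same reduction and prove those facts explicitly (the pseudo-inverse identity $A^\dagger = U\tilde A^\dagger U^\top$, the variational formula for the singular case of Item 2, Jacobi's formula). Your remark that Item 1 needs the convention $0\cdot\infty = +\infty$ when $u \notin \Image A$ and $\norm{x}_A = 0$ is a real point that the paper's one-line proof glosses over, though it is harmless because the lemma is only applied to matrices in $\mathbb{S}_{++}(\cG)$.
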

\begin{proof}
    Define the following terms:
    \begin{equation}
        \bar A \ceq U^\top A U, \qquad \bar B \ceq U^\top B U, \qquad \bar u \ceq U^\top u, \qquad \bar x \ceq U^\top x.
    \end{equation}

    The claims then reduce to the corresponding standard facts for $\bar A,\bar B \in \mathbb{S}^{d^\diamond}_{+}$.
\end{proof}

\textbf{Coverage and information matrices}

For any $\omega_1, \omega_2 \in \Omega$, $\tilde\mu \in \Pi_\mu$, we introduce the Fisher information in its integral form as
\begin{equation}
    \Gamma_{\tilde\mu}(\omega_1, \omega_2) := \int_0^1 \cI_{\tilde\mu}(\omega_1 + s(\omega_2 - \omega_1)) \ud s,
\end{equation}
where we recall the Fisher information matrix:
\begin{equation}
    \cI_{\tilde\mu}(\tilde\omega) = \eta^2 \Cov_{b \sim \nu^*_{\tilde\mu}(\tilde\omega)}\left[\psi_{\tilde\mu}(b)\right] = \eta^2 \E_{b \sim \nu^*_{\tilde\mu}}\left[\left(\psi_{\tilde\mu}(b) - \overline{\psi}_{\tilde\mu}(\tilde\omega)\right)^{\otimes 2}\right].
\end{equation}

This motivates the following definitions of feature coverage.

For any policy $\mu \in \Pi_{\mu}$, we define its uniform coverage as the matrix:
\begin{equation}
    M_{\mu} \ceq \frac{1}{|\cB|} \sum_{b\in \cB} g_{\mu}(b)g_{\mu}(b)^{\top} \succeq 0,
\end{equation}
such that the following defines the average uniform coverage for any distribution $w \in \Delta(\Pi_{\mu})$:
\begin{equation}
    V(w) \ceq \int_{\Pi_\mu} M_{\mu} w(\ud \mu) \in \mathbb{S}_{+}(\cG).
\end{equation}

We additionally call $w$ admissible if $V(w) \in \mathbb{S}_{++}(\cG)$, i.e. $\Image V(w) = \cG$, and denote by $\cW_{\cG}$ the set of admissible designs.

As $\spn \bigcup_{\mu} \Image M_{\mu} = \cG$, this implies
\begin{equation}\label{eq:admissible-finite}
    w \in \cW_{\cG}
    \quad \Longleftrightarrow \quad
    \max_{\mu} \tr\left(V(w)^\dagger M_{\mu}\right) < \infty
    \quad \Longleftrightarrow \quad
    \log\det\nolimits_{\cG} V(w) > -\infty.
\end{equation}

\textbf{Additional notations for the online phase}

We introduce additional notations to simplify the indexation of policies and likelihoods during the online phase. Specifically, we define
\begin{equation}
    N_k \ceq n_{\text{warm}} + k, \quad \cL_k(\cdot) \ceq \cL(\cdot | \cD_k), \quad G^{\lambda}_k(\cdot) \ceq G^{\lambda}(\cdot | \cD_k), \qquad \forall k \in [K].
\end{equation}

\subsection{Summary of the notations}

\newcommand{\notationtable}[2]{%
    \par\medskip\noindent\textbf{#1}\par\nopagebreak\smallskip
    \noindent\begin{tabular}{@{}p{0.28\linewidth}p{0.68\linewidth}@{}}
        \toprule
        #2
        \bottomrule
    \end{tabular}\par\medskip
}

\notationtable{General notation}{
    $\Delta(\cX)$ & Probability simplex over a measurable space $\cX$\\
    $\cH_{\cX}(P)$ & Shannon entropy; abbreviated $\cH$ when $\cX$ is clear\\
    $\norm{x}_P$ & Mahalanobis (semi-)norm $\sqrt{x^\top P x}$, $P \succeq 0$\\
    $x^{\otimes 2}$ & Outer product $xx^\top$\\
    $\KL(p \| q)$ & Kullback--Leibler divergence\\
    $\Image M$, $\spn$, $\rank$ & Range, linear span, rank\\
    $M^{\dagger}$& Moore--Penrose pseudo-inverse of a PSD matrix $M$\\
    $P_{\cG}$ & Orthogonal projector onto $\cG$\\
    $\mathbb{S}_{+}(\cG)$ & PSD ambient operators supported on $\cG$\\
    $\mathbb{S}_{++}(\cG)$ & Positive definite ambient operators supported on $\cG$\\
    $\dualnorm{u}{A}$ & Dual norm in $\cG$\\
    $\det_{\cG}(A)$ & Determinant of the restriction of $A$ to $\cG$\\
    $\ell(x)$& Logistic function $(1 + e^{-x})^{-1}$\\
    $\norm{v}_{\text{span}}$& Span seminorm $\max_i v_i - \min_j v_j$\\
}

\notationtable{Game and policies}{
    $\cA, \cB$ & Finite action spaces of leader and follower\\
    $\mu, \nu$ & Policies of the leader and the follower\\
    $\eta > 0$ & Inverse temperature (bounded rationality)\\
    $\nu^*_{\mu}(\tilde\omega)$ & Follower's regularised best response, Equation~\eqref{eq:softmax-static-star}\\
    $J^{\text{leader}}_r(\mu)$ & Leader value $J_{r^1}(\mu, \nu^*_{\mu}(r^2))$\\
    $\SubOpt_r(\mu)$ & Stackelberg sub-optimality\\
    $\NashGap_r(\mu, \nu)$ & Maximal regret over the two players\\
    $\Pi_{\mu}, \Pi_{\nu}$& Leader and follower policy spaces\\
}

\notationtable{Linear reward model}{
    $\phi(a, b)$ & Known feature map, $\max_{a,b}\norm{\phi(a,b)}_2 \leq 1$\\
    $\omega$, $\hat\omega$, $\hat\omega_k$ & True / final / round-$k$ reward coefficients\\
    $S$, $\Omega$ & Bound $\norm{\omega}_2 \leq S$ and the ball $\{\tilde\omega : \norm{\tilde\omega}_2 \leq S\}$\\
    $R_{\text{max}}$ & Upper bound $\norm{r_1}_{\infty}$\\
    $\psi_{\mu}(b)$ & Averaged follower feature $\sum_a \mu(a)\phi(a, b)$\\
    $\overline{\psi}_{\mu}(\tilde\omega)$ & Expected feature $\E_{b \sim \nu^*_{\mu}(\tilde\omega)}\psi_{\mu}(b)$\\
}

\notationtable{Identifiable subspace}{
    $g_{\mu}(b)$ & Uniform-centred feature $\psi_{\mu}(b) - \frac{1}{|\cB|}\sum_{b'}\psi_{\mu}(b')$\\
    $\cG$ & $\spn\{g_{\mu}(b)\}_{\mu, b} = \spn\{\psi_{\mu}(b) - \psi_{\mu}(d)\}_{\mu, b, d}$\\
    $d^{\diamond}$ & $\dim \cG$: the reward directions the follower can distinguish\\
    $d^{*}$ & $\dim \spn\{\psi_{\mu}(b)\}_{\mu, b}$, so $d^{\diamond} \leq d^{*} \leq d$\\
}

\notationtable{Information and estimation}{
    $N$, $n_{\text{warm}}$, $K$ & Total, warm-up, and online interaction counts, with $K=N-n_{\mathrm{warm}}$\\
    $N_k$ & Total sample count $n_{\mathrm{warm}}+k$ at online checkpoint $k$\\
    $\cD$, $\cD_{\mathrm{warm}}$, $\cD_k$ & Full, warm-up, and online historical data\\
    $l$, $\cL$, $\cL_k$ & Instantaneous, cumulative and online cumulative log-likelihood\\
    $\cI_{\mu}(\tilde\omega)$ & Fisher information $\eta^2\Cov_{b \sim \nu^*_{\mu}(\tilde\omega)}[\psi_{\mu}(b)]$\\
    $G^\lambda$, $G^\lambda_k$ & Fisher and online cumulative regularized Fisher information\\
    $\Gamma^{\lambda}_\tau(\tilde\omega)$ & Residual (integrated) information, Theorem~\ref{thm:CS}\\
    $\beta_\tau(\delta)$ & MLE confidence radius, Theorem~\ref{thm:CS}\\
    $d_{\mu}(\omega_1, \omega_2)$& Span seminorm $\norm{\inner{\psi_{\mu}(\cdot)}{\omega_1 - \omega_2}}_{\text{span}}$\\
    $d_{*}(\omega_1, \omega_2)$& $\max_{\tilde\mu} d_{\tilde\mu}(\omega_1, \omega_2)$; bounded by $4S$\\
    $\chi(x)$& $2(e^{x/2} - 1)/x$ for $x > 0$, $\chi(0) = 1$, increasing\\
    $\kappa$& Coverage constant $2e^{2\eta S}/\eta^2$, Lemma~\ref{lem:design-curvature}\\
}

\notationtable{Experimental design}{
    $M_{\mu}$ & Grouped arm $\frac{1}{|\cB|}\sum_b g_{\mu}(b)g_{\mu}(b)^\top \succeq 0$\\
    $V(w)$ & Design uniform covariance $\int_{\Pi_\mu} M_{\mu} w(\ud \mu)$\\
    admissible, $\cW_{\cG}$ & $w$ with $\Image V(w) = \cG$; $\cW_{\cG}$ is the set of such designs\\
    $w_D$ & D-optimal design $\argmax_{w}\log\det\nolimits_{\cG}V(w)$\\
    $\epsilon$ & Design / rounding tolerance\\
    $n_{\text{warm}}$, $\cD_{\text{warm}}$ & Warm-up budget and its sample set\\
    $V_{{\text{warm}}}$ & Uniform allocation coverage matrix\\
}

\section{Hardness results for strategic identifiability of rewards}\label{appx:hardness}

We provide in this section the proofs of the impossibility and lower bound results present in the main text, i.e. Lemma~\ref{lem:reward-id-subopti} and Theorem~\ref{thm:lower}.

\subsection{Hardness of learning from a fixed policy}

We start by proving the hardness of identifying rewards that maintain the optimal leader by querying a single pair of policies.

\lemRewardIdSubopt*
\begin{proof}
Let $\cM_{+}, \cM_{-}$ be two Stackelberg games with $\cA =\{a_1,a_2\}$, $\cB=\{b_1,b_2\}$, and the tabular feature map $\phi(a_i,b_j)=e_{ij}\in\mathbb{R}^4$.

Fix an arbitrary leader policy $\mu_{\text{ref}}$ and let us write $p_{\text{ref}} = \mu_{\text{ref}}(a_1)$ and $1 - p_{\text{ref}} = \mu_{\text{ref}}(a_2)$.

Define two follower rewards $r_2^+$ and $r_2^-$ by setting
\[
    s_1 \ceq 1 - p_{\text{ref}},
    \qquad
    s_2 \ceq - p_{\text{ref}},
\]
and
\[
    r_2^+(a_i,b_1)=\frac12+\frac{s_i}{2},
    \qquad
    r_2^+(a_i,b_2)=\frac12-\frac{s_i}{2},
\]
while
\[
    r_2^-(a_i,b_1)=\frac12-\frac{s_i}{2},
    \qquad
    r_2^-(a_i,b_2)=\frac12+\frac{s_i}{2}.
\]
All rewards lie in $[0,1]$, and since the features are tabular, both
$r_2^+$ and $r_2^-$ are linear in $\phi$.

For any $q \in [0, 1]$, we denote by $\mu_q$ the leader policy such that $\mu_q(a_1)=q$. Under the policy $\mu_q$, the averaged reward differences are as follows:
\begin{itemize}
    \item In $\cM_+$:
    \begin{equation}
        \sum_a \mu_q(a) \left(r_2^+(a,b_1) - r_2^+(a,b_2)\right) = q - p_{\text{ref}}
    \end{equation}
    \item In $\cM_{-}$:
    \begin{equation}
        \sum_a \mu_q(a)\left(r_2^-(a,b_1) - r_2^-(a,b_2)\right) = p_{\text{ref}} - q
    \end{equation}
\end{itemize}

Therefore we have the following entropy-regularized best-responses,
\begin{equation}
    \nu^*_{q,+}(b_1) = \ell(\eta(q - p_{\text{ref}})), \qquad \nu^*_{q,-}(b_1) = \ell(\eta(p_{\text{ref}} - q)),
\end{equation}
where $\ell$ is the standard logistic function $\ell(x)=(1+\exp(-x))^{-1}$.

At the queried policy $q = p_{\text{ref}}$, both games induce
\[
    \nu^*_{p_{\text{ref}},+}(b_1)
    =
    \nu^*_{p_{\text{ref}},-}(b_1)
    =
    \frac12.
\]
Hence the full observation law is identical in the two games, for every
number of samples $K$.

Now choose the leader reward $r_1(a,b) = \mathbf{1}\{b=b_1\}$ such that the leader values are:
\begin{equation}
J^{\text{leader}}_{r^\pm}(\mu_q) = \nu^*_{q,\pm}(b_1) = \ell(\pm\eta(q - p_{\text{ref}})).
\end{equation}

As $\ell$ is strictly increasing, the optimal leaders of $\cM_+$ and $\cM_-$ are therefore $\mu_{1}$ and $\mu_{0}$, respectively, with corresponding optimal values $\ell(\eta(1-p_{\text{ref}}))$ and $\ell(\eta p_{\text{ref}})$. Using this fact, for any $q$,
\begin{align*}
    \SubOpt_{r^+}(\mu_q) + \SubOpt_{r^-}(\mu_q)
    &= \ell(\eta(1-p_{\text{ref}})) - \ell(\eta(q-p_{\text{ref}}))\\
    &\qquad \qquad + \ell(\eta p_{\text{ref}}) - \ell(-\eta(q-p_{\text{ref}}))\\
    &= \ell(\eta(1 - p_{\text{ref}})) + \ell(\eta p_{\text{ref}}) - 1\\
    &\geq \ell(\eta)-\frac12,
\end{align*}
where the last inequality follows because $p \mapsto \ell(\eta p)+\ell(\eta(1-p))-1$ is symmetric and minimized at $p=0$ or $p=1$.

Hence at least one of $\SubOpt_+(q)$ and $\SubOpt_-(q)$ is larger than $\frac12\left(\ell(\eta)-\frac12\right)$, therefore strictly larger than $c_{\eta} \ceq \frac13\left(\ell(\eta)-\frac12\right)$.

Since the two games induce exactly the same observation distribution at
the only queried policy, any algorithm has the same output distribution
under the two games. Let
\begin{equation}
    \mathcal{E}_+ \ceq \{\SubOpt_{r_+}(\hat q)\le c_{\eta}\},
    \qquad
    \mathcal{E}_- \ceq \{\SubOpt_{r_-}(\hat q)\le c_{\eta}\},
\end{equation}
with $\mathcal{E}_+ \cap \mathcal{E}_-=\varnothing$ from the previous remark.

Since $\hat q$ has the same law in the two games,
\begin{equation}
    2 \cdot \min\{\mathbb{P}_+(\mathcal{E}_+),\mathbb{P}_-(\mathcal{E}_-)\} \leq \mathbb{P}_{\cM_+}(\mathcal{E}_+) + \mathbb{P}_{\cM_-}(\mathcal{E}_-) \leq 1.
\end{equation}

So no algorithm observing only samples from the single policy $\mu_{\text{ref}}$
can guarantee Stackelberg suboptimality at most $c_{\eta}$ with probability
larger than $1/2$ uniformly over the two games.
\end{proof}

\subsection{A statistical lower bound for adaptive algorithms}

The usual strategy for a lower bound is to exhibit a problem instance where no algorithm can quickly distinguish between edge cases. We define such a hard instance as follows.

\begin{restatable}[Hard instance]{definition}{defHardInstance}\label{def:hard-instance}
    Let $\cA = [d]$, $\cB = \{b_1, b_2\}$, $R_{\max} \in \R_+$, and
    \begin{equation}
        \phi(a_i, b_1) = 0, \qquad \phi(a_i, b_2) = e_i, \qquad r_1(a, b) = R_{\max} \cdot \mathbf{1}\{b = b_2\}.
    \end{equation}
    We identify $\mu \in \Delta_d$ and write $\ell(x) \ceq (1 + e^{-x})^{-1}$ for the logistic function.
\end{restatable}

For simplicity of the analysis, we first state some properties of this instance.

\begin{proposition}[Properties of the hard instance]\label{prop:instance}
    Reusing the notations of the hard instance (Definition~\ref{def:hard-instance}), for every leader $\mu \in \Delta_d$ and $\tilde\omega \in \Omega$:
    \begin{itemize}
        \item Structural simplifications
        \begin{gather}
            \psi_\mu(b_1) = 0, \qquad \psi_\mu(b_2) = \mu, \qquad \norm{\psi_\mu(b)}_2 \leq 1,\\ \cG = \R^d, \qquad d^\diamond = d, \qquad d_\mu(0, \varepsilon e_i) = \varepsilon\mu_i \ \ (\varepsilon \geq 0).
        \end{gather}
        \item Implied best-response
        \begin{gather}
            \nu^*_\mu(\tilde\omega)(b_2) = p_\mu(\tilde\omega).
        \end{gather}
        \item Performance
        \begin{gather}
            J^{\mathrm{leader}}_{\tilde\omega}(\mu) = R_{\max} \cdot p_\mu(\tilde\omega), \qquad \SubOpt_{\tilde\omega}(\mu) = R_{\max}\left[\ell\left(\eta \max_j \tilde\omega_j\right) - p_\mu(\tilde\omega)\right].
        \end{gather}
    \end{itemize}
    where we denote $p_\mu(\tilde\omega) \ceq \ell(\eta\inner{\mu}{\tilde\omega})$.
\end{proposition}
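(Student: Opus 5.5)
The plan is to verify each bullet directly from Definition~\ref{def:hard-instance} and Lemma~\ref{lem:softmax-br}. The computations should be routine; the only spots needing care are the span of $\cG$ and the maximization over the simplex.

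\textbf{Structural simplifications.} By definition, $\psi_\mu(b) = \sum_{i} \mu_i \phi(a_i, b)$. This gives $\psi_\mu(b_1) = 0$ and $\psi_\mu(b_2) = \sum_i \mu_i e_i = \mu$. Since $\mu$ lies in the simplex, $\norm{\mu}_2 \leq \norm{\mu}_1 = 1$.

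For $\cG$, I will use the equivalent description $\cG = \spn\{\psi_\mu(b) - \psi_\mu(d)\}_{\mu,b,d}$. With two follower actions, this span is generated by the vectors $\pm\mu$. Taking $\mu = \delta_{a_i}$ yields every $e_i$, so $\cG = \R^d$ and $d^\diamond = d$. To be safe I will also check the uniform-centered definition: $g_\mu(b_2) = \mu/2$, which spans the same space.

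For the span seminorm, the logit differences under $\varepsilon e_i - 0$ are $0$ at $b_1$ and $\varepsilon \mu_i$ at $b_2$. Hence $d_\mu(0,\varepsilon e_i) = |\varepsilon \mu_i| = \varepsilon\mu_i$ whenever $\varepsilon \geq 0$. Note that $d_\mu$ is symmetric in its arguments because it is a span seminorm.

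\textbf{Best response.} By Lemma~\ref{lem:softmax-br}, $\nu^*_\mu(\tilde\omega)(b_2) = e^{\eta\inner{\mu}{\tilde\omega}}/(1 + e^{\eta\inner{\mu}{\tilde\omega}}) = \ell(\eta\inner{\mu}{\tilde\omega}) = p_\mu(\tilde\omega)$.

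\textbf{Performance.} Since $r_1 = R_{\max}\mathbf{1}\{b=b_2\}$, we get $J^{\mathrm{leader}}_{\tilde\omega}(\mu) = R_{\max}\,\nu^*_\mu(\tilde\omega)(b_2) = R_{\max}\, p_\mu(\tilde\omega)$.

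For the sub-optimality, the key step is the maximization over $\mu$. The map $\ell$ is strictly increasing, and $\mu \mapsto \inner{\mu}{\tilde\omega}$ is linear on the simplex, so its maximum $\max_j \tilde\omega_j$ is attained at a vertex $\delta_{a_j}$. Therefore $\max_{\mu^*} J^{\mathrm{leader}}_{\tilde\omega}(\mu^*) = R_{\max}\,\ell(\eta\max_j\tilde\omega_j)$. Subtracting $J^{\mathrm{leader}}_{\tilde\omega}(\mu)$ gives the stated formula for $\SubOpt_{\tilde\omega}(\mu)$.

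I do not expect a real obstacle. The only subtleties are that the $\cG$ definition via uniform centering agrees with the span of pairwise differences, and that the vertex maximization remains valid when some coordinates of $\tilde\omega$ are negative.
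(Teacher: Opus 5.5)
Your proposal is correct and follows the paper's proof. The paper derives the best response from the two-action logistic form and the performance formulas from the monotonicity of $\ell$ together with linearity of $\mu \mapsto \inner{\mu}{\tilde\omega}$ on the simplex, exactly as you do, and your explicit checks of the structural items simply fill in what the paper leaves as following directly from the definitions.
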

\begin{proof}
    We only prove some properties, the rest directly follows from definition.

    \underline{Implied best-response}

    Note that since $|\cB| = 2$, the best-response is the following logistic law:
    \begin{equation}\label{eq:logistic-br}
        \nu^*_\mu(\tilde\omega)(b_2)
        = \frac{\exp\left(\eta\inner{\mu}{\tilde\omega}\right)}{1 + \exp\left(\eta\inner{\mu}{\tilde\omega}\right)}
        = \ell\left(\eta\inner{\mu}{\tilde\omega}\right)
        = p_\mu(\tilde\omega).
    \end{equation}

    \underline{Performance}
    
    Given Equation~\eqref{eq:logistic-br}, since $\ell$ is increasing and $\mu \mapsto \inner{\mu}{\tilde\omega}$ is linear on the simplex,
    \begin{equation}
        \max_\mu J^{\mathrm{leader}}_{\tilde\omega}(\mu) = R_{\max}\cdot\ell\left(\eta\max_j\tilde\omega_j\right).
    \end{equation}

    This implies the $\SubOpt_{\tilde\omega}(\mu)$ equality.
\end{proof}

\noindent We can now state the lower bound as follows.
\thmLower*
\begin{proof}
    Consider the hard instance of Definition~\ref{def:hard-instance} and note that in this environment, $d = d^\diamond$.
    
    Let $\mathbb{P}^0$ denote the law of the trajectory under the null hypothesis $\omega^{(0)} \ceq 0$. We use two budget constraints under $\mathbb{P}^0$. First, since $\norm{\mu_k}_2 \leq \norm{\mu_k}_1 = 1$,
    \begin{equation}\label{eq:budget-1}
        \E^0\left[\sum_{k \leq K}\norm{\mu_k}_2^2\right] = \sum_{i=1}^{d}\E^0\left[\sum_{k \leq K}\mu_{k, i}^2\right] \leq K,
    \end{equation}
    which corresponds to a bound on the collected Fisher information\footnote{Indeed, one can show that $\E^0\left[\tr\left(\sum_{k \leq K}\cI_{\mu_k}(0)\right)\right] = \frac{\eta^2}{4}\E^0\left[\sum_{k \leq K}\norm{\mu_k}_2^2\right]$.}.
    
    Second, by Markov's inequality and $\sum_i \hat\mu_i = 1$,
    \begin{equation}\label{eq:budget-2}
        \sum_{i=1}^{d}\mathbb{P}^0\left(\hat\mu_i \geq \tfrac{1}{2}\right) \leq 2\sum_{i=1}^{d}\E^0\left[\hat\mu_i\right] = 2.
    \end{equation}
    By Equation~\eqref{eq:budget-1} at most $d/4$ indices satisfy $\E^0[\sum_k \mu_{k,i}^2] > 4K/d$, and by Equation~\eqref{eq:budget-2} at most $d/4$ indices satisfy $\mathbb{P}^0(\hat\mu_i \geq 1/2) > 8/d$. Since $d/4 + d/4 < d$, there exists $i^\star \in [d]$ with
    \begin{equation}\label{eq:istar}
        \E^0\left[\sum_{k \leq K}\mu_{k, i^\star}^2\right] \leq \frac{4K}{d}
        \qquad\text{and}\qquad
        \mathbb{P}^0\left(\hat\mu_{i^\star} \geq \tfrac{1}{2}\right) \leq \frac{8}{d}.
    \end{equation}
    
    \noindent \underline{Indistinguishability.} By the decomposition of the KL (Lemma~\ref{lem:div-decomp}) and the KL of the softmax (Lemma~\ref{lem:kl-softmax}), we have:
    \begin{align*}
        \KL\left(\mathbb{P}^0\,\big\|\,\mathbb{P}^{(i^\star)}\right)
        &\leq \frac{\eta^2}{8}\sum_{k \leq K}\E^0\left[d_{\mu_k}\!\left(0, \varepsilon e_{i^\star}\right)^2\right] &&& (\text{Lemmas}~\ref{lem:div-decomp},~\ref{lem:kl-softmax})\\
        &= \frac{\eta^2\varepsilon^2}{8}\,\E^0\left[\sum_{k \leq K}\mu_{k, i^\star}^2\right] &&& (\text{Proposition}~\ref{prop:instance})\\
        &\leq \frac{\eta^2\varepsilon^2 K}{2d} &&& (\text{Equation}~\eqref{eq:istar})\\
        &= \frac{1}{8},
    \end{align*}
    
    where the last step uses $\eta^2\varepsilon^2 = d/(4K)$. Pinsker's inequality gives $\TV(\mathbb{P}^0, \mathbb{P}^{(i^\star)}) \leq \sqrt{\KL/2} \leq 1/4$, so with Equation~\eqref{eq:istar} and $d \geq 32$,
    \begin{equation}
        \mathbb{P}^{(i^\star)}\left(\hat\mu_{i^\star} \geq \tfrac{1}{2}\right) \leq \frac{8}{d} + \frac{1}{4} \leq \frac{1}{2},
        \qquad\text{hence}\qquad
        \mathbb{P}^{(i^\star)}\left(\hat\mu_{i^\star} < \tfrac{1}{2}\right) \geq \frac{1}{2}.
    \end{equation}

    \noindent \underline{Performance implication.} From Proposition~\ref{prop:instance} we have $\SubOpt_{\omega^{(i^\star)}}(\mu) = R_{\max}\left[\ell(\eta\varepsilon) - \ell(\eta\varepsilon\mu_{i^\star})\right]$, where we note that since $K \geq 4d$, 
    \begin{equation}
        \eta\varepsilon = \frac{1}{2}\sqrt{d/K} \leq \frac{1}{4}.
    \end{equation}

    Therefore, on the event $\{\hat\mu_{i^\star} < 1/2\}$ we have,
    \begin{align*}
        \SubOpt_{\omega^{(i^\star)}}(\hat\mu)
        &\overset{(i)}{\geq} R_{\max}\left[\ell(\eta\varepsilon) - \ell\left(\tfrac{\eta\varepsilon}{2}\right)\right]\\
        &\overset{(ii)}{\geq} \frac{R_{\max}\eta\varepsilon}{2}\inf_{|x| \leq 1}\ell'(x)\\
        &= \frac{R_{\max}\eta\varepsilon}{2}\cdot\frac{e}{(1+e)^2}\\
        &\geq \frac{R_{\max}\eta\varepsilon}{11}.
    \end{align*}
    where $(i)$ and $(ii)$ come from the monotonicity of $\ell$ and the mean value theorem, respectively.
    
    Combining with the indistinguishability bound above we get:
    \begin{equation}
        \E^{(i^\star)}\left[\SubOpt_{\omega^{(i^\star)}}(\hat\mu)\right] \geq \frac{1}{2}\cdot\frac{R_{\max}\eta\varepsilon}{11} = \frac{R_{\max}\eta\varepsilon}{22} = \frac{1}{44}R_{\max}\sqrt{\frac{d}{K}},
    \end{equation}
    where $d = d^\diamond$ (Proposition~\ref{prop:instance}).
\end{proof}

\section{Self-concordance properties}\label{appx:self-concordance}

\subsection{Self-concordance of the log-likelihood in the span semi-norm}\label{appx:self-concordance-logits}

In this section, we derive self-concordance properties for the log-likelihood function, inspired by \citet{lee2025improved} but expressed in the span semi-norm.

Extending the standard self-concordance property \citep{bach2010self}, we define a self-concordance-like function in this semi-norm as follows.

\begin{definition}[Self-concordance in the span semi-norm]
    A function $f: \R^m \to \R$ is an $M$-self-concordant-like function in the span semi-norm if:
    \begin{equation}
        |\phi'''(s)| \leq M \norm{ b}_{\text{span}} \phi''(s), \qquad \forall s \in \R,
    \end{equation}
    where $\phi(s) \ceq f(a + sb)$ for any $a, b \in \R^m$.
\end{definition}

Recall the instantaneous log-likelihood:
\begin{equation}
    l(\tilde\omega | b_\tau, \mu_\tau) = \eta \langle \psi_{\mu_\tau}(b_\tau), \tilde{\omega} \rangle - \log Z_{\mu_\tau}(\tilde\omega).
\end{equation}

Now, let $-f_\tau$ be a reparameterization with logits $z \in \R^{|\cB|}$:
\begin{equation}
    f_\tau(z) = \log\left(\sum_{b'} e^{\eta z_{b'}}\right) - \eta z_{b_\tau},
\end{equation}
and note that $f_\tau$ is invariant to constant shifts: $f_\tau(z+a \cdot 1) = f_\tau(z)$ for all $z \in \R^{|\cB|}, \ a \in \R$.

Then, we have the following Lemma.

\begin{lemma}\label{lem:self-concordant-f}
    The function $f_\tau$ is $\eta$-self-concordant-like in the span semi-norm.
\end{lemma}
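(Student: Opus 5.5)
The plan is to reduce to a one-dimensional computation along a line and express the third derivative of the log-partition function as a third central moment of a softmax distribution. Fix $a, v \in \R^{|\cB|}$ and set $\varphi(s) \ceq f_\tau(a + s v)$. The term $-\eta z_{b_\tau}$ is linear in $z$, so it contributes nothing to $\varphi''$ or $\varphi'''$. Writing $p_s \ceq \softmax(\eta(a + s v))$ and $X \ceq v_B$ with $B \sim p_s$, the standard cumulant identities for $\log\sum_{b'} e^{\eta z_{b'}}$ give
\begin{equation}
    \varphi''(s) = \eta^2 \Var_{p_s}[X], \qquad \varphi'''(s) = \eta^3 \E_{p_s}\left[(X - \E_{p_s} X)^3\right].
\end{equation}
These follow by differentiating the log-partition function three times in $s$: the first derivative is $\eta\E[X]$, and each further derivative of an expectation $\E_{p_s}[h]$ yields $\eta\Cov_{p_s}(h, X)$.

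Next I bound the third central moment by the span times the variance. Let $Y \ceq X - \E_{p_s} X$. Then
\begin{equation}
    |\E[Y^3]| \leq \E\left[|Y| \, Y^2\right] \leq \left(\max_b |v_b - \E X|\right)\E[Y^2].
\end{equation}
Since $\E X$ is a convex combination of the entries of $v$, it lies in $[\min_j v_j, \max_i v_i]$. Hence $|v_b - \E X| \leq \max_i v_i - \min_j v_j = \norm{v}_{\text{span}}$ for every $b$. Combining,
\begin{equation}
    |\varphi'''(s)| \leq \eta^3 \norm{v}_{\text{span}} \Var[X] = \eta \norm{v}_{\text{span}} \varphi''(s),
\end{equation}
which is exactly the $\eta$-self-concordance-like property in the span semi-norm. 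Shift invariance is automatic here, since both sides are unchanged when a constant is added to $v$.

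The only delicate point is getting the derivative formulas right, including the powers of $\eta$ from the chain rule through $\eta z$. Once these are verified, the moment bound is elementary.

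A possible refinement is to replace the crude bound $|Y| \leq \norm{v}_{\text{span}}$ by a sharper constant. It is not needed, because the claimed constant is exactly $\eta$.
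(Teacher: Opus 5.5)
Your proof is correct and matches the paper's argument: restrict $f_\tau$ to a line, write $\varphi''$ and $\varphi'''$ as $\eta^2$ times the variance and $\eta^3$ times the third central moment of $v_B$ under the softmax law, and bound the third central moment by $\max_b |v_b - \E X|$ times the variance. The paper calls this last step H\"older's inequality, and the fact that $\E X$ lies between $\min_j v_j$ and $\max_i v_i$ then yields the span semi-norm with constant $\eta$.
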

\begin{proof}
    For any $a, b \in \R^{|\cB|}$, let $\phi(s) \ceq f_\tau(a + sb)$ and $p_i(s) \ceq e^{\eta(a_i + sb_i)} / \sum_{j} e^{\eta(a_j + sb_j)}$ for every $i \in [|\cB|]$. Further, let $X_s$ be the discrete random variable such that for all $i$, $\mathbb{P}(X_s = i) = p_i(s)$.

    Then,
    \begin{equation}
        \phi''(s) = \eta^2 \Var(b_{X_s}), \qquad \phi'''(s) = \eta^3 \E\left[\left(b_{X_s} - \E\left[b_{X_s}\right]\right)^3\right].
    \end{equation}

    By Hölder's inequality we then conclude:
    \begin{equation}
        |\phi'''(s)|
        \leq \eta \norm{ b - \E[b_{X_s}] }_{\infty} \phi''(s)
        \leq \eta \norm*{ b }_{\text{span}} \phi''(s),
    \end{equation}
    and therefore $f_\tau$ is $\eta$-self-concordant-like.
\end{proof}

Now similarly to Proposition B.4 in \citet{lee2025improved}, we can derive the following.

\begin{lemma}\label{lem:multi-concordance}
    For any $\tau \in [N]$ and any $x, u_1, u_2 \in \R^{|\cB|}$, we have:
    \begin{equation}
        \left\lvert D^3f_\tau(x)[u_1, u_2, u_2] \right\rvert \leq \eta \norm{ u_1 }_{\text{span}} \norm{ u_2 }^2_{\nabla^2 f_\tau(x)},
    \end{equation}
    where we define $D^3f_\tau(x)[u_1, u_2, u_3] \ceq \langle D^3f_\tau(x)[u_1]u_2, u_3\rangle$.
\end{lemma}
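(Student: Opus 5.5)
We compute the third derivative tensor of $f_\tau$ explicitly rather than pass through a general polarization argument, since $f_\tau$ is a log-partition function minus a linear term. The linear term $-\eta z_{b_\tau}$ has vanishing second and third derivatives. So $\nabla^2 f_\tau(x)$ and $D^3 f_\tau(x)$ coincide with those of $A(z) \ceq \log\sum_{b'} e^{\eta z_{b'}}$. Let $p = \softmax(\eta x)$ and let $X$ be the random index with law $p$. For directions $u_1,u_2,u_3$, the cumulant formulas give
\begin{equation}
    D^2A(x)[u,v] = \eta^2\,\E\left[(u_X - \E u_X)(v_X - \E v_X)\right],
\end{equation}
\begin{equation}
    D^3A(x)[u_1,u_2,u_3] = \eta^3\,\E\left[(u_{1,X} - \E u_{1,X})(u_{2,X} - \E u_{2,X})(u_{3,X} - \E u_{3,X})\right].
\end{equation}
This is the mixed third central moment. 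It follows by differentiating the first identity (the same computation as in Lemma~\ref{lem:self-concordant-f}, now in several directions), or by polarizing the one-directional third cumulant, which is symmetric and trilinear.

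Setting $u_3 = u_2$ gives
\begin{equation}
    D^3f_\tau(x)[u_1,u_2,u_2] = \eta^3\,\E\left[(u_{1,X} - \E u_{1,X})\,(u_{2,X} - \E u_{2,X})^2\right].
\end{equation}
Because the squared factor is nonnegative, we pull out the supremum of the first factor. This is a Hölder step with $L^\infty$ and $L^1$:
\begin{equation}
    \left|D^3f_\tau(x)[u_1,u_2,u_2]\right| \leq \eta\,\norm*{u_1 - \E u_{1,X}\1}_\infty\cdot \eta^2\,\Var(u_{2,X}).
\end{equation}
Since $\E u_{1,X}$ lies in $[\min_i u_{1,i},\max_i u_{1,i}]$, every centered entry is bounded in absolute value by $\max_i u_{1,i} - \min_j u_{1,j} = \norm{u_1}_{\text{span}}$. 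The second factor is exactly $\norm{u_2}^2_{\nabla^2 f_\tau(x)}$ by the Hessian formula, which proves the claim.

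The only step needing care is justifying the mixed third-derivative formula, i.e. that the trilinear form is the mixed central third moment. I would verify it by differentiating $\nabla^2 A(x) = \eta^2(\mathrm{diag}(p) - pp^\top)$ along $u_1$, using $\partial_{u_1} p_i = \eta p_i(u_{1,i} - \E u_{1,X})$. Expanding and collecting terms yields the centered product. The rest is routine.
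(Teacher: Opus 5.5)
Your proposal is correct and follows essentially the same route as the paper. The paper also differentiates the Hessian quadratic form $\eta^2\Var((u_2)_{X_s})$ along $u_1$, obtains the mixed central moment $\eta^3\E[((u_2)_{X_0}-\E[(u_2)_{X_0}])^2((u_1)_{X_0}-\E[(u_1)_{X_0}])]$, and applies Hölder's inequality with the sup-norm of the centered $u_1$ bounded by $\norm{u_1}_{\text{span}}$; you differ only cosmetically, by first deriving the full trilinear mixed-moment formula and then setting $u_3 = u_2$.
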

\begin{proof}
    Fix $x, u_1, u_2 \in \R^{|\cB|}$ and define for every $s \in \R$ and $i \in [|\cB|]$ the function $p_i(s) \ceq e^{\eta(x_i + su_{1, i})} / \sum_{j} e^{\eta(x_j + su_{1, j})}$. 
    
    Additionally, let $X_s$ a discrete random variable such that $\mathbb{P}(X_s = i) = p_i(s)$ and note that for any $g: \R \to \R$,
    \begin{equation}
        \frac{d}{ds} \E[g(X_s)] = \eta \E[g(X_s) \cdot \left((u_1)_{X_s} - \E[(u_1)_{X_s}]\right)].
    \end{equation}

    Hence,
    \begin{align*}
        D^3 f_\tau(x)[u_1, u_2, u_2]
        &= \eta^2 \left. \frac{d}{ds}\Var((u_2)_{X_s}) \right|_{s=0}\\
        &= \eta^2 \left. \frac{d}{ds}\left(\E[(u_2)^2_{X_s}] - \E[(u_2)_{X_s}]^2\right) \right|_{s=0}\\
        &= \eta^3 \E\left[\left((u_2)_{X_0} - \E[(u_2)_{X_0}]\right)^2\left((u_1)_{X_0} - \E[(u_1)_{X_0}]\right)\right].
    \end{align*}

    Therefore by Hölder's inequality,
    \begin{equation}
        \left|D^3 f_\tau(x)[u_1, u_2, u_2]\right| \leq \eta^3 \norm{ u_1 }_{\text{span}} \Var\left((u_2)_{X_0}\right) = \eta \norm{ u_1 }_{\text{span}} \norm{ u_2 }^2_{\nabla^2 f_\tau(x)}.
    \end{equation}
\end{proof}

Finally, we derive the following two lemmas (variants of Propositions B.5 and B.6 of \citet{lee2025improved}), that we will use throughout the analysis of both algorithms.

\begin{lemma}\label{lem:instantaneous-control}
    For all $\tau \in [N]$ and any $z_1, z_2 \in \R^{|\cB|}$,
    \begin{equation}
        e^{-\eta \norm{ z_1 - z_2 }_{\text{span}}} \nabla^2 f_\tau(z_1) \preceq \nabla^2 f_\tau(z_2) \preceq e^{\eta \norm{ z_1 - z_2 }_{\text{span}}} \nabla^2 f_\tau(z_1).
    \end{equation}
\end{lemma}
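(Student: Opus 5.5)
The plan is to interpolate between $z_1$ and $z_2$ along the segment and integrate the third-derivative bound of Lemma~\ref{lem:multi-concordance}, in the usual self-concordance style.

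Fix $z_1, z_2 \in \R^{|\cB|}$ and a direction $u \in \R^{|\cB|}$. Set $v \ceq z_2 - z_1$, $x(s) \ceq z_1 + s v$ for $s \in [0,1]$, and
\begin{equation}
    h(s) \ceq \norm{u}^2_{\nabla^2 f_\tau(x(s))} = D^2 f_\tau(x(s))[u,u] = \eta^2 \Var\left(u_{X_s}\right),
\end{equation}
where $X_s$ is the softmax variable from the proofs of Lemmas~\ref{lem:self-concordant-f} and~\ref{lem:multi-concordance}. Differentiating gives $h'(s) = D^3 f_\tau(x(s))[v, u, u]$. Lemma~\ref{lem:multi-concordance}, applied with $u_1 = v$ and $u_2 = u$, then yields
\begin{equation}
    |h'(s)| \leq \eta \norm{v}_{\text{span}}\, h(s), \qquad \forall s \in [0,1].
\end{equation}

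The next step is Grönwall's inequality. If $h(0) > 0$, then $h$ stays positive, because the linear differential inequality forbids reaching zero in finite time. In that case $|(\log h)'| \leq \eta\norm{v}_{\text{span}}$, and integrating over $[0,1]$ gives
\begin{equation}
    e^{-\eta\norm{v}_{\text{span}}} h(0) \leq h(1) \leq e^{\eta\norm{v}_{\text{span}}} h(0).
\end{equation}
This already covers most directions. The case $h(0)=0$ cannot be handled by dividing by $h$, so it needs a separate argument. Since softmax probabilities are strictly positive, $\Var(u_{X_s}) = 0$ holds exactly when $u$ is constant, i.e.\ $u \in \spn\{\1\}$, independently of $s$. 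Hence $h \equiv 0$ on $[0,1]$ and both inequalities hold trivially.

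Since $u$ was arbitrary, the bound $e^{-\eta\norm{v}_{\text{span}}} u^\top \nabla^2 f_\tau(z_1) u \leq u^\top \nabla^2 f_\tau(z_2) u \leq e^{\eta\norm{v}_{\text{span}}} u^\top \nabla^2 f_\tau(z_1) u$ holds for every $u$, which is exactly the claimed Loewner ordering. Finally, $\norm{z_1 - z_2}_{\text{span}} = \norm{z_2 - z_1}_{\text{span}}$, so the exponent matches the statement.

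I expect no step to be a real obstacle; the only care needed is the degenerate-direction case $h(0)=0$, dealt with above through strict positivity of the softmax.
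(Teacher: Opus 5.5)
Your proof is correct and follows the same argument as the paper. Both restrict the quadratic form $u^\top \nabla^2 f_\tau(\cdot)\, u$ to the segment from $z_1$ to $z_2$, use Lemma~\ref{lem:multi-concordance} to bound its derivative by $\eta \norm{z_2 - z_1}_{\text{span}}$ times itself, and integrate the logarithmic derivative. Your separate treatment of the constant directions $u \in \spn\{\1\}$ is a small improvement: there the form vanishes identically because the softmax has full support, and this makes explicit a step the paper passes over when it implicitly divides by $g(s)$.
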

\begin{proof}
    Fix $z_1, z_2, u \in \R^{|\cB|}$, $\tau \in [N]$. Then for any $s \in [0, 1]$, let $g(s) \ceq u^\top \nabla^2 f_\tau\left(z_1 + s(z_2-z_1)\right) u$ such that $g(0) = \norm{ u }^2_{\nabla^2 f_\tau(z_1)}$ and $g(1) = \norm{ u }^2_{\nabla^2 f_\tau(z_2)}$.

    Then from Lemma~\ref{lem:multi-concordance}, 
    \begin{equation}
        |g'(s)| \leq \eta \norm{ z_2 - z_1 }_{\text{span}} g(s).
    \end{equation}
    
    Using $\nabla^2f_{\tau}(z) \succeq 0$ for all $z$, we then integrate as follows:
    \begin{equation}
        \int_{0}^1 -\eta \norm{ z_2 - z_1 }_{\text{span}} \ud s \leq \int_{0}^1 \frac{\ud \log g(s)}{\ud s} \ud s \leq \int_{0}^1 \eta \norm{ z_2 - z_1 }_{\text{span}}\ud s,
    \end{equation}
    hence,
    \begin{equation}
       e^{-\eta \norm{ z_2 - z_1 }_{\text{span}}} \leq \frac{g(1)}{g(0)} \leq e^{\eta \norm{ z_2 - z_1 }_{\text{span}}},
    \end{equation}
    from which we conclude.
\end{proof}

\begin{lemma}\label{lem:residual-control}
    For any $\tau \in [N], \ u, z_1, z_2 \in \R^{|\cB|}$:
    \begin{equation}
        u^\top \left(\int_0^1 (1-s)\nabla^2f_\tau(z_1 + s(z_2-z_1))ds\right) u \geq \frac{1}{2 + \eta \norm{ z_2 - z_1 }_{\text{span}}} u^\top \nabla^2 f_\tau(z_1)u.
    \end{equation}
\end{lemma}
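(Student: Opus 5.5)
We reduce the claim to a one-dimensional inequality along the segment joining $z_1$ and $z_2$, and then apply the two-sided Hessian control of Lemma~\ref{lem:instantaneous-control} pointwise under the integral. Fix $\tau$, $u$, $z_1, z_2$, and write $z_s \ceq z_1 + s(z_2 - z_1)$ and $\Delta \ceq \norm{z_2 - z_1}_{\text{span}}$. The span semi-norm is homogeneous, so $\norm{z_s - z_1}_{\text{span}} = s\Delta$ for every $s \in [0,1]$. Lemma~\ref{lem:instantaneous-control} applied to the pair $(z_1, z_s)$ then gives
\begin{equation}
    u^\top \nabla^2 f_\tau(z_s) u \geq e^{-\eta s \Delta}\, u^\top \nabla^2 f_\tau(z_1) u, \qquad \forall s \in [0,1].
\end{equation}
We multiply by the nonnegative weight $(1-s)$ and integrate over $[0,1]$. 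This reduces the claim to the scalar inequality
\begin{equation}
    h(x) \ceq \int_0^1 (1-s) e^{-sx} \ud s \geq \frac{1}{2 + x}, \qquad x \ceq \eta\Delta \geq 0.
\end{equation}
Since $u^\top \nabla^2 f_\tau(z_1) u \geq 0$, the scalar inequality suffices.

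Next we compute $h$ explicitly. For $x > 0$, integrating by parts (or directly) gives $h(x) = \frac{x - 1 + e^{-x}}{x^2}$, and $h(0) = 1/2$. So it remains to show $(2+x)(x - 1 + e^{-x}) \geq x^2$ for $x \geq 0$.

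This last inequality is the only real step, and it is elementary. Expanding, it is equivalent to
\begin{equation}
    k(x) \ceq x - 2 + (2+x)e^{-x} \geq 0.
\end{equation}
We have $k(0) = 0$ and $k'(x) = 1 - (1+x)e^{-x}$. The derivative satisfies $k'(x) \geq 0$ because $e^{x} \geq 1 + x$. Hence $k$ is nondecreasing, so $k \geq 0$ on $[0,\infty)$, which concludes the proof.

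An alternative route avoids the explicit integral. One applies Jensen's inequality with the probability density $2(1-s)$ on $[0,1]$, which gives $h(x) \geq \tfrac12 e^{-x/3}$. That bound does not directly yield $1/(2+x)$ for large $x$, so the explicit computation is the cleaner path. No real obstacle is expected beyond checking this scalar inequality carefully.
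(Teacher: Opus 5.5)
Your proof is correct and follows the same route as the paper. You apply Lemma~\ref{lem:instantaneous-control} between $z_1$ and $z_s$ using homogeneity of the span semi-norm, integrate against the weight $(1-s)$ to get $(x - 1 + e^{-x})/x^2$, and compare with $1/(2+x)$; the only addition is that you verify this last scalar inequality via $k(x) = x - 2 + (2+x)e^{-x} \geq 0$ and $e^{x} \geq 1+x$, whereas the paper simply asserts it.
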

\begin{proof}
    If $\norm{ z_2 - z_1 }_{\text{span}} = 0$, the relation holds directly with equality. Hence we assume below $\norm{ z_2 - z_1 }_{\text{span}} > 0$.

    Applying Lemma~\ref{lem:instantaneous-control} we get
    \begin{align*}
        u^\top \left(\int_0^1 (1-s)\nabla^2f_\tau(z_1 + s(z_2-z_1))ds\right) u
        &\geq u^\top \left(\int_0^1 (1-s)e^{-\eta \norm{ s(z_2 - z_1) }_{\text{span}}}\nabla^2f_\tau(z_1)ds\right) u\\
        &\geq u^\top \nabla^2f_\tau(z_1) u \left(\frac{1}{\eta \norm{ z_2 - z_1 }_{\text{span}}} + \frac{e^{-\eta \norm{ z_2 - z_1 }_{\text{span}}} - 1}{(\eta \norm{ z_2 - z_1 }_{\text{span}})^2}\right)\\
        &\geq u^\top \nabla^2f_\tau(z_1) u \left(\frac{1}{2 + \eta \norm{ z_2 - z_1 }_{\text{span}}}\right),
    \end{align*}
    where the third inequality uses the relation $\frac{1}{x}\left(1 + \frac{e^{-x} - 1}{x}\right) \geq \frac{1}{2+x}$ for all $x > 0$.
\end{proof}

\subsection{Change of coefficients in the Fisher information}

Noting that $-\nabla^2 l(\tilde\omega | b, \mu_\tau) = \cI_{\mu_\tau}(\tilde\omega)$ for any $b \in \cB$ and $\tau \in [N]$, we apply the previous results to bound the Fisher information matrix on changes of coefficients.

Specifically, the previous results are expressed in the logit space via the function $f_{\tau}$. We turn back to the coefficient space by using the following linear lift:
\begin{equation}\label{eq:logit-lift}
    z_{\mu}(\tilde\omega) \ceq \Psi_{\mu}\tilde\omega \in \R^{|\cB|}, \qquad \Psi_{\mu} \ceq \left[\psi_{\mu}(b)^\top\right]_{b \in \cB} \in \R^{|\cB| \times d},
\end{equation}
where $z_{\mu}(\omega)$ represents the logits associated to the policy $\mu$ and the coefficients $\tilde\omega$.

Explicitly,
\begin{equation}\label{eq:logit-lift-explicit}
    -l(\tilde\omega | b_\tau, \mu_\tau) = \log\left(\sum_{b' \in \cB} e^{\eta \inner{\psi_{\mu_\tau}(b')}{\tilde\omega}}\right) - \eta \inner{\psi_{\mu_\tau}(b_\tau)}{\tilde\omega} = f_\tau(z_{\mu_\tau}(\tilde\omega)),
\end{equation}
and logit differences can be measured by $d_{\mu}(\omega_1, \omega_2) = \norm*{z_{\mu}(\omega_1) - z_{\mu}(\omega_2)}_{\text{span}}$.

Further, note that for any $\tau, \tau'$, the function $f_\tau, f_{\tau'}$ only differ by a linear term. Therefore, the hessian $H(z) \ceq \nabla^2 f_{\tau}(z)$ doesn't depend on $\tau$. Hence, using Equation~\eqref{eq:logit-lift-explicit}, we have for any $\mu$:
\begin{equation}\label{eq:lift-properties}
    \cI_{\mu}(\omega) = \Psi_{\mu}^\top H\left(z_{\mu}(\omega)\right)\Psi_{\mu}.
\end{equation}

We can now state useful results following Appendix~\ref{appx:self-concordance-logits} and these remarks.

\begin{lemma}\label{lem:sc-fisher}
    For any $\omega_1, \omega_2 \in \Omega$ and $\mu$,
    \begin{equation}
        e^{-\eta d_\mu(\omega_1, \omega_2)} \cI_{\mu}(\omega_1) \preceq \cI_{\mu}(\omega_2) \preceq e^{\eta d_\mu(\omega_1, \omega_2)} \cI_{\mu}(\omega_1).
    \end{equation}
\end{lemma}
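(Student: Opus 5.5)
The plan is to reduce the claim to Lemma~\ref{lem:instantaneous-control} through the linear lift of Equation~\eqref{eq:logit-lift}. Fix $\mu$ and set $z_i \ceq z_{\mu}(\omega_i) = \Psi_{\mu}\omega_i$ for $i \in \{1,2\}$. By Equation~\eqref{eq:lift-properties}, $\cI_{\mu}(\omega_i) = \Psi_{\mu}^\top H(z_i)\Psi_{\mu}$, where $H = \nabla^2 f_\tau$ does not depend on $\tau$. This common Hessian is the only object we need to compare.

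First, apply Lemma~\ref{lem:instantaneous-control} to $z_1, z_2$, with any fixed $\tau$, to obtain
\begin{equation}
e^{-\eta \norm{z_1 - z_2}_{\text{span}}} H(z_1) \preceq H(z_2) \preceq e^{\eta \norm{z_1 - z_2}_{\text{span}}} H(z_1).
\end{equation}
Next, identify the exponent. We have $z_1 - z_2 = \Psi_{\mu}(\omega_1 - \omega_2)$, whose $b$-th entry is $\inner{\psi_{\mu}(b)}{\omega_1 - \omega_2}$. Hence $\norm{z_1 - z_2}_{\text{span}} = d_{\mu}(\omega_1, \omega_2)$, as already noted after Equation~\eqref{eq:logit-lift-explicit}. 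One check: $d_\mu$ is symmetric because the span semi-norm is invariant under negation, so which ordering appears in Lemma~\ref{lem:instantaneous-control} does not matter.

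Finally, the Loewner order is preserved under congruence. If $A \preceq B$, then $\Psi^\top A \Psi \preceq \Psi^\top B \Psi$, since $x^\top \Psi^\top (B-A)\Psi x = (\Psi x)^\top (B-A)(\Psi x) \geq 0$. Applying this to both sides of the display above with $\Psi = \Psi_\mu$, and pulling out the scalar factors, gives the stated sandwich for $\cI_{\mu}(\omega_2)$ relative to $\cI_{\mu}(\omega_1)$.

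There is no real obstacle. The one point to verify carefully is that the Hessian of $f_\tau$ in logit space is indeed independent of the observed $b_\tau$, so that Equation~\eqref{eq:lift-properties} holds. This is true because $f_\tau$ differs from the log-partition function only by the linear term $-\eta z_{b_\tau}$. The chain rule through the linear map $\tilde\omega \mapsto \Psi_\mu \tilde\omega$ then gives $\nabla^2_{\tilde\omega}\bigl(-l\bigr) = \Psi_\mu^\top H \Psi_\mu$, which equals $\cI_\mu$.
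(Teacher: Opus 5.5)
Your proposal is correct and follows the paper's proof essentially step for step. You apply Lemma~\ref{lem:instantaneous-control} to the lifted logits $z_\mu(\omega_i) = \Psi_\mu \omega_i$, identify $\norm{z_\mu(\omega_1) - z_\mu(\omega_2)}_{\text{span}}$ with $d_\mu(\omega_1,\omega_2)$, and transfer the Loewner sandwich through the congruence $A \mapsto \Psi_\mu^\top A \Psi_\mu$ using $\cI_\mu(\omega) = \Psi_\mu^\top H(z_\mu(\omega))\Psi_\mu$. Your extra checks (symmetry of $d_\mu$ and $\tau$-independence of the logit-space Hessian) are sound and simply make explicit what the paper uses implicitly.
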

\begin{proof}
    By Lemma~\ref{lem:instantaneous-control},
    \begin{equation}
        e^{-\eta d_{\mu}(\omega_1, \omega_2)} H(z_{\mu}(\omega_1)) \preceq H(z_{\mu}(\omega_2)) \preceq e^{\eta d_{\mu}(\omega_1, \omega_2)} H(z_{\mu}(\omega_1)).
    \end{equation}

    Now note that for any $A \preceq B$, we have $\Psi_{\mu}^\top A \Psi_{\mu} \preceq \Psi_{\mu}^\top B \Psi_{\mu}$, since for any $v \in \R^d$,
    \begin{equation}
        v^\top \left(\Psi_{\mu}^\top A \Psi_{\mu} - \Psi_{\mu}^\top B \Psi_{\mu}\right) v = \left(\Psi_{\mu} v\right)^\top (A - B) \left(\Psi_{\mu} v\right) \leq 0.
    \end{equation}
    
    We therefore conclude using Equation~\eqref{eq:lift-properties}.
\end{proof}

\begin{lemma}\label{lem:sc-integral}
    For any $\omega_1, \omega_2 \in \Omega$ and $\mu$, denote $a := \eta \cdot d_{\mu}(\omega_1, \omega_2)$. Then,
    \begin{equation}
        \frac{1}{2 + a} \cI_{\mu}(\omega_1) \preceq \int_0^1 (1-s) \cI_{\mu}(\omega_1 + s(\omega_2 - \omega_1))ds \preceq \frac{e^{a} - a - 1}{a^2} \cI_{\mu}(\omega_1).
    \end{equation}
\end{lemma}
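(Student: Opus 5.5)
The plan is to reduce the claim to the instantaneous sandwich of Lemma~\ref{lem:sc-fisher}, applied at every point of the segment, and then integrate the scalar weights.

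Fix $s \in [0,1]$ and set $\omega_s \ceq \omega_1 + s(\omega_2 - \omega_1)$. Since $z_\mu$ is linear, $z_\mu(\omega_s) - z_\mu(\omega_1) = s\,(z_\mu(\omega_2) - z_\mu(\omega_1))$, and the span semi-norm is absolutely homogeneous. Hence
\begin{equation}
d_\mu(\omega_1, \omega_s) = s\, d_\mu(\omega_1, \omega_2),
\end{equation}
so that $\eta\, d_\mu(\omega_1,\omega_s) = s a$.

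The one subtlety is that $\omega_s$ lies in $\Omega$ by convexity of the ball. In any case, the proofs of Lemmas~\ref{lem:instantaneous-control} and~\ref{lem:sc-fisher} never use the constraint $\tilde\omega \in \Omega$, so they apply verbatim.

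With this, Lemma~\ref{lem:sc-fisher} gives
\begin{equation}
e^{-sa}\,\cI_\mu(\omega_1) \preceq \cI_\mu(\omega_s) \preceq e^{sa}\,\cI_\mu(\omega_1).
\end{equation}
Multiply by the nonnegative weight $(1-s)$ and integrate over $s \in [0,1]$. The Loewner order is preserved under integration against nonnegative weights: test against any fixed $u \in \R^d$ and use monotonicity of the scalar integral. The matrix inequality therefore reduces to two scalar integrals multiplying $\cI_\mu(\omega_1)$.

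For the upper bound, for $a > 0$ integration by parts gives
\begin{equation}
\int_0^1 (1-s)e^{sa}\ud s = \frac{e^a - a - 1}{a^2}.
\end{equation}
The case $a = 0$ is read as the limit $1/2$, where both sides coincide trivially.

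For the lower bound, the same computation gives
\begin{equation}
\int_0^1 (1-s)e^{-sa}\ud s = \frac{1}{a} + \frac{e^{-a}-1}{a^2}.
\end{equation}
The elementary inequality $\frac{1}{x}\left(1 + \frac{e^{-x}-1}{x}\right) \geq \frac{1}{2+x}$ was already invoked in the proof of Lemma~\ref{lem:residual-control}, so this integral is at least $\frac{1}{2+a}$. Alternatively, one can lift Lemma~\ref{lem:residual-control} itself through $\Psi_\mu$, exactly as in the proof of Lemma~\ref{lem:sc-fisher}.

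The main thing to check is that elementary inequality. It is equivalent to $(2+x)(x - 1 + e^{-x}) \geq x^2$, that is, $h(x) \ceq (2+x)e^{-x} + x - 2 \geq 0$. We have $h(0) = 0$ and $h'(x) = 1 - (1+x)e^{-x} \geq 0$, since $e^{x} \geq 1 + x$. This settles it, and the rest is bookkeeping.
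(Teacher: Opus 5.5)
Your proof is correct and essentially the paper's. The upper bound is identical. For the lower bound, the paper lifts Lemma~\ref{lem:residual-control} through $\Psi_\mu$, and that lemma's own proof is exactly your pointwise bound $e^{-sa}$ integrated against $(1-s)$ followed by the same elementary inequality, so you have only swapped the order of lifting and integrating. Your check that $h(x) = (2+x)e^{-x} + x - 2 \geq 0$ also fills in a step the paper merely asserts.
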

\begin{proof}
    Write $\omega_s \ceq \omega_1 + s(\omega_2-\omega_1)$ and $z_i \ceq z_{\mu}(\omega_i)$, so that $z_{\mu}(\omega_s) = z_1 + s(z_2 - z_1)$ by linearity of the lift (Equation~\eqref{eq:logit-lift}).

    \noindent \underline{Lower bound.} Fix $v \in \R^d$ and apply Lemma~\ref{lem:residual-control} to $u \ceq \Psi_{\mu}v$. Using Equation~\eqref{eq:lift-properties},
    \begin{align*}
        v^\top \left(\int_0^1 (1-s)\cI_{\mu}(\omega_s)ds\right)v
        &= u^\top \left(\int_0^1 (1-s)H(z_1 + s(z_2-z_1))ds\right)u\\
        &\geq \frac{1}{2+a}\, u^\top H(z_1) u\\
        &= \frac{1}{2+a}\, v^\top \cI_{\mu}(\omega_1) v.
    \end{align*}

    \noindent \underline{Upper bound.} Since $\omega_s - \omega_1 = s(\omega_2 - \omega_1)$ and $\eta d_{\mu}(\omega_s, \omega_1) = s a$ for every $s \in [0,1]$, Lemma~\ref{lem:sc-fisher} thus gives $\cI_{\mu}(\omega_s) \preceq e^{sa} \cI_{\mu}(\omega_1)$. Integrating,
    \begin{equation}
        \int_0^1 (1-s)\cI_{\mu}(\omega_s)\ud s \preceq \left(\int_0^1 (1-s)e^{sa}\ud s\right)\cI_{\mu}(\omega_1) = \frac{e^{a}-a-1}{a^2}\, \cI_{\mu}(\omega_1).
    \end{equation}
    
    Both constants extend continuously to $a = 0$, where they equal $\int_0^1 (1-s)ds = 1/2$.
\end{proof}

\begin{lemma}\label{lem:sc-gamma}
    For any $\omega_1, \omega_2 \in \Omega$ and $\mu$, denote $a := \eta \cdot d_{\mu}(\omega_1, \omega_2)$. Then,
    \begin{equation}
        \frac{1-e^{-a}}{a} \cI_\mu(\omega_1) \preceq 
        \Gamma_{\mu}(\omega_1, \omega_2)
        \preceq \frac{e^{a} - 1}{a} \cI_\mu(\omega_1),
    \end{equation}
    with $\cI_\mu(\omega_1) \preceq \Gamma_{\mu}(\omega_1, \omega_2) \preceq \cI_\mu(\omega_1)$ when $a = 0$.
\end{lemma}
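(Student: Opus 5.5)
The argument runs parallel to the proof of Lemma~\ref{lem:sc-integral}, with the weight $(1-s)$ replaced by the uniform weight. Write $\omega_s \ceq \omega_1 + s(\omega_2 - \omega_1)$ for $s \in [0,1]$. By linearity of the logit lift in Equation~\eqref{eq:logit-lift}, we have $z_{\mu}(\omega_s) - z_{\mu}(\omega_1) = s\,(z_{\mu}(\omega_2) - z_{\mu}(\omega_1))$. The span semi-norm is absolutely homogeneous, so for $s \geq 0$
\begin{equation}
    \eta\, d_{\mu}(\omega_1, \omega_s) = s\,\eta\, d_{\mu}(\omega_1, \omega_2) = s a .
\end{equation}

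Lemma~\ref{lem:sc-fisher} then gives, for every $s \in [0,1]$, the two-sided Loewner sandwich
\begin{equation}
    e^{-sa}\,\cI_{\mu}(\omega_1) \preceq \cI_{\mu}(\omega_s) \preceq e^{sa}\,\cI_{\mu}(\omega_1).
\end{equation}
Loewner order is preserved under integration against a nonnegative weight: test against any fixed $v \in \R^d$ and integrate the scalar inequality. Applying this over $s \in [0,1]$ to the definition of $\Gamma_{\mu}(\omega_1, \omega_2)$ yields
\begin{equation}
    \left(\int_0^1 e^{-sa}\ud s\right)\cI_{\mu}(\omega_1) \preceq \Gamma_{\mu}(\omega_1, \omega_2) \preceq \left(\int_0^1 e^{sa}\ud s\right)\cI_{\mu}(\omega_1).
\end{equation}
For $a>0$ the integrals evaluate to $(1-e^{-a})/a$ and $(e^{a}-1)/a$, which are exactly the claimed constants.

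When $a = 0$, the sandwich gives $\cI_{\mu}(\omega_s) = \cI_{\mu}(\omega_1)$ for all $s$, so $\Gamma_{\mu}(\omega_1,\omega_2) = \cI_{\mu}(\omega_1)$. This is also the continuous limit of both constants, and it covers the stated degenerate case.

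The proof contains no real obstacle. The only points needing care are the homogeneity of the span semi-norm along the segment and the measurability of $s \mapsto \cI_{\mu}(\omega_s)$ needed to integrate; the latter holds because the map is continuous, being a covariance under a softmax law that is smooth in $s$.
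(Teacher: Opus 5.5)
Your proposal is correct and follows the paper's own argument. Like the paper, you apply the two-sided bound of Lemma~\ref{lem:sc-fisher} along the segment, using $\eta\, d_{\mu}(\omega_1,\omega_s) = sa$, and then integrate over $s \in [0,1]$ with uniform weight instead of the $(1-s)$ weight used in Lemma~\ref{lem:sc-integral}.
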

\begin{proof}
    The proof is identical to the proof of Lemma~\ref{lem:sc-fisher}, integrating $\cI_{\mu}(\omega_s)$ directly without the $(1-s)$ factor.
\end{proof}

\begin{lemma}[Best-response stability]\label{lem:sc-1}
    For every $\omega_1, \omega_2$ and $\mu$,
    \begin{equation*}
        \left\lVert \nu^*_{\mu}(\omega_1) - \nu^*_{\mu}(\omega_2) \right\rVert_1
        \leq \int_0^1 \left\lVert \omega_1 - \omega_2 \right\rVert_{\cI_{\mu}(\omega_s)} ds
        \leq \chi\left(\eta \cdot d_{\mu}(\omega_1, \omega_2)\right) \cdot \left\lVert \omega_1 - \omega_2 \right\rVert_{\cI_{\mu}(\omega_2)},
    \end{equation*}
    with $\chi(x) := \frac{2(e^{x/2} - 1)}{x}$ (increasing function) and $\omega_s := \omega_2 + s(\omega_1 - \omega_2)$.
\end{lemma}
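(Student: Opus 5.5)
We interpolate along the segment $\omega_s \ceq \omega_2 + s(\omega_1 - \omega_2)$, $s \in [0,1]$, and bound the derivative of the best response in $\ell_1$ by a Fisher norm. By Lemma~\ref{lem:softmax-br}, $\nu_s \ceq \nu^*_{\mu}(\omega_s) = \softmax(\eta\, z_{\mu}(\omega_s))$, where $z_{\mu}$ is the linear lift of Equation~\eqref{eq:logit-lift}. Writing $\Delta \ceq \omega_1 - \omega_2$ and $u \ceq \Psi_{\mu}\Delta$, so that $u_b = \inner{\psi_\mu(b)}{\Delta}$, the softmax derivative gives
\begin{equation}
    \frac{\ud}{\ud s}\nu_s(b) = \eta\, \nu_s(b)\left(u_b - \E_{b' \sim \nu_s}[u_{b'}]\right).
\end{equation}

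\textbf{First inequality.} By the fundamental theorem of calculus and the triangle inequality, $\norm{\nu^*_\mu(\omega_1) - \nu^*_\mu(\omega_2)}_1 \leq \int_0^1 \sum_b \left|\frac{\ud}{\ud s}\nu_s(b)\right| \ud s$. For fixed $s$, the integrand is $\eta\, \E_{b \sim \nu_s}\left|u_b - \E_{\nu_s}[u]\right|$. By Jensen (or Cauchy--Schwarz) this is at most $\eta \sqrt{\Var_{b\sim\nu_s}(u_b)}$. Since $\Var_{b \sim \nu_s}(u_b) = \Delta^\top \Cov_{b\sim\nu_s}[\psi_\mu(b)]\Delta$, we get $\eta^2\Var_{b\sim\nu_s}(u_b) = \norm{\Delta}^2_{\cI_\mu(\omega_s)}$ by the covariance form of the Fisher information. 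This yields the first inequality.

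\textbf{Second inequality.} Apply Lemma~\ref{lem:sc-fisher} between $\omega_s$ and $\omega_2$. Since $\omega_s - \omega_2 = s\Delta$ and the span semi-norm is homogeneous, $d_\mu(\omega_s, \omega_2) = s\, d_\mu(\omega_1,\omega_2)$. Hence $\cI_\mu(\omega_s) \preceq e^{s a}\cI_\mu(\omega_2)$ with $a \ceq \eta\, d_\mu(\omega_1,\omega_2)$, and so $\norm{\Delta}_{\cI_\mu(\omega_s)} \leq e^{sa/2}\norm{\Delta}_{\cI_\mu(\omega_2)}$. Integrating gives $\int_0^1 e^{sa/2}\ud s = \frac{2(e^{a/2}-1)}{a} = \chi(a)$, with the limit $\chi(0) = 1$ covering the case $a = 0$.

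Finally, $\chi(x) = \int_0^1 e^{sx/2}\ud s$ is an integral of functions increasing in $x$, so $\chi$ is increasing. The only step needing care is the first one, specifically computing the softmax derivative and recognizing the variance as the Fisher quadratic form. The rest is a direct application of the self-concordance control of Lemma~\ref{lem:sc-fisher}.
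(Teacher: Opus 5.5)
Your proposal is correct and matches the paper's proof essentially step for step. Both arguments use the same four steps: the softmax derivative along the segment $\omega_s$; Jensen to bound $\eta\,\E_{\nu_s}\left|u_b-\E_{\nu_s}[u]\right|$ by $\norm{\omega_1-\omega_2}_{\cI_\mu(\omega_s)}$; the fundamental theorem of calculus; and Lemma~\ref{lem:sc-fisher} with $\eta\, d_\mu(\omega_s,\omega_2)=sa$, integrating $e^{sa/2}$ to obtain $\chi(a)$. Your extra remark that $\chi(x)=\int_0^1 e^{sx/2}\,\ud s$ is increasing also justifies a claim the paper states without proof.
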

\begin{proof}
    Write $a \ceq \eta \cdot d_{\mu}(\omega_1, \omega_2)$, $p_s \ceq \nu^*_{\mu}(\omega_s) \in \Delta(\cB)$, and let $h(b) \ceq \inner{\psi_{\mu}(b)}{\omega_1 - \omega_2}$.
    
    Note: since $\cB$ is finite and $\nu^*_{\mu}$ is a softmax, $s \mapsto p_s$ is $C^1$ and $p_s(b) > 0$ for every $b \in \cB$.

    By Equation~\eqref{eq:softmax-static} and $\ud\omega_s = (\omega_1 - \omega_2)\ud s$, for every $b \in \cB$,
    \begin{equation}
        \frac{\ud}{\ud s}\log p_s(b) = \eta\left(h(b) - \E_{p_s}\left[h\right]\right)
        \Leftrightarrow
        \frac{\ud}{\ud s}p_s(b) = \eta p_s(b)\left(h(b) - \E_{p_s}[h]\right).
    \end{equation}
    
    Taking the $\ell_1$ norm and applying Jensen's inequality to the concave map $x \mapsto \sqrt{x}$,
    \begin{align*}
        \norm*{\frac{\ud}{\ud s}p_s}_1
        &= \eta \E_{p_s}\left[\left\lvert h - \E_{p_s}[h]\right\rvert\right]\\
        &\leq \eta\sqrt{\Var_{p_s}(h)} &&& (\text{Jensen's inequality})\\
        &= \sqrt{(\omega_1 - \omega_2)^\top \eta^2\Cov_{b \sim p_s}\left[\psi_{\mu}(b)\right](\omega_1 - \omega_2)}\\
        &= \norm*{\omega_1 - \omega_2}_{\cI_{\mu}(\omega_s)}. &&& (\text{exponential family})
    \end{align*}
    
    The fundamental theorem of calculus and the triangle inequality then give
    \begin{equation}
        \norm*{\nu^*_{\mu}(\omega_1) - \nu^*_{\mu}(\omega_2)}_1 = \norm*{\int_0^1 \frac{\ud}{\ud s}p_s \ud s}_1 \leq \int_0^1 \norm*{\frac{\ud}{\ud s}p_s}_1 \ud s \leq \int_0^1 \norm*{\omega_1 - \omega_2}_{\cI_{\mu}(\omega_s)} \ud s.
    \end{equation}
    
    Now, since $d_{\mu}$ is induced by a semi-norm and $\omega_s - \omega_2 = s(\omega_1 - \omega_2)$, we have $\eta \cdot d_{\mu}(\omega_s, \omega_2) = sa$. By Lemma~\ref{lem:sc-fisher}, this implies $\cI_{\mu}(\omega_s) \preceq e^{sa}\cI_{\mu}(\omega_2)$. Taking square roots and integrating,
    \begin{equation}
        \int_0^1 \norm*{\omega_1 - \omega_2}_{\cI_{\mu}(\omega_s)}\ud s \leq \left(\int_0^1 e^{sa/2}\ud s\right)\norm*{\omega_1 - \omega_2}_{\cI_{\mu}(\omega_2)} = \chi(a)\cdot\norm*{\omega_1 - \omega_2}_{\cI_{\mu}(\omega_2)}.
    \end{equation}
\end{proof}

\begin{lemma}[Fisher uncertainty transfer]\label{lem:sc-2}
    For any $\omega_1, \omega_2, \lambda, \mu$ and allocation $\{\mu_\tau\}_{\tau \leq k}$,
    \begin{equation}
        \tr(G^{-1}(\omega_1)\cI_{\mu}(\omega_1)) \leq e^{2\eta d_{*}(\omega_1, \omega_2)}\tr(G^{-1}(\omega_2)\cI_{\mu}(\omega_2)),
    \end{equation}
    where $G(\tilde\omega) = \lambda I + \sum_{\tau=1}^{k}\cI_{\mu_\tau}(\tilde\omega)$.
\end{lemma}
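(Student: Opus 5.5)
The plan is to bound the two factors inside the trace separately, using Lemma~\ref{lem:sc-fisher}. Each one costs at most a factor $e^{\eta d_*(\omega_1,\omega_2)}$, which gives $e^{2\eta d_*}$ overall.

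\textbf{Step 1: the cumulative matrix.} For every $\tau \leq k$, Lemma~\ref{lem:sc-fisher} gives
\[
\cI_{\mu_\tau}(\omega_1) \succeq e^{-\eta d_{\mu_\tau}(\omega_1,\omega_2)}\,\cI_{\mu_\tau}(\omega_2) \succeq e^{-\eta d_*(\omega_1,\omega_2)}\,\cI_{\mu_\tau}(\omega_2),
\]
since $d_{\mu_\tau} \leq d_*$ by definition of $d_*$ as a maximum over leader policies. Summing over $\tau$ and using $\lambda I \succeq e^{-\eta d_*}\lambda I$ (because $e^{-\eta d_*} \leq 1$ and $\lambda \geq 0$), we obtain
\[
G(\omega_1) \succeq e^{-\eta d_*(\omega_1,\omega_2)}\,G(\omega_2).
\]
Item 2 of Lemma~\ref{lem:g-linear-algebra} (or the standard fact when $\lambda>0$, where both matrices are invertible) then yields $G^{-1}(\omega_1) \preceq e^{\eta d_*}\,G^{-1}(\omega_2)$. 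This holds in the pseudo-inverse sense on $\cG$ when $\lambda = 0$.

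\textbf{Step 2: the single-policy matrix.} Lemma~\ref{lem:sc-fisher} applied at $\mu$ gives $\cI_\mu(\omega_1) \preceq e^{\eta d_*}\,\cI_\mu(\omega_2)$.

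\textbf{Step 3: combining the traces.} Trace is monotone in each PSD argument: if $A \preceq A'$ and $M \succeq 0$, then
\[
\tr(AM) = \tr(M^{1/2} A M^{1/2}) \leq \tr(A'M).
\]
Applying this twice, first to swap $G^{-1}(\omega_1)$ for $e^{\eta d_*}G^{-1}(\omega_2)$ and then to swap $\cI_\mu(\omega_1)$ for $e^{\eta d_*}\cI_\mu(\omega_2)$ (with $G^{-1}(\omega_2)\succeq 0$ as the fixed factor), gives the claimed bound.

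\textbf{Main obstacle: the degenerate case $\lambda = 0$.} Here the inverses are the extended pseudo-inverses of Equation~\eqref{eq:pinv-convention}, so Step 3 must be checked on images. Since $e^{-\eta d_*} > 0$, the Loewner sandwich of Lemma~\ref{lem:sc-fisher} implies that $\Image \cI_\mu(\omega_1) = \Image \cI_\mu(\omega_2)$ and $\Image G(\omega_1) = \Image G(\omega_2)$. Consequently, either both traces are $+\infty$, and the inequality is trivial, or both are finite. In the finite case we restrict to $\Image G(\omega_2)$ and use the eigen-decomposition form of the trace together with item 2 of Lemma~\ref{lem:g-linear-algebra} applied to $\dualnorm{u}{G}$. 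I expect this bookkeeping to be the only delicate point; the rest is a direct chaining of Lemma~\ref{lem:sc-fisher}.
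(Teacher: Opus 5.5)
Your proof is correct and follows the paper's argument. The paper also uses Lemma~\ref{lem:sc-fisher} with $d_{\mu_\tau}\le d_*$ to get $G(\omega_2)\preceq e^{\eta d_*}G(\omega_1)$, inverts this, and then applies trace monotonicity twice (first to the $G^{-1}$ factor, then to the $\cI_\mu$ factor). Your additional treatment of $\lambda=0$ goes beyond the paper, whose proof simply writes $G^{-1}$: you obtain equal images from the Loewner sandwich and then apply item~2 of Lemma~\ref{lem:g-linear-algebra} to the dual norms. This is exactly what justifies the later use of the lemma with pseudo-inverses in the proof of Theorem~\ref{thm:stackelberg-static-formal}.
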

\begin{proof}
    Set $a_* \ceq \eta \cdot d_*(\omega_1, \omega_2) \geq 0$.

    Now noting that $d_{\mu_\tau} \leq d_*$ for all $\tau$, apply Lemma~\ref{lem:sc-fisher} and get:
    \begin{equation}
        G(\omega_2) = \lambda I + \sum_{\tau=1}^{k}\cI_{\mu_\tau}(\omega_2) \preceq \lambda I + e^{a_*}\sum_{\tau=1}^{k}\cI_{\mu_\tau}(\omega_1) \preceq e^{a_*} G(\omega_1),
    \end{equation}
    where we also noted that $\lambda I \preceq e^{a_*}\lambda I$ since $a_* \geq 0$.

    Inverting on both sides we get:
    \begin{equation}\label{eq:g-lowner}
        G^{-1}(\omega_2) \succeq e^{-a_*}G^{-1}(\omega_1) \Leftrightarrow G^{-1}(\omega_1) \preceq e^{a_*}G^{-1}(\omega_2).
    \end{equation}

    We then use twice the fact that 
    \begin{equation}
        A_1 \preceq A_2, \ C \succeq 0 \implies \tr(A_1 C) \leq \tr(A_2 C), \qquad \forall A_1, A_2, C,
    \end{equation}
    which follows from $\tr\left((A_2 - A_1)C\right) = \tr\left(C^{1/2}(A_2-A_1)C^{1/2}\right) \geq 0$.

    Specifically,
    \begin{align*}
        \tr\left(G^{-1}(\omega_1)\cI_{\mu}(\omega_1)\right)
        &\leq e^{a_*}\tr\left(G^{-1}(\omega_2)\cI_{\mu}(\omega_1)\right) &&& (\text{Equation}~\eqref{eq:g-lowner})\\
        &\leq e^{2a_*}\tr\left(G^{-1}(\omega_2)\cI_{\mu}(\omega_2)\right) &&& (\text{Lemma}~\ref{lem:sc-fisher}).
    \end{align*}
\end{proof}

\section{Warm-up analysis}\label{appx:warmup}

An offline warm-up allows to directly optimize a set of exploratory policies to reduce the maximum uncertainty. However, during this offline phase we cannot collect information about the unknown reward coefficients, which prevents the algorithm to query the best-responding player in the optimal directions.

We therefore perform optimal design assuming a uniform best-responding agents (i.e. $\omega = 0 \in \Omega$), and pay for this choice with a global self-concordance introducing an exponential term $\kappa$ in the burn-in phase for exploration. This exponential dependency is then removed in the online phase, which relies on a running estimate for $\omega$ (namely, the maximum likelihood estimator).

Formally, this price is given by the following lemma.

\begin{lemma}[Design curvature]\label{lem:design-curvature}
    For $n \geq 1$ and any allocation $\{\mu_\tau\}_{\tau \leq n}$, define the uniform-centered design $V_n \ceq \sum_{\tau=1}^{n} M_{\mu_\tau}$ and $\Gamma^{0}_n$ as in Theorem~\ref{thm:CS} with $\lambda = 0$. Then for any $\tilde\omega \in \Omega$,
    \begin{equation}
         G^0\left(\tilde\omega | \{\mu_\tau\}_{\tau \leq n}\right) \succeq 2\kappa^{-1}V_n \quad \text{and} \quad \Gamma^{0}_n(\tilde\omega) \succeq \kappa^{-1} V_n,
    \end{equation}
    for $\kappa \ceq 2\eta^{-2}e^{2\eta S}$.

    Hence for any $v \in \R^d$, $\norm*{v}^2_{V_n} \leq \kappa \norm*{v}^2_{\Gamma^{0}_n(\tilde\omega)}$.
\end{lemma}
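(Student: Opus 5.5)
The plan is to reduce everything to a single per-policy comparison between the Fisher information at an arbitrary $\tilde\omega \in \Omega$ and the Fisher information at the proxy $0$. We then sum over the allocation and integrate along segments.

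\emph{Step 1: Fisher information at the proxy.} At $\tilde\omega = 0$ the best-response $\nu^*_\mu(0)$ is uniform on $\cB$ by Lemma~\ref{lem:softmax-br}. Its mean feature is therefore $\overline{\psi}_\mu(0) = \frac{1}{|\cB|}\sum_{b'}\psi_\mu(b')$, so that $\psi_\mu(b) - \overline{\psi}_\mu(0) = g_\mu(b)$. Plugging this into the covariance form of the Fisher information gives
\begin{equation}
    \cI_\mu(0) = \eta^2 \cdot \frac{1}{|\cB|}\sum_{b} g_\mu(b) g_\mu(b)^\top = \eta^2 M_\mu .
\end{equation}

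\emph{Step 2: change of coefficients.} Apply Lemma~\ref{lem:sc-fisher} with $\omega_1 = 0$ and $\omega_2 = \tilde\omega$. This gives
\begin{equation}
    \cI_\mu(\tilde\omega) \succeq e^{-\eta d_\mu(0,\tilde\omega)}\,\cI_\mu(0).
\end{equation}
Next bound the span semi-norm. We have $d_\mu(0,\tilde\omega) = \max_{b,b'}\inner{\psi_\mu(b)-\psi_\mu(b')}{\tilde\omega}$. Since $\norm{\psi_\mu(b)}_2 \le \sum_a \mu(a)\norm{\phi(a,b)}_2 \le 1$, Cauchy--Schwarz yields $d_\mu(0,\tilde\omega) \le 2\norm{\tilde\omega}_2 \le 2S$ for $\tilde\omega \in \Omega$. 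Combining with Step 1,
\begin{equation}
    \cI_\mu(\tilde\omega) \succeq \eta^2 e^{-2\eta S} M_\mu = 2\kappa^{-1} M_\mu .
\end{equation}
Summing over $\tau \le n$ and using $G^0(\tilde\omega \mid \{\mu_\tau\}) = \sum_\tau \cI_{\mu_\tau}(\tilde\omega)$ (since $\lambda = 0$) gives the first claim $G^0 \succeq 2\kappa^{-1}V_n$.

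\emph{Step 3: the integrated matrix.} By definition,
\begin{equation}
    \Gamma^0_n(\tilde\omega) = \int_0^1 (1-s)\, G^0\bigl(\tilde\omega + s(\omega-\tilde\omega)\bigr)\ud s .
\end{equation}
The ball $\Omega$ is convex and both $\omega, \tilde\omega \in \Omega$, so every point $\tilde\omega + s(\omega-\tilde\omega)$ lies in $\Omega$. Step 2 therefore applies uniformly in $s$. Loewner order is preserved under integration against the nonnegative weight $1-s$, and $\int_0^1(1-s)\ud s = 1/2$, so $\Gamma^0_n(\tilde\omega) \succeq \kappa^{-1}V_n$. The final norm statement $\norm{v}^2_{V_n} \le \kappa\norm{v}^2_{\Gamma^0_n(\tilde\omega)}$ is this Loewner inequality evaluated as a quadratic form at $v$.

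\emph{Main obstacle.} The only delicate points are the identification $\cI_\mu(0) = \eta^2 M_\mu$, which requires recognizing that uniform centering is exactly centering under the uniform best-response, and the worst-case bound $d_\mu \le 2S$, which is where the factor $e^{2\eta S}$ in $\kappa$ comes from. Everything else is monotonicity of the Loewner order under sums and integrals.
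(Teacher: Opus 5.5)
Your proof is correct and reaches the same constant $\eta^2 e^{-2\eta S} = 2\kappa^{-1}$ as the paper. The per-policy inequality $\cI_\mu(\tilde\omega) \succeq 2\kappa^{-1} M_\mu$ is obtained by a different route.

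The paper argues directly at $\tilde\omega$. Every softmax probability is at least $p_{\min} \geq e^{-2\eta S}/|\cB|$, since the logits lie in $[-\eta S, \eta S]$. Hence $\Cov_{b \sim \nu^*_\mu(\tilde\omega)}[\psi_\mu(b)] \succeq p_{\min}\sum_b (\psi_\mu(b) - \overline{\psi}_\mu(\tilde\omega))^{\otimes 2}$. It then re-centers at the uniform mean via $\sum_b (x_b - c)^{\otimes 2} \succeq \sum_b (x_b - \bar x)^{\otimes 2}$ to reach $|\cB|\, p_{\min} M_\mu$.

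You instead note that $M_\mu$ is exactly the scaled Fisher information at the proxy, $\cI_\mu(0) = \eta^2 M_\mu$. You transport it to $\tilde\omega$ with the self-concordance comparison of Lemma~\ref{lem:sc-fisher} together with $d_\mu(0,\tilde\omega) \leq 2S$. Your summation over $\tau$ and integration against $(1-s)$ along the segment in the convex ball $\Omega$ are identical to the paper's.

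The paper's route is elementary and self-contained. It needs neither Appendix~\ref{appx:self-concordance} nor the Gr\"onwall-type argument behind Lemma~\ref{lem:sc-fisher}, and it shows that $\kappa$ is the usual inverse-minimal-probability constant of softmax models.

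Your route is shorter given that machinery and avoids the recentering inequality. It also makes explicit why the warm-up design can be written with $\cI_{\delta_a}(0)$, as in Algorithm~\ref{alg:stackelberg-static}, or with $M_{\delta_a}$, as in Algorithm~\ref{alg:warm-up}. The two differ only by the scalar $\eta^2$, which cancels in $\tr(W^\dagger \cI_{\delta_a}(0))$.
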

\begin{proof}
    Denote by $p_{\min} \geq e^{-2\eta S}/|\cB|$ a lower bound on the minimal probability of the softmax distribution with normalized features and coefficient $\tilde\omega \in \Omega$. Hence, for any $\tau \in [n]$,
    \begin{align*}
        \Cov_{b \sim \mathbb{P}(\cdot | \tilde\omega, \mu_\tau)}\left[\psi_{\mu_\tau}(b)\right]
        &\succeq p_{\min} \sum_{b \in \cB} \left(\psi_{\mu_\tau}(b) - \overline\psi_{\mu_\tau}(\tilde\omega)\right)^{\otimes 2}\\
        &\succeq |\cB|p_{\min} \cdot M_{\mu_\tau},
    \end{align*}
    where the second step re-centers at the uniform mean ($\sum_b (x_b - c)^{\otimes 2} \succeq \sum_b (x_b - \overline{x})^{\otimes 2}$ for any $c$).

    Now fix $s \in [0, 1]$, write $\omega_s \ceq \tilde\omega + s(\omega - \tilde\omega) \in \Omega$, and use the previous inequality to show\footnote{This is independent of the sampled follower actions.}:
    \begin{align*}
        G^0\left(\omega_s | \{\mu_\tau\}_{\tau \leq n}\right)
        &= \eta^2 \sum_{\tau=1}^{n} \Cov_{b \sim \mathbb{P}(\cdot|\omega_s, \mu_\tau)}\left[\psi_{\mu_\tau}(b)\right]\\
        &\succeq \eta^2|\cB|p_{\min} \sum_{\tau=1}^{n} M_{\mu_\tau}\\
        &\succeq \eta^2 e^{-2\eta S} \cdot V_n.
    \end{align*}

    In particular, for $s = 0$, we have
    \begin{equation}
        G^0\left(\omega_0 | \{\mu_\tau\}_{\tau \leq n}\right) = 
        G^0\left(\tilde\omega | \{\mu_\tau\}_{\tau \leq n}\right) \succeq
        \eta^2 e^{-2\eta S} \cdot V_n =
        2\kappa^{-1}V_n,
    \end{equation}
    with $\kappa \ceq \frac{2e^{2\eta S}}{\eta^2}$.

    Integrating the general term over $[0, 1]$ also gives:
    \begin{equation}
        \Gamma^{0}_n(\tilde\omega) = \int_{0}^1 (1-s)G^0\left(\omega_s | \{\mu_\tau\}_{\tau \leq n}\right) \ud s \succeq \frac{\eta^2e^{-2\eta S}}{2} V_n = \kappa^{-1} V_n.
    \end{equation}
\end{proof}

For simplicity, all results in this section will be stated for the $\mu$-player. It is straightforward to extend them to the $\nu$-player by a change of notation.

We first start by introducing the precise version of the warm-up phase in Algorithm~\ref{alg:stackelberg-static} from the main text. We will then prove its guarantees and usefulness for our main IRL algorithm (Algorithm~\ref{alg:stackelberg-static}).

\begin{algorithm}[t]
    \caption{Warm-up procedure (WARM)}
    \label{alg:warm-up}
    \begin{algorithmic}[1]
    \State \textbf{Input:} tolerance $\epsilon$, budget $n_{\text{warm}}$.
    \State \cm{Compute warm-up design (offline)} \Comment{See Section~\ref{appx:warmup-sec-solve}}\vspace{2pt}
    \State Approximately solve the following grouped design problem:
    \begin{equation}\label{eq:warm-up-optimal-design}
        \argmin_{\tilde{w} \in \Delta(\Pi_{\mu})} \max_{\mu} \max_{b \in \cB} \dualnorm*{g_{\mu}(b)}{V(\tilde w)}^2,
    \end{equation}
    and obtain $w_{\approx}$ an $(1+\epsilon)|\cB|$ approximation.
    \State This requires an iterative algorithm for optimal design, presented in Algorithm~\ref{alg:fw}.\vspace{6pt}
    \State \cm{Rounding procedure} \Comment{See Section~\ref{appx:warmup-sec-round}}\vspace{2pt}
    \State Apply the rounding procedure of Definition~\ref{def:warm-up-allocation} to $w_{\approx}$ and obtain the allocation
    \begin{equation*}
        \mathbf{v} = \left[v_1, ..., v_{s}\right] \in \mathbb{N}^{s} \qquad \text{such that} \qquad 1^\top \mathbf{v} = n_{\text{warm}},
    \end{equation*}
    \State where $s \leq d^{\diamond}(d^{\diamond}+1)/2 + 1$.\vspace{6pt}
    \State \cm{Sample and initialize}\vspace{2pt}
    \State Sample $\cD_{\text{warm}}$; \quad $\cL_{0}(\tilde{\omega}) \gets \cL(\tilde{\omega} | \cD_{\text{warm}})$;\quad $\what_0\gets\argmax_{\tilde\omega\in\Omega}\cL_0(\tilde\omega)$.\label{lst:line:warm-up-sample}\vspace{6pt}
    \State \textbf{Output:} $\cD_{\text{warm}}$, $\cL_{0}(\tilde{\omega})$ and $\what_0$.
    \end{algorithmic}
\end{algorithm}

The overall idea is the following: suppose we have a strategy $w \in \Delta(\Pi_\mu)$ such that $\max_{\mu, b}\dualnorm*{g_\mu(b)}{V(w)}^2 \leq c$ for some constant $c$. Then, for any $N \in \mathbb{N}$, $\max_{\mu, b}\dualnorm*{g_\mu(b)}{N\cdot V(w)}^2 \leq c/N$. The norm can therefore be arbitrarily small as we increase $N$.

As $w$ is a probability measure over policies, $N \cdot w$ can be seen as a fractional allocation, and an exact allocation when all masses $N \cdot w(\{\mu\})$ are integers. Instead of stating it as an assumption, we apply a rounding technique inspired by \citet{allen2021near} in order to obtain a warm-up allocation $v$ with $1^\top v = n_{\text{warm}}$ for some integer $n_{\text{warm}}$. This allocation determines a sequence $\{\mu^{\text{warm}}_{\tau}\}_{\tau \leq n_{\text{warm}}}$ that will be used to collect data (Algorithm~\ref{alg:warm-up}, line~\ref{lst:line:warm-up-sample}).

Before explaining how we solve the warm-up optimization problem and obtain the allocation via rounding, let us start by extending Kiefer-Wolfowitz theorems for optimal design to our setting.

\subsection{A grouped Kiefer-Wolfowitz theory}

The classical Kiefer-Wolfowitz theorem \citep{kiefer1960equivalence} covers rank-one arms (e.g. see how it is used in \citet{allen2021near} in the proof of Lemma~1), and therefore doesn't directly apply to our setting. In our case, we select a leader policy $\mu$ and observe (in expectation) a covariance matrix. Hence, a "grouped" update as the expectation of an outer product of averaged features. We thus need a grouped version of KW theory, that we give here in two steps: the trace value (Lemma~\ref{lem:grouped-kw}) and the pointwise value needed by the warm-up (Corollary~\ref{cor:grouped-kw-point}).

\begin{lemma}[Grouped Kiefer-Wolfowitz]\label{lem:grouped-kw}
    We have,
    \begin{equation}
        \min_{w \in \Delta(\Pi_{\mu})} \max_{\mu}\ \tr\left(V(w)^\dagger M_{\mu}\right) = d^{\diamond},
    \end{equation}
    the minimum being attained at any $w_D \in \argmax_{\tilde w} \log \det\nolimits_{\cG} V(\tilde w) \subseteq \cW_{\cG}$.
\end{lemma}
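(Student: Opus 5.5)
We follow the classical Kiefer--Wolfowitz equivalence argument, carried out on $\cG$ through the restricted operators of Lemma~\ref{lem:g-linear-algebra}. The design space $\Delta(\Pi_\mu)$ acts only through $V(w)$, which ranges over the convex set $\mathcal{C} \ceq \overline{\mathrm{conv}}\{M_\mu : \mu \in \Pi_\mu\} \subseteq \mathbb{S}_+(\cG)$. This set is compact, since $\mu \mapsto M_\mu$ is continuous on the compact simplex. By Carath\'eodory, any point of $\mathcal{C}$ is attained by a finitely supported $w$.

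\textbf{Existence of $w_D$.} Since $\spn\bigcup_\mu \Image M_\mu = \cG$, there are finitely many policies whose $M_\mu$ together span $\cG$. The uniform mixture $w$ over them is admissible, so $\log\det\nolimits_{\cG} V(w) > -\infty$. The map $A \mapsto \log\det\nolimits_{\cG}(A)$ is upper semicontinuous and concave on $\mathbb{S}_+(\cG)$, because $A \mapsto U^\top A U$ is linear. It therefore attains its maximum on the compact set $\mathcal{C}$, at some $V(w_D)$ that is finite and hence admissible. This gives $\argmax \subseteq \cW_{\cG}$.

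\textbf{Upper bound at $w_D$.} Fix any $\mu$ and set $H \ceq M_\mu - V(w_D) \in \mathbb{S}(\cG)$. For $t \in [0,1]$ the mixture $(1-t)w_D + t\delta_\mu$ is feasible, so by optimality the one-sided derivative at $t=0$ of $\log\det\nolimits_{\cG}(V(w_D) + tH)$ is nonpositive. Item~3 of Lemma~\ref{lem:g-linear-algebra} evaluates this derivative as $\tr(V(w_D)^\dagger M_\mu) - \tr(V(w_D)^\dagger V(w_D)) = \tr(V(w_D)^\dagger M_\mu) - d^\diamond$. Hence $\max_\mu \tr(V(w_D)^\dagger M_\mu) \leq d^\diamond$.

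\textbf{Lower bound for every $w$.} If $w$ is not admissible, then some $\Image M_\mu \not\subseteq \Image V(w)$, and by convention~\eqref{eq:pinv-convention} the objective is $+\infty$. If $w$ is admissible, we use that a maximum dominates an average. By linearity of the trace,
\begin{equation}
    \max_\mu \tr\left(V(w)^\dagger M_\mu\right) \geq \int \tr\left(V(w)^\dagger M_\mu\right) w(\ud\mu) = \tr\left(V(w)^\dagger V(w)\right) = d^\diamond.
\end{equation}
Together with the upper bound, this shows the minimum equals $d^\diamond$ and is attained at every $w_D$.

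\textbf{Main obstacle.} The delicate step is the directional derivative. Item~3 of Lemma~\ref{lem:g-linear-algebra} requires $A \in \mathbb{S}_{++}(\cG)$ and $H \in \mathbb{S}(\cG)$. Both hold here: $V(w_D)$ is admissible, and every $M_\mu$ is supported on $\cG$ because each $g_\mu(b) \in \cG$. The remaining care is compactness and measurability for general (non-finitely-supported) $w$. We handle it by working with the image set $\mathcal{C}$ and invoking Carath\'eodory, so that everything reduces to finite mixtures.
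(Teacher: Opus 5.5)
Your proposal is correct and follows the paper's proof essentially step for step. It obtains existence and admissibility of $w_D$ via compactness and upper semicontinuity of $\log\det\nolimits_{\cG}$, the upper bound from the first-order optimality condition along $(1-t)w_D + t\delta_\mu$ using item~3 of Lemma~\ref{lem:g-linear-algebra}, and the lower bound from ``the max dominates the $w$-average,'' which gives $\tr\left(V(w)^\dagger V(w)\right) = d^\diamond$. Your explicit $+\infty$ case for non-admissible $w$ and the Carath\'eodory reduction only tighten details the paper treats briefly; the reduction is valid because the convex hull of the compact set $\{M_\mu\}$ in finite dimension is already closed.
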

\begin{proof}
    First of all, note that $w_D$ must be admissible, since otherwise the objective is $+\infty$ (Equation~\eqref{eq:pinv-convention}) and can be made finite by any admissible distribution, e.g. $w_0$ the uniform distribution over any basis of $\cG$. Therefore,
    \begin{equation}
        \rank V(w_D) = d^\diamond, \qquad \tr\left(V(w_D)^\dagger V(w_D)\right) = d^\diamond.
    \end{equation}

    Also note that $\log\det\nolimits_{\cG}V(w_D) > -\infty$, and $w_D$ is well defined: $w \mapsto V(w)$ has compact range and $\log\det_\cG$ is upper semi-continuous on it.\\
    
    Now, to prove the equality, we will prove the upper and lower bounds separately.\\

    \underline{Upper bound}

    For any $\mu \in \Pi_{\mu}$ and $t \in [0, 1)$, note that $(1-t)w_D + t \delta_\mu \in \cW_{\cG}$ and:
    \begin{equation}
        V((1-t)w_D + t \delta_\mu) = (1-t)V(w_D) + tM_{\mu}.
    \end{equation}

    From the first-order optimality condition, and given $V(w_D) \in \mathbb{S}_{++}(\cG)$ and $M_\mu \in \mathbb{S}_+(\cG)$,
    \begin{align*}
        0
        &\geq \frac{d}{dt} \log {\det}_{\cG} \left[V((1-t)w_D + t \delta_\mu)\right] \big|_{t=0+}\\
        &= \frac{d}{dt} \log {\det}_{\cG} \left[(1-t)V(w_D) + tM_{\mu}\right] \big|_{t=0+}\\
        &= \tr\left(V(w_D)^\dagger \left(M_\mu - V(w_D)\right)\right)\\
        &= \tr\left(V(w_D)^\dagger M_\mu\right) - d^\diamond.
    \end{align*}

    Hence, $\max_{\mu} \tr\left(V(w_D)^\dagger M_{\mu}\right) \leq d^\diamond$.\\

    \underline{Lower bound}

    \noindent For any $\tilde w \in \cW_{\cG}$, we have:
    \begin{align*}
        \max_{\mu} \tr\left(V(\tilde w)^\dagger M_\mu\right)
        &\geq \int_{\Pi_\mu} \tr\left(V(\tilde w)^\dagger M_\mu\right) \tilde{w}(\ud \mu)\\
        &= \tr\left(V(\tilde w)^\dagger V(\tilde w)\right)\\
        &= d^\diamond.
    \end{align*}

    This implies the lower bound and concludes the proof.
\end{proof}

One direct consequence of Lemma~\ref{lem:grouped-kw} is that it allows us to directly bound the norm of the uniformly-centered features.

\begin{corollary}[Pointwise grouped Kiefer-Wolfowitz]\label{cor:grouped-kw-point}
    Using the same notation,
    \begin{equation}
        \min_{w \in \Delta(\Pi_{\mu})} \max_{\mu, b}\ \dualnorm*{g_{\mu}(b)}{V(w)}^2 \leq |\cB|d^{\diamond}.
    \end{equation}
\end{corollary}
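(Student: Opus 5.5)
The plan is to plug the D-optimal design $w_D$ from Lemma~\ref{lem:grouped-kw} into the pointwise objective and bound each rank-one term by the grouped trace that the lemma controls. Since $w_D \in \cW_{\cG}$, we have $V(w_D) \in \mathbb{S}_{++}(\cG)$. Every $g_\mu(b)$ lies in $\cG = \Image V(w_D)$, so each dual norm $\dualnorm*{g_{\mu}(b)}{V(w_D)}^2 = g_\mu(b)^\top V(w_D)^\dagger g_\mu(b)$ is finite and given by the pseudo-inverse formula in Equation~\eqref{eq:extended-inverse-norm-pinv}. The minimum over $w$ is at most the value at $w_D$, so it suffices to show that
\[
\max_{\mu,b} \dualnorm*{g_{\mu}(b)}{V(w_D)}^2 \le |\cB| d^\diamond .
\]

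The key step compares a single centered feature with its group. Fix $\mu$ and $b$. Since $V(w_D)^\dagger \succeq 0$, every summand $g_\mu(b')^\top V(w_D)^\dagger g_\mu(b')$ is nonnegative. Dropping all indices $b' \neq b$ therefore gives
\[
g_\mu(b)^\top V(w_D)^\dagger g_\mu(b) \le \sum_{b'\in\cB} g_\mu(b')^\top V(w_D)^\dagger g_\mu(b') = |\cB|\,\tr\left(V(w_D)^\dagger M_\mu\right),
\]
where the equality uses $M_\mu = \frac{1}{|\cB|}\sum_{b'} g_\mu(b')g_\mu(b')^\top$ and linearity of the trace. Because $\Image M_\mu \subseteq \cG = \Image V(w_D)$, the extended trace of Equation~\eqref{eq:pinv-convention} agrees with the ordinary trace here, so no infinite-value convention is triggered.

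To conclude, the upper-bound part of Lemma~\ref{lem:grouped-kw} gives $\tr(V(w_D)^\dagger M_\mu) \le d^\diamond$ for every $\mu$. Taking the maximum over $(\mu,b)$ yields the claimed bound $|\cB| d^\diamond$.

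The only delicate point is the bookkeeping on $\cG$: membership of $g_\mu(b)$ in $\Image V(w_D)$, and consistency of the extended dual norm and trace with the pseudo-inverse expressions. Both follow from the admissibility of $w_D$ established in Lemma~\ref{lem:grouped-kw}. The loss of the factor $|\cB|$ comes entirely from dropping the nonnegative terms $b' \neq b$, which is unavoidable at this level of generality.
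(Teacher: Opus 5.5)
Your proposal is correct and matches the paper's proof essentially step for step. Both take the D-optimal design $w_D$ from Lemma~\ref{lem:grouped-kw}, bound the single term by the sum over $b' \in \cB$, identify that sum with $|\cB|\tr(V(w_D)^\dagger M_\mu)$, and then apply the upper-bound part of that lemma. Your additional bookkeeping on $\cG$ (that $g_\mu(b) \in \Image V(w_D)$ by admissibility, so the extended dual norm and trace reduce to their pseudo-inverse forms) is a harmless clarification that the paper leaves implicit.
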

\begin{proof}
    Take $w_D$ the D-optimal design of Lemma~\ref{lem:grouped-kw}. Then, for any $\mu, b$:
    \begin{align*}
        \dualnorm*{g_{\mu}(b)}{V(w_D)}^2
        &\leq \sum_{b' \in \cB} \dualnorm*{g_{\mu}(b')}{V(w_D)}^2\\
        &= \tr{\left(V(w_D)^\dagger\sum_{b'\in\cB}g_{\mu}(b')g_{\mu}(b')^\top\right)}\\
        &= |\cB|\tr{\left(V(w_D)^\dagger M_\mu\right)}\\
        &\leq |\cB| d^\diamond.
    \end{align*}

    \noindent Hence the upper bound is valid for the minimizer.
\end{proof}

\subsection{Solving for the warm-up allocation}\label{appx:warmup-sec-solve}

The warm-up needs an approximate optimal design $w$ with $\max_{\mu, b}\dualnorm*{g_\mu(b)}{V(w)}^2 \leq (1+\epsilon)|\cB|d^{\diamond}$. By the proof of Corollary~\ref{cor:grouped-kw-point} it suffices to compute a design $w_D$ whose grouped trace certificate is at most $(1+\epsilon)d^\diamond$.

We first note that it is sufficient for $w_D$ to assign weights to pure policies, i.e. the optimization can be carried over $\Pi_\mu = \Delta(\cA)$ directly instead of the more complex $\Delta(\Pi_\mu) = \Delta(\Delta \cA)$. This is shown below.

\begin{lemma}[Reduction to pure leader actions]\label{lem:pure-actions}
    Let $\delta_a$ denote the pure policy concentrated on action $a \in \cA$. Then, for every $V \in \mathbb{S}_{++}(\cG)$,
    \begin{equation}
        \max_{\mu \in \Pi_{\mu}} \ \tr\left(V^{\dagger}M_{\mu}\right) = \max_{a \in \cA} \ \tr\left(V^{\dagger}M_{\delta_a}\right),
    \end{equation}
    and,
    \begin{equation}
        \max_{\mu \in \Pi_\mu, \, b \in \cB} \ \dualnorm*{g_{\mu}(b)}{V}^2 = \max_{a \in \cA, \, b \in \cB} \ \dualnorm*{g_{\delta_a}(b)}{V}^2.
    \end{equation}
\end{lemma}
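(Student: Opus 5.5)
The key observation is that both maximized quantities are convex functions of $\mu$ on the simplex $\Delta(\cA)$. A convex function on a polytope attains its maximum at a vertex, and the vertices of $\Delta(\cA)$ are exactly the pure policies $\delta_a$. Both equalities then follow at once: the inequality "$\geq$" is trivial because $\delta_a \in \Pi_\mu$, and "$\leq$" comes from convexity.

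\textbf{Linearity of the centered features.} First I will check that $\mu \mapsto g_\mu(b)$ is linear (hence affine) in $\mu$. By definition, $\psi_\mu(b) = \sum_{a} \mu(a)\phi(a,b)$ is linear in $\mu$. The uniform average $\frac{1}{|\cB|}\sum_{b'}\psi_\mu(b')$ is also linear in $\mu$, so
\begin{equation}
    g_\mu(b) = \sum_{a\in\cA}\mu(a)\, g_{\delta_a}(b).
\end{equation}
Since $V \in \mathbb{S}_{++}(\cG)$ and each $g_\mu(b) \in \cG = \Image V$, the extended dual norm reduces to the finite quadratic form $\dualnorm*{u}{V}^2 = u^\top V^\dagger u$ with $V^\dagger \succeq 0$. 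The map $u \mapsto u^\top V^\dagger u$ is convex. Composing it with a linear map gives, for each fixed $b$, a convex function $\mu \mapsto \dualnorm*{g_\mu(b)}{V}^2$. Explicitly, Jensen's inequality gives
\begin{equation}
    \dualnorm*{g_\mu(b)}{V}^2 \leq \sum_a \mu(a)\dualnorm*{g_{\delta_a}(b)}{V}^2 \leq \max_{a}\dualnorm*{g_{\delta_a}(b)}{V}^2 .
\end{equation}
Taking the maximum over $b$ proves the second identity.

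\textbf{First identity.} For the trace, I will write
\begin{equation}
    \tr(V^\dagger M_\mu) = \frac{1}{|\cB|}\sum_{b} g_\mu(b)^\top V^\dagger g_\mu(b).
\end{equation}
This is a sum of convex functions of $\mu$, by the same argument, and is therefore convex. Jensen's inequality then gives $\tr(V^\dagger M_\mu) \leq \sum_a \mu(a)\tr(V^\dagger M_{\delta_a}) \leq \max_a \tr(V^\dagger M_{\delta_a})$.

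\textbf{Main obstacle.} The only subtlety is well-definedness. The conventions in Equation~\eqref{eq:extended-inverse-norm-pinv} and Equation~\eqref{eq:pinv-convention} allow infinite values when a vector or range falls outside $\Image V$. Here $V \in \mathbb{S}_{++}(\cG)$ means $\Image V = \cG$, and $\Image M_\mu \subseteq \cG$ by the definition of $\cG$. So every quantity is finite and equals the ordinary pseudo-inverse expression, and the convexity argument goes through without extended-real bookkeeping. The maxima over $\Pi_\mu$ are attained because the functions are continuous on the compact simplex; in any case, the argument only needs the supremum to be bounded by the vertex values.
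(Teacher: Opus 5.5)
Your proof is correct and matches the paper's argument: the paper also uses $g_\mu(b)=\sum_a \mu(a)\, g_{\delta_a}(b)$ to write both objectives as PSD quadratic forms $\mu^\top U_b \mu$ in $\mu$, and then uses that a convex function on the simplex attains its maximum at a vertex $\delta_a$. Your explicit Jensen step, and your check that $\Image V = \cG$ keeps every quantity finite, just spell out details the paper leaves implicit.
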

\begin{proof}
    Note that every policy $\mu \in \Pi_{\mu}$ is a mixture of pure policies:
    \begin{equation}
        \mu = \sum_{a \in \cA} \mu(a)\delta_a \qquad \text{such that} \qquad g_{\mu}(b) = \sum_{a \in \cA} \mu(a) g_{\delta_a}(b) = \inner{\mu}{\Phi_b},
    \end{equation}
    with $\Phi_b = \left[g_{\delta_a}(b)\right]_{a \in \cA}$.

    Therefore, since $g_{\mu}(b) \in \cG$,
    \begin{equation}
        \dualnorm*{g_{\mu}(b)}{V}^2 = \mu^\top U_b \mu, \quad \tr(V^{\dagger}M_{\mu}) = \frac{1}{|\cB|}\sum_{b \in \cB}\dualnorm*{g_{\mu}(b)}{V}^2 = \mu^\top\left(\frac{1}{|\cB|}\sum_{b \in \cB} U_b\right)\mu,
    \end{equation}
    with $U_b \ceq \Phi_b^\top V^\dagger \Phi_b \succeq 0$

    This concludes the proof as convex functions on a polytope attain their maximum at a vertex.
\end{proof}

This was the last key needed for building a Frank-Wolfe iterative algorithm to solve for the policy distribution by only optimizing over pure policies.

Specifically, for any running uniform coverage matrix $V(w)$ with admissible $w$, we can compute the directional derivative for any $w' \in \Delta(\Pi_{\mu})$ as
\begin{equation}
    k(w, w') \ceq \left.\frac{\ud}{\ud t} \log \det\nolimits_{\cG} V(w + t(w' - w)) \right|_{t=0^+} = \int_{\Pi_\mu} h_w(\mu) w'(\ud \mu) - d^\diamond,
\end{equation}
where $h_w(\mu) \ceq \tr\left(V(w)^\dagger M_{\mu}\right)$. By linearity, we can concentrate the mass of any optimal $w'$ on a single policy $\mu^+ \in \Pi_\mu$, while by Lemma~\ref{lem:pure-actions}, this $\mu^+$ can be taken as a pure policy. These two remarks reduce the optimization of $w'$ to an optimization over $\cA$:
\begin{equation}
    d^\diamond + \max_{w' \in \Delta(\Pi_\mu)} \ k(w, w') = \max_{\mu^+ \in \Pi_\mu} \ h_w(\mu^+) = \max_{a \in \cA} \ h_w(\delta_a).
\end{equation}

With these simplifications, the algorithm is presented in Algorithm~\ref{alg:fw}. Note that it is purely offline and costs $\cO(|\cA||\cB|d^2)$ computations per iteration. Additionally, it terminates in finite time for every $\epsilon > 0$ by standard Frank-Wolfe analysis (see e.g. \citet{zhao2023analysis}).

\begin{algorithm}[t]
    \caption{Frank--Wolfe for optimal design}
    \label{alg:fw}
    \begin{algorithmic}[1]
    \State \textbf{Input:} features $\{\phi(a, b)\}_{a\in \cA, b \in \cB}$, tolerance $\epsilon > 0$.\vspace{2pt}
    \State \cm{Admissible initialization}\vspace{2pt}
    \State Pick $d^{\diamond}$ pairs $(a_i, b_i)$ with $\{g_{\delta_{a_i}}(b_i)\}_{i \leq d^{\diamond}}$ a basis of $\cG$.
    \State Let $w_0$ be uniform on the distinct $a_i$.\vspace{4pt}
    \For{$t = 0, 1, 2, \dots$}\vspace{4pt}
        \State \cm{Linear oracle (enumeration over pure actions, Lemma~\ref{lem:pure-actions})}\vspace{2pt}
        \State $V_t \gets V(w_t)$; \quad $h_t(a) \gets \tr\left(V_t^{\dagger}M_{\delta_a}\right)$ for $a \in \cA$; \quad $a_t \in \argmax_{a}h_t(a)$.\label{lst:line:fw-lmo}\vspace{6pt}
        \State \cm{Duality gap and stopping test}\vspace{2pt}
        \State $\mathrm{gap}_t \gets h_t(a_t) - d^{\diamond}$; \quad \textbf{if} $\mathrm{gap}_t \leq \epsilon\, d^{\diamond}$ \textbf{then return} $w_t$.\vspace{6pt}
        \State \cm{Update}\vspace{2pt}
        \State $\gamma_t \gets \argmax_{\gamma \in [0, 1]} \log\det\nolimits_{\cG}\left[(1-\gamma)V_t + \gamma M_{\delta_{a_t}}\right]$; \quad $w_{t+1} \gets (1-\gamma_t)w_t + \gamma_t\delta_{a_t}$.
    \EndFor
    \end{algorithmic}
\end{algorithm}

Running Algorithm~\ref{alg:fw} provides the following guarantee.

\begin{proposition}[Certified output]\label{prop:fw-certificate}
    If Algorithm~\ref{alg:fw} returns $w_t$, then $\max_{\mu}\tr\left(V(w_t)^{\dagger}M_{\mu}\right) \leq (1+\epsilon)d^{\diamond}$ and consequently
    \begin{equation}
        \max_{\mu, b}\dualnorm*{g_{\mu}(b)}{V(w_t)}^2 \leq (1+\epsilon)\,|\cB|\,d^{\diamond}.
    \end{equation}
\end{proposition}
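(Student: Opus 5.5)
The argument is direct: the stopping test of Algorithm~\ref{alg:fw} already certifies the trace bound, and Lemma~\ref{lem:pure-actions} together with the argument of Corollary~\ref{cor:grouped-kw-point} converts it into the pointwise bound. Suppose the algorithm returns $w_t$ at iteration $t$. I first check that $w_t$ is admissible. The initialization $w_0$ is uniform over actions $a_i$ whose centered features $g_{\delta_{a_i}}(b_i)$ form a basis of $\cG$. Each $g_{\delta_{a_i}}(b_i)$ lies in $\Image M_{\delta_{a_i}}$, so $V(w_0)\in\mathbb{S}_{++}(\cG)$. Admissibility is preserved along the iterates: for $\gamma_t<1$ we have $(1-\gamma_t)V_t+\gamma_t M_{\delta_{a_t}}\succeq(1-\gamma_t)V_t$. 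The case $\gamma_t=1$ cannot be the line-search maximizer unless $M_{\delta_{a_t}}$ alone is full rank on $\cG$, since otherwise $\log\det_{\cG}=-\infty$ there, and in that case admissibility holds anyway. Hence $V_t\in\mathbb{S}_{++}(\cG)$, and all traces and dual norms below are the finite pseudo-inverse quantities.

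Next comes the trace certificate. By the stopping rule, $h_t(a_t)-d^\diamond\le\epsilon d^\diamond$, and since $a_t$ maximizes $h_t$ over $\cA$,
\begin{equation}
\max_{a\in\cA}\tr\left(V(w_t)^\dagger M_{\delta_a}\right)\le(1+\epsilon)d^\diamond.
\end{equation}
Lemma~\ref{lem:pure-actions}, applied with $V=V(w_t)\in\mathbb{S}_{++}(\cG)$, lifts this maximum from pure actions to all $\mu\in\Pi_\mu$. This gives the first claim, $\max_\mu\tr(V(w_t)^\dagger M_\mu)\le(1+\epsilon)d^\diamond$.

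Finally, the pointwise bound follows as in the proof of Corollary~\ref{cor:grouped-kw-point}. Each $g_\mu(b)\in\cG=\Image V(w_t)$, so every term in the sum below is finite, and for any $\mu,b$,
\begin{equation}
\dualnorm*{g_\mu(b)}{V(w_t)}^2\le\sum_{b'}\dualnorm*{g_\mu(b')}{V(w_t)}^2=|\cB|\,\tr\left(V(w_t)^\dagger M_\mu\right)\le(1+\epsilon)|\cB|d^\diamond.
\end{equation}
Taking the maximum over $\mu$ and $b$ concludes. The only delicate point is the admissibility bookkeeping in the first step, which is what makes the pseudo-inverse identities legitimate. The rest is a direct chaining of the stopping test, Lemma~\ref{lem:pure-actions}, and the sum-over-$b$ bound.
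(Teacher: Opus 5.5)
Your proposal is correct and follows the paper's proof. The stopping test bounds the trace over pure actions, Lemma~\ref{lem:pure-actions} lifts this bound to all $\mu$, and the sum-over-$b$ argument of Corollary~\ref{cor:grouped-kw-point} then gives the pointwise bound with $(1+\epsilon)d^{\diamond}$ in place of $d^{\diamond}$. Your admissibility bookkeeping covers a point the paper leaves implicit, though it also follows directly from the stopping test: under the pseudo-inverse convention, a finite value of $\max_{a}\tr(V(w_t)^{\dagger}M_{\delta_a})$ already forces $\Image V(w_t) = \cG$.
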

\begin{proof}
    The stopping test gives
    \begin{equation}
        h_t(a_t) = \max_{a}\ \tr(V(w_t)^{\dagger}M_{\delta_a}) \leq (1+\epsilon)d^{\diamond},
    \end{equation}
    which extends to all $\mu \in \Pi_\mu$ by Lemma~\ref{lem:pure-actions}.

    The result then follows exactly as in the proof of Corollary~\ref{cor:grouped-kw-point}, replacing the bound $d^{\diamond}$ by $(1+\epsilon)d^{\diamond}$.
\end{proof}

\subsection{Rounding and certificate}\label{appx:warmup-sec-round}

We now focus on turning the distribution over policies into an allocation strategy. Precisely, after obtaining a certified output from the Frank-Wolfe algorithm (Algorithm~\ref{alg:fw}) with tolerance $\epsilon$ (Proposition~\ref{prop:fw-certificate}), we apply a support reduction mechanism (Lemma~\ref{lem:support-reduction}) followed by integer rounding to turn the distribution over policies into an allocation (Definition~\ref{def:warm-up-allocation}).

\begin{lemma}[Carathéodory support reduction]\label{lem:support-reduction}
    Let $w^\star$ be an output of Algorithm~\ref{alg:fw} satisfying $\max_{\mu, b}\norm{g_\mu(b)}^2_{V(w^\star)^\dagger} = \rho < \infty$. Then, there exists a design $w^*_0$ of support $s \leq d^\diamond(d^\diamond + 1)/2 +1$ such that
    \begin{equation}
        \max_{\mu, b}\dualnorm*{g_\mu(b)}{V(w^\star_0)}^2 = \rho,
    \end{equation}
    and $w^*_0$ can be computed efficiently from $w^*$.
\end{lemma}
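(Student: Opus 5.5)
Since $\max_{\mu,b}\dualnorm*{g_\mu(b)}{V(w)}^2$ depends on $w$ only through $V(w)$, we apply Carathéodory's theorem in the space of symmetric operators supported on $\cG$, where $V(w)$ lives. The Frank--Wolfe output $w^\star$ of Algorithm~\ref{alg:fw} has finite support, since it is built from $w_0$ by finitely many convex updates each adding one pure policy $\delta_{a_t}$. So $w^\star = \sum_{j=1}^m w_j \delta_{a_j}$ with $m$ finite, and $V(w^\star) = \sum_j w_j M_{\delta_{a_j}}$ is a convex combination of the points $M_{\delta_{a_j}} \in \mathbb{S}_{+}(\cG)$. Since $\mathbb{S}(\cG)$ is isomorphic, via $A \mapsto U^\top A U$, to the symmetric $d^\diamond \times d^\diamond$ matrices, it has dimension $D \ceq d^\diamond(d^\diamond+1)/2$.

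By Carathéodory's theorem in $\R^D$, any point in the convex hull of finitely many points is a convex combination of at most $D+1$ of them. There is therefore a probability vector $w^\star_0$, supported on at most $s \le d^\diamond(d^\diamond+1)/2 + 1$ of the $a_j$, with $V(w^\star_0) = V(w^\star)$. The reduction is constructive and efficient by the standard elimination step. While the support size exceeds $D+1$, the vectors $(\mathrm{vec}(U^\top M_{\delta_{a_j}} U), 1)$ are affinely dependent, so we solve a linear system to find a nonzero $c$ with $\sum_j c_j M_{\delta_{a_j}} = 0$ and $\sum_j c_j = 0$. We then move $w \gets w - t c$ with the largest $t$ keeping $w \ge 0$. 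This preserves $V$ and total mass and removes at least one atom. Each step costs polynomial time in $m$ and $D$, and at most $m$ steps are needed.

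Since $V(w^\star_0) = V(w^\star)$ exactly, the image is unchanged, so $w^\star_0$ is admissible and lies in $\cW_{\cG}$. The pseudo-inverse and the extended dual norm \eqref{eq:extended-inverse-norm-pinv} are unchanged too, so $\max_{\mu,b}\dualnorm*{g_\mu(b)}{V(w^\star_0)}^2 = \rho$ holds with equality.

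The only delicate point is bookkeeping: we must make sure the dimension count uses $\mathbb{S}(\cG)$ rather than $\mathbb{S}^d$. This works because each $M_{\delta_a}$ is supported on $\cG$: we have $g_{\delta_a}(b) \in \cG$, hence $M_{\delta_a} = P_\cG M_{\delta_a} P_\cG$. Finite support of $w^\star$ follows directly from the Frank--Wolfe construction.
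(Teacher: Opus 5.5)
Your proposal is correct and follows the same route as the paper. You apply Carath\'eodory's theorem in $\mathbb{S}(\cG)\cong\mathbb{S}^{d^\diamond}$, which has dimension $d^\diamond(d^\diamond+1)/2$, use the finite support of the Frank--Wolfe output, and obtain $V(w^\star_0)=V(w^\star)$ exactly, which leaves the dual-norm maximum unchanged; the only difference is that you write out the affine-dependence elimination step, where the paper cites the standard polynomial Carath\'eodory reduction of Gr\"otschel et al.
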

\begin{proof}
    Set $\cV \ceq \operatorname{conv}\left\{M_\mu : \mu \in \Pi_\mu\right\}$ the convex hull spanned by uniform coverage matrices. Note that these matrices live in a convex cone of ambient dimension
    \begin{equation}
        D \ceq \dim\left(\mathbb{S}(\cG)\right) = \dim\left(\mathbb{S}^{d^\diamond}\right) = d^\diamond(d^\diamond + 1)/2,
    \end{equation}
    which follows from the isomorphism $\mathbb{S}(\cG) \cong \mathbb{S}^{d^{\diamond}}$.

    $\cV$ is compact and convex as $\{M_\mu\}_{\mu}$ is the continuous image of the compact $\Pi_\mu = \Delta(\cA)$.

    Therefore, by Carathéodory's theorem, for some $s \leq D+1$, there exists a set of policies $\mu_1, \dots, \mu_s$ with associated matrices $M_{\mu_1}, \dots, M_{\mu_s}$, such that:
    \begin{equation}
        \int_{\Pi_\mu} M_\mu w^*(\ud \mu) = \sum_{i=1}^s \alpha_i M_{\mu_i}, \quad \text{for some} \ \alpha \in \Delta_{s-1}.
    \end{equation}

    In particular, there exists a design $w^*_0$ of support $s$ such that $V(w^*) = V(w^*_0)$.

    Now, note that by definition in the Frank-Wolfe procedure:
    \begin{equation}
        \left\lvert \operatorname{supp}(w_{t+1}) \right\rvert \leq \left\lvert \operatorname{supp}(w_{t}) \right\rvert + 1,
    \end{equation}
    thus $\left\lvert \operatorname{supp}(w^*) \right\rvert \leq d^\diamond + T$, where $T < \infty$ is the stopping time of Algorithm~\ref{alg:fw}.

    The coefficient $w^*$ has finite support, and therefore we can apply the standard polynomial Carathéodory reduction \citep{Groetschel1988} to recover some valid $w^*_0$.

    This concludes the proof.
\end{proof}

This support reduction allows for the following simple rounding scheme.

\begin{definition}[Warm-up allocation]\label{def:warm-up-allocation}
    Let $w^*_0 = \{w^*_{0, i}\}_{i \leq s}$ be the support-reduced version of the output of Algorithm~\ref{alg:fw}, obtained by Lemma~\ref{lem:support-reduction}. Define $o \in \mathbb{N}^s$ the order of $w^*_0$ breaking ties by index such that:
    \begin{equation}
        w^\star_{0, o_1} \geq \dots \geq w^\star_{0, o_s}.
    \end{equation}

    Given $n_{\text{warm}} \geq s$, the warm-up allocation is the vector $v \in \mathbb{N}^s$ defined by
    \begin{equation}
        v_{o_i} \ceq \underbrace{\left\lceil \left( n_{\text{warm}} - s \right) w^\star_{0,o_i} \right\rceil}_{\text{base}_{o_i}} + \mathbf{1}\left\{ i \leq \Delta \right\},
        \qquad \Delta \ceq n_{\text{warm}} - \sum_{j \leq s} \text{base}_j.
    \end{equation}
    That is, the residual samples are given to the heaviest support points and $1^\top v = n_{\text{warm}}$.
\end{definition}

\begin{remark}
    We do not argue that this is the most efficient rounding procedure, but this is simple enough for our algorithm.
\end{remark}

\begin{definition}[Allocation uniform coverage]
    Let $v$ be the warm-up allocation with arbitrary enumeration $\{\mu^{\text{warm}}_{\tau}\}_{\tau \leq n_{\text{warm}}}$. We denote the allocation uniform coverage by:
    \begin{equation}
        V_{\text{warm}} \ceq \sum_{\tau = 1}^{n_{\text{warm}}} M_{\mu_\tau^{\text{warm}}} = \frac{1}{|\cB|}\sum_{\tau=1}^{n_{\text{warm}}}\sum_{b \in \cB} g_{\mu^{\text{warm}}_\tau}(b)g_{\mu^{\text{warm}}_\tau}(b)^\top.
    \end{equation}
\end{definition}

Note that $V_{\text{warm}}$ is not normalized, so that its dual norm shrinks when $n_{\text{warm}}$ grows. We make that precise in the lemma below.

Before introducing the lemma, let us define the following term:
\begin{equation}
    C_{\text{rounding}} \ceq \epsilon^{-1}(1+\epsilon)s \leq (2\epsilon)^{-1}(1+\epsilon)(d^\diamond(d^\diamond+1) + 2).
\end{equation}

\begin{lemma}[Warm-up schedule trace guarantee]\label{lem:warmup-schedule-trace}
    Let $V_{\text{warm}}$ denote the allocation uniform coverage. Then, if $n_{\text{warm}} \geq C_{\text{rounding}}$,
    \begin{equation}
        \max_{\mu}
        \tr\left(V_{\text{warm}}^\dagger M_\mu\right)
        \leq
        \frac{(1+\epsilon)^2d^\diamond}{n_{\text{warm}}}
        \leq
        \frac{(1+3\epsilon)d^\diamond}{n_{\text{warm}}},
    \end{equation}
    the last step since $\epsilon \leq 1$.
\end{lemma}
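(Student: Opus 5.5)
The plan is to compare $V_{\text{warm}}$ in the Loewner order with a scaled copy of the design covariance $V(w^\star_0)$, and then reuse the certificate of Proposition~\ref{prop:fw-certificate}. Recall that $V(w^\star_0) = V(w^\star)$ by Lemma~\ref{lem:support-reduction}, so the certificate applies to the support-reduced design. It gives
\begin{equation}
    \max_{\mu} \tr\left(V(w^\star_0)^\dagger M_\mu\right) \leq (1+\epsilon)d^\diamond .
\end{equation}

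First, I will show that the rounded allocation dominates the fractional one. By Definition~\ref{def:warm-up-allocation}, each support point satisfies
\begin{equation}
    v_i \geq \text{base}_i = \left\lceil (n_{\text{warm}} - s) w^\star_{0,i}\right\rceil \geq (n_{\text{warm}} - s) w^\star_{0,i}.
\end{equation}
One should first check that $\Delta \geq 0$, which holds because $\sum_i \text{base}_i \leq (n_{\text{warm}} - s) + s = n_{\text{warm}}$. Since each $M_{\mu_i} \succeq 0$, it follows that
\begin{equation}
    V_{\text{warm}} = \sum_{i \leq s} v_i M_{\mu_i} \succeq (n_{\text{warm}} - s) \sum_{i\leq s} w^\star_{0,i} M_{\mu_i} = (n_{\text{warm}} - s)\, V(w^\star_0).
\end{equation}
Both matrices lie in $\mathbb{S}_{++}(\cG)$: the right-hand side is admissible once $n_{\text{warm}} > s$, which is guaranteed by $n_{\text{warm}} \geq C_{\text{rounding}} > s$. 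So item~2 of Lemma~\ref{lem:g-linear-algebra} applies and yields $V_{\text{warm}}^\dagger \preceq (n_{\text{warm}} - s)^{-1} V(w^\star_0)^\dagger$. Using monotonicity of $A \mapsto \tr(AM)$ for $M \succeq 0$, as in the proof of Lemma~\ref{lem:sc-2}, we get for every $\mu$
\begin{equation}
    \tr\left(V_{\text{warm}}^\dagger M_\mu\right) \leq \frac{(1+\epsilon)d^\diamond}{n_{\text{warm}} - s}.
\end{equation}

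The last step turns $n_{\text{warm}} - s$ into $n_{\text{warm}}/(1+\epsilon)$. The condition $n_{\text{warm}} \geq C_{\text{rounding}} = \epsilon^{-1}(1+\epsilon)s$ gives $s \leq \epsilon n_{\text{warm}}/(1+\epsilon)$. Hence
\begin{equation}
    n_{\text{warm}} - s \geq \frac{n_{\text{warm}}}{1+\epsilon},
\end{equation}
which produces the factor $(1+\epsilon)^2$. Finally, $(1+\epsilon)^2 = 1 + 2\epsilon + \epsilon^2 \leq 1+3\epsilon$ for $\epsilon \leq 1$.

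The only delicate point is the bookkeeping with pseudo-inverses on $\cG$, namely that the Loewner inversion is legitimate. This is handled by checking that $\Image V_{\text{warm}} = \cG$, which follows from the domination above, so item~2 of Lemma~\ref{lem:g-linear-algebra} applies directly.
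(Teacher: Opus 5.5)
Your proposal is correct and follows the paper's proof essentially step for step. The ceiling rounding gives the Loewner domination $V_{\text{warm}} \succeq (n_{\text{warm}}-s)\,V(w^\star_0)$; this passes to pseudo-inverses via Lemma~\ref{lem:g-linear-algebra}, combined with $V(w^\star_0)=V(w^\star)$ and the Frank--Wolfe certificate; and $n_{\text{warm}}\geq C_{\text{rounding}}$ gives $(1+\epsilon)(n_{\text{warm}}-s)\geq n_{\text{warm}}$. Your extra checks, $\Delta\geq 0$ and admissibility of $V_{\text{warm}}$ (which follows from the finite certificate via the admissibility equivalence), are details the paper leaves implicit.
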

\begin{proof}
    By Proposition~\ref{prop:fw-certificate}, $\max_{\mu}\tr\left(V(w^*)^{\dagger}M_{\mu}\right) \leq \rho$ with $\rho \ceq (1+\epsilon)d^{\diamond}$. Apply the support-reduction of Lemma~\ref{lem:support-reduction} and the rounding of Definition~\ref{def:warm-up-allocation} to obtain the density and allocation $w^*_0$ and $v$.

    Now note that the rounding is such that,
    \begin{equation}
        v_i \geq (n_{\text{warm}} - s)w^*_{0, i}, \qquad \forall i \in [s],
    \end{equation}
    and therefore
    \begin{equation}
        V_{\text{warm}} = \sum_{\tau = 1}^{n_{\text{warm}}} M_{\mu^{\text{warm}}_\tau} = \sum_{i \leq s} v_i M_{\mu_i} \succeq (n_{\text{warm}} - s) V(w^*_0),
    \end{equation}
    with $\{\mu_i\}_{i \leq s}$ the policies in the support of the allocation.

    Hence, for any $\mu$ (elementary property in $\cG$, see Lemma~\ref{lem:g-linear-algebra}),
    \begin{equation}
        \tr\left(V_{\rm warm}^{\dagger}M_\mu\right)
        \leq
        \frac{\tr\left(V(w_0^*)^\dagger M_\mu\right)}
        {n_{\rm warm}-s}
        =
        \frac{\tr\left(V(w^*)^\dagger M_\mu\right)}
        {n_{\rm warm}-s}
        \leq 
        \frac{\rho}
        {n_{\rm warm}-s},
    \end{equation}
    where the equality is by Lemma~\ref{lem:support-reduction}.

    Replacing $\rho = (1+\epsilon)d^\diamond$ and using the condition $n_{\text{warm}} \geq C_{\text{rounding}}$,
    \begin{equation}
        n_{\text{warm}} \geq \frac{1+\epsilon}{\epsilon}s \Leftrightarrow \epsilon(n_{\text{warm}} - s) \geq s
        \Leftrightarrow (1+\epsilon)(n_{\text{warm}} - s) \geq n_{\text{warm}},
    \end{equation}
    we conclude:
    \begin{equation}
        \frac{\rho}{n_{\text{warm}} - s} = \frac{(1+\epsilon)d^\diamond}{n_{\text{warm}} - s}
        \leq \frac{(1+\epsilon)^2d^\diamond}{n_{\text{warm}}} \leq \frac{(1+3\epsilon)d^\diamond}{n_{\text{warm}}},
    \end{equation}
    the last step using $(1+\epsilon)^2 = 1 + 2\epsilon + \epsilon^2 \leq 1+3\epsilon$ for $\epsilon \leq 1$.
\end{proof}

\begin{corollary}[Warm-up schedule norm guarantee]\label{cor:warmup-schedule-norm}
    Let $V_{\text{warm}}$ denote the allocation uniform coverage. Then, if $n_{\text{warm}} \geq C_{\text{rounding}}$,
    \begin{equation}\label{eq:warmup-geometry}
        \max_{\mu, b}\ \dualnorm*{g_{\mu}(b)}{V_{\text{warm}}}^2
        \leq \frac{(1+3\epsilon)|\cB|d^\diamond}{n_{\text{warm}}}.
    \end{equation}
\end{corollary}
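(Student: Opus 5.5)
The plan is to reduce this corollary to Lemma~\ref{lem:warmup-schedule-trace}, copying the argument of Corollary~\ref{cor:grouped-kw-point} with $V(w_D)$ replaced by $V_{\text{warm}}$.

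First, fix any $\mu \in \Pi_\mu$ and $b \in \cB$. Since $g_\mu(b) \in \cG$ and $V_{\text{warm}} \in \mathbb{S}_{+}(\cG)$, the extended dual norm is defined. A single nonnegative term is bounded by the sum over $\cB$:
\begin{equation}
    \dualnorm*{g_{\mu}(b)}{V_{\text{warm}}}^2 \leq \sum_{b' \in \cB} \dualnorm*{g_{\mu}(b')}{V_{\text{warm}}}^2 .
\end{equation}

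Second, we identify this sum with a trace. This step needs $V_{\text{warm}}$ to be admissible, meaning $\Image V_{\text{warm}} = \cG$. Lemma~\ref{lem:warmup-schedule-trace} gives this: its finite bound on $\max_\mu \tr(V_{\text{warm}}^\dagger M_\mu)$, combined with convention~\eqref{eq:pinv-convention}, forces $\Image M_\mu \subseteq \Image V_{\text{warm}}$ for every $\mu$. Since the images of the $M_\mu$ span $\cG$, we get $\Image V_{\text{warm}} = \cG$. Each $g_\mu(b')$ lies in $\Image M_\mu \subseteq \Image V_{\text{warm}}$, so every dual norm equals $g_\mu(b')^\top V_{\text{warm}}^\dagger g_\mu(b')$. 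Summing and using linearity of the trace,
\begin{equation}
    \sum_{b'} g_\mu(b')^\top V_{\text{warm}}^\dagger g_\mu(b') = \tr\Big(V_{\text{warm}}^\dagger \sum_{b'} g_\mu(b')g_\mu(b')^\top\Big) = |\cB|\,\tr\left(V_{\text{warm}}^\dagger M_\mu\right).
\end{equation}

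Third, apply Lemma~\ref{lem:warmup-schedule-trace}, which holds under the same hypothesis $n_{\text{warm}} \geq C_{\text{rounding}}$. It bounds the trace by $(1+3\epsilon)d^\diamond/n_{\text{warm}}$. Taking the maximum over $(\mu,b)$ then yields~\eqref{eq:warmup-geometry}.

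The only delicate point is the admissibility check in the second step, which makes the pseudo-inverse formula legitimate. It follows directly from the finiteness in the lemma, so I expect no real obstacle.
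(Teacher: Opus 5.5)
Your proposal is correct and follows the paper's proof: bound the single term by the sum over $\cB$, identify that sum with $|\cB|\,\tr(V_{\text{warm}}^\dagger M_\mu)$, and apply Lemma~\ref{lem:warmup-schedule-trace} under the same hypothesis $n_{\text{warm}} \geq C_{\text{rounding}}$. Your explicit admissibility check, which justifies using the pseudo-inverse formula for the dual norm, is a valid detail that the paper leaves implicit.
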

\begin{proof}
    Using a similar argument to the proof of Corollary~\ref{cor:grouped-kw-point}, for any $\mu,b$,
    \begin{align*}
        \dualnorm*{g_\mu(b)}{V_{\rm warm}}^2
        &=
        g_\mu(b)^\top V_{\rm warm}^\dagger g_\mu(b)\\
        &\leq
        \sum_{b'\in\cB}
        g_\mu(b')^\top V_{\rm warm}^\dagger g_\mu(b')\\
        &=
        |\cB|\tr(V_{\rm warm}^\dagger M_\mu)\\
        &\leq
        \frac{(1+3\epsilon)|\cB|d^\diamond}{n_{\rm warm}},
    \end{align*}
    where the last inequality follows from
    Lemma~\ref{lem:warmup-schedule-trace}.
\end{proof}

Finally, we can combine these results with the maximum likelihood estimation results to provide a certificate over sequence of MLE estimators.

This lemma depends on a new constant $C_{\text{certificate}}$ that we define as follows:
\begin{align*}
    C_{\text{certificate}}
    &\ceq 8\kappa (1+3\epsilon)\eta^2 |\cB|d^{\diamond} d \left(\log_+\left(\frac{4e\eta SK}{\delta d}\right) + \log\left(4\kappa (1+3\epsilon)\eta^2|\cB|d^{\diamond} d\right)\right)\\
    &\leq \mathcal{O}\left(\kappa \eta^2 |\cB|d^{\diamond} d\log\left(\frac{\kappa \eta S|\cB|d^{\diamond}K}{\delta}\right)\right)\\
    &\leq \tilde{\mathcal{O}}\left(e^{2\eta S}|\cB|d^{\diamond}d\log\left(\frac{e^{\eta S}}{\delta}\right)\right),
\end{align*}
where $\kappa = 2e^{2\eta S}/\eta^2$.

\begin{restatable}[Warm-up certificate]{lemma}{lemWarmup}\label{lem:rwarmup}
    Let $\delta$ small enough and set $n_{\text{warm}} \geq \max\{C_{\text{rounding}}, C_{\text{certificate}}\}$. Then, with probability at least $1-\delta$, the warm-up ensures $d_*(\omega, \hat\omega^{\text{on}}_k) \leq 1/\eta$ for any $0 \leq k \leq K$.
\end{restatable}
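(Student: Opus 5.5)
The plan is to reduce the claim $d_*(\omega, \hat\omega^{\text{on}}_k) \leq 1/\eta$ to a dual-norm bound in the uniform-centered geometry $V_{\text{warm}}$, and then to control that geometry by the MLE confidence set. The argument proceeds in three steps.

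\textbf{Step 1: pass from the span semi-norm to $V_{\text{warm}}$.} Fix $\mu$, $b$, $d$ and write $\Delta_k \ceq \omega - \hat\omega^{\text{on}}_k$. Since $\psi_\mu(b) - \psi_\mu(d) = g_\mu(b) - g_\mu(d) \in \cG$, the generalized Cauchy--Schwarz inequality (Lemma~\ref{lem:g-linear-algebra}) gives
\begin{equation}
    \inner{\psi_\mu(b)-\psi_\mu(d)}{\Delta_k} \leq \left(\dualnorm*{g_\mu(b)}{V_{\text{warm}}} + \dualnorm*{g_\mu(d)}{V_{\text{warm}}}\right)\norm*{\Delta_k}_{V_{\text{warm}}}.
\end{equation}
Corollary~\ref{cor:warmup-schedule-norm} then yields
\begin{equation}
    d_*(\omega,\hat\omega^{\text{on}}_k) \leq 2\sqrt{(1+3\epsilon)|\cB|d^\diamond/n_{\text{warm}}}\,\norm*{\Delta_k}_{V_{\text{warm}}}.
\end{equation}
This step requires $n_{\text{warm}} \geq C_{\text{rounding}}$.

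\textbf{Step 2: bound the $V_{\text{warm}}$ norm.} Every $\cD_k$ contains $\cD_{\text{warm}}$, and Fisher informations are PSD. Hence $\Gamma^0_{N_k}(\hat\omega^{\text{on}}_k) \succeq \Gamma^0(\hat\omega^{\text{on}}_k \mid \cD_{\text{warm}})$. Applying Lemma~\ref{lem:design-curvature} to the warm-up allocation gives
\begin{equation}
    \norm*{\Delta_k}^2_{V_{\text{warm}}} \leq \kappa\norm*{\Delta_k}^2_{\Gamma^0_{N_k}(\hat\omega^{\text{on}}_k)}.
\end{equation}
The confidence sequence of Theorem~\ref{thm:CS} with $\lambda=0$ holds simultaneously over all $\tau \leq N$. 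It bounds the right-hand side, with probability $1-\delta$ and uniformly in $k$, by $\beta_{N}(\delta) = \cO\!\left(\log\delta^{-1} + d\log_+(\eta S N/d)\right)$.

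I need to check whether $\lambda=0$ is admissible in Theorem~\ref{thm:CS}. If the theorem requires $\lambda>0$, I would take $\lambda$ of order $1/S^2$ so the $\lambda S^2$ term is $\cO(1)$ and $G^\lambda \succeq G^0$. The projection onto $\cG$ is harmless because both $d_*$ and $V_{\text{warm}}$ only see the $\cG$-component of $\Delta_k$.

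\textbf{Step 3: combine and solve for $n_{\text{warm}}$.} Steps 1 and 2 give
\begin{equation}
    \eta^2 d_*^2 \leq 4\kappa(1+3\epsilon)\eta^2|\cB|d^\diamond\,\beta_{N_K}(\delta)/n_{\text{warm}}.
\end{equation}
We need the right-hand side to be at most $1$. Since $\beta$ depends on $N_K = n_{\text{warm}}+K$, and therefore logarithmically on $n_{\text{warm}}$ itself, this is an implicit inequality of the form $n \geq A\,d\,(\log_+(cK/\delta d) + \log n)$. The standard resolution is the fact that $n \geq 2A'(\log B + \log 2A')$ implies $n \geq A'(\log B + \log n)$. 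This is exactly where the two logarithms in $C_{\text{certificate}}$ come from, with $A' = 4\kappa(1+3\epsilon)\eta^2|\cB|d^\diamond d$.

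The main obstacle is Step 3. Matching the constants of the formal Theorem~\ref{thm:CS}, in particular the $\max\{e,\cdot\}$ inside its log and the absorption of the $\log\delta^{-1}$ term, requires the "$\delta$ small enough" hypothesis so that $\log\delta^{-1}$ is dominated by $d\log_+(4e\eta SK/(\delta d))$. The remaining work is routine bookkeeping on this self-referential logarithmic inequality.
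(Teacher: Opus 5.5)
Your proposal is correct and follows the paper's proof essentially step for step. You use generalized Cauchy--Schwarz in the $V_{\text{warm}}$ geometry with the triangle inequality and Corollary~\ref{cor:warmup-schedule-norm}. Then, as in the paper, you combine Lemma~\ref{lem:design-curvature} with the monotonicity $\Gamma^0_{n_{\text{warm}}} \preceq \Gamma^0_{N_k}$ and Theorem~\ref{thm:CS}, which does allow $\lambda = 0$. Finally you resolve the implicit inequality $n \geq a(b+\log n)$, which the paper does via Lemma~\ref{lem:nlnn}. The paper invokes ``$\delta$ small enough'' for that lemma's side condition $a e^{b} > e$ rather than for absorbing $\log\delta^{-1}$ into $d\log_+\bigl(4e\eta SK/(\delta d)\bigr)$, but both are routine bookkeeping.
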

\begin{proof}
    For any $\mu, k$, we can decompose using the generalized Cauchy-Schwarz (Lemma~\ref{lem:g-linear-algebra}):
    \begin{equation}
        d_{\mu}(\omega, \hat\omega^{\text{on}}_k) \leq \lVert \omega - \hat\omega^{\text{on}}_k \rVert_{V_{{\text{warm}}}} \cdot \max_{b, d} \dualnorm*{\psi_{\mu}(b) - \psi_{\mu}(d)}{V_{\text{warm}}},
    \end{equation}
    where $V_{\text{warm}}$ is the uniform allocation coverage (Definition~\ref{def:warm-up-allocation}).

    For the first term, we combine the design curvature (Lemma~\ref{lem:design-curvature}) with the confidence sets (Theorem~\ref{thm:CS}). 
    
    \noindent Using $\kappa = {2e^{2\eta S}}/\eta^2$,
    \begin{align*}
         \norm*{\omega - \hat\omega^{\text{on}}_k}^2_{V_{{\text{warm}}}}
         &\leq \kappa \norm*{\omega - \hat\omega^{\text{on}}_k}^2_{\Gamma^0_{n_{\text{warm}}}(\hat\omega^{\text{on}}_k)}\\
         &\leq \kappa \norm*{\omega - \hat\omega^{\text{on}}_k}^2_{\Gamma^0_{n_{\text{warm}}+k}(\hat\omega^{\text{on}}_k)}\\
         &\leq \kappa \left(\log \delta^{-1} + d\log_{+}\left(4ed^{-1}\eta S (n_{\text{warm}} + K - 1)\right)\right)\\
         &\ceq L(\delta),
    \end{align*}
    where we define $\log_+(x) \ceq \log\left(\max\left\{e, x\right\}\right)$.

    The analysis of the second term comes from the optimal design and the warm-up schedule. Specifically, using $n_{\text{warm}} \geq C_{\text{rounding}}$ we apply Lemma~\ref{cor:warmup-schedule-norm} and get:
    \begin{equation}
        \max_{\mu, b} \dualnorm*{g_{\mu}(b)}{V_{\text{warm}}} \leq \sqrt{(1+3\epsilon)|\cB|d^\diamond/n_{\text{warm}}}.
    \end{equation}

    Since $\psi_\mu(b) - \psi_\mu(d) = g_\mu(b) - g_\mu(d)$, we get from the triangle inequality that for all $\mu$:
    \begin{equation}
        \max_{b, d} \dualnorm*{\psi_{\mu}(b) - \psi_{\mu}(d)}{V_{\text{warm}}}
        = \max_{b, d} \dualnorm*{g_{\mu}(b) - g_{\mu}(d)}{V_{\text{warm}}}
        \leq 2 \cdot \max_b \dualnorm*{g_{\mu}(b)}{V_{\text{warm}}}.
    \end{equation}

    Hence, $\max_{b, d} \dualnorm*{\psi_{\mu}(b) - \psi_{\mu}(d) }{V_{{\text{warm}}}} \leq 2 \cdot \sqrt{\frac{(1+3\epsilon)|\cB|d^{\diamond}}{n_{\text{warm}}}}$.\\

    Combining the two terms, we conclude for every $k \in [0, K]$,
    \begin{equation}
        d_{*}(\omega, \hat\omega^{\text{on}}_k) \leq \sqrt{L(\delta)} \cdot 2\sqrt{\frac{(1+3\epsilon)|\cB|d^{\diamond}}{n_{\text{warm}}}} = 2\sqrt{\frac{(1+3\epsilon)|\cB|d^{\diamond}L(\delta)}{n_{\text{warm}}}}.
    \end{equation}

    Hence with high-probability (the event holds for all $k$ without needing a union bound),
    \begin{equation}
        d_*(\omega, \hat\omega^{\text{on}}_k) \leq 1/\eta,
        \qquad \text{if} \ n_{\text{warm}} \geq 4(1+3\epsilon)\eta^2L(\delta)|\cB|d^{\diamond}.
    \end{equation}

    Now let us expand and simplify $L(\delta)$ as follows (using $n_{\text{warm}}, K \geq 1$):
    \begin{align*}
        L(\delta)
        &= \kappa \left(\log\left(\frac{1}{\delta}\right) + d\log_+\left(\frac{4e\eta S (n_{\text{warm}} + K - 1)}{d}\right)\right)\\
        &\leq \kappa d \left(\log_+\left(\frac{4e\eta SK}{\delta d}\right) + \log\left(n_{\text{warm}}\right)\right).
    \end{align*}

    Therefore, a sufficient condition is $n_{\text{warm}} \geq a(b + \log n_{\text{warm}})$ with:
    \begin{equation}
        a = 4\kappa (1+3\epsilon)\eta^2|\cB|d^{\diamond} d \qquad \text{and} \qquad b = \log_+\left(\frac{4e\eta SK}{\delta d}\right).
    \end{equation}

    Applying Lemma~\ref{lem:nlnn} we obtain the following sufficient condition:
    \begin{align*}
        n_{\text{warm}}
        &\geq 2a(b + \log a)\\
        &= 8\kappa (1+3\epsilon)\eta^2 |\cB|d^{\diamond} d \left(\log_+\left(\frac{4e\eta SK}{\delta d}\right) + \log\left(4\kappa (1+3\epsilon)\eta^2|\cB|d^{\diamond} d\right)\right)
    \end{align*}
    if $\delta < 16\kappa (1+3\epsilon)\eta^3 S |\cB|d^{\diamond}K$.

    We can therefore conclude from the condition $n_{\text{warm}} \geq C_{\text{certificate}}$.
\end{proof}

\section{Analysis of the main algorithm}\label{appx:main-results}

Following the proof roadmap in the main text, we analyze the algorithm by first applying the Stackelberg reward transfer lemma (Lemma~\ref{lem:subopt-reward-transfer}), and then combine the following three ingredients.

\paragraph{i) A confidence sequence for finite-sample estimation} We can bound the coefficient distance terms by relating to high probability confidence sets of Generalized Linear Models (GLMs).

Specifically, following \citep{lee2024unified, schlaginhaufen2025efficient}, we can derive a bound of the following form (Theorem~\ref{thm:CS}):
\begin{equation}
    \mathbb{P}\left(\left\{\norm*{\omega - \hat{\omega}_\tau}_{\Gamma^{\lambda}_\tau(\hat\omega_\tau)} \leq \beta_\tau(\delta), \ \forall \tau \in [N]\right\}\right) \geq 1-\delta,
\end{equation}
for some $\beta^2_\tau(\delta) \leq \cO\left(\lambda S^2 + \log \delta^{-1} + d \log(\max\left\{e, \eta \tau Sd^{-1})\right\}\right)$.

This allows, with one good event, to control the concentration of the MLE estimators in both the warm-up and online phases.

\paragraph{ii) Minimizing uncertainty in the Fisher information geometry and the elliptical potential lemma} We bound Fisher uncertainty by a greedy selection rule maximizing Fisher information gain. With a self-concordance reparameterization bound (see point iii) below), we apply a variant of the standard elliptical potential lemma \citep{abbasi2011improved} and show how it vanishes at a rate $\approx \sqrt{\log(K)/K}$.

\paragraph{iii) Self-concordance} The instantaneous log-likelihood as a function of the logits is self-concordant-like in the span semi-norm. This allows to control the reparameterization of Fisher information matrices. Precisely, an optimal schedule for the leader policies depends on knowing the true coefficient $\omega$. As this is unknown, we instead replace it in the optimization objective by $\tilde\omega = 0$ in the warm-up phase, and by the running estimate $\tilde\omega = \hat\omega^{\text{on}}_k$ during the online phase. More information on the self-concordance properties in Appendix~\ref{appx:self-concordance}.

Crucially, this last tool introduces exponential dependency in the induced distance between the proxy $\tilde\omega$ and $\omega$. This is the reason for the warm-up phase, which removes this dependency in the final bound by using a burn-in budget. As mentioned, in this offline part we use an arbitrary proxy $\tilde\omega = 0$ and rely on global self-concordance. Hence, our offline warm-up would always produce an exponential sample complexity guarantee. We solve this problem by using a subsequent online phase in the improved self-concordance regime induced by the warm-up certificate (Lemma~\ref{lem:rwarmup}).

\subsection{Reward transfer lemma}\label{appx:transfer-lem-proof}

\begin{lemma}[Stackelberg reward transfer]\label{lem:subopt-reward-transfer}
    Let $\mu^*$ and $\hat\mu$ be optimal leaders under $\omega$ and $\hat\omega$, respectively. Then,
    \begin{equation}
        \SubOpt_{\omega}(\hat\mu) \leq 2R_{\text{max}} \cdot \max_{\tilde\mu \in \{\mu^*, \hat\mu\}} \chi\left(\eta \cdot d_{\tilde\mu}(\omega, \hat\omega)\right)\cdot \norm*{\omega - \hat\omega}_{\cI_{\tilde\mu}(\hat\omega)},
    \end{equation}
    with $\chi(x) := \frac{2(e^{x/2}-1)}{x}$.
\end{lemma}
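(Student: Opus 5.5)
The plan is a standard three-term decomposition of the sub-optimality, followed by the best-response stability bound of Lemma~\ref{lem:sc-1}. Write $J_{\tilde\omega}(\mu) \ceq J_{r_1}(\mu, \nu^*_{\mu}(\tilde\omega))$. Since $\hat\mu$ is optimal under $\hat\omega$, we have $J_{\hat\omega}(\mu^*) - J_{\hat\omega}(\hat\mu) \leq 0$. Hence
\begin{equation}
    \SubOpt_{\omega}(\hat\mu) = J_{\omega}(\mu^*) - J_{\omega}(\hat\mu)
    \leq \left[J_{\omega}(\mu^*) - J_{\hat\omega}(\mu^*)\right] + \left[J_{\hat\omega}(\hat\mu) - J_{\omega}(\hat\mu)\right].
\end{equation}
Each bracket compares the leader value at a fixed $\tilde\mu \in \{\mu^*, \hat\mu\}$ under two follower responses.

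For fixed $\tilde\mu$, $J_{r_1}(\tilde\mu, \nu)$ is linear in $\nu$ with coefficients $\sum_a \tilde\mu(a) r_1(a, b)$, which are bounded by $R_{\max}$ in absolute value. By Hölder's inequality,
\begin{equation}
    \left|J_{\omega}(\tilde\mu) - J_{\hat\omega}(\tilde\mu)\right| \leq R_{\max}\norm*{\nu^*_{\tilde\mu}(\omega) - \nu^*_{\tilde\mu}(\hat\omega)}_1.
\end{equation}
A constant shift of the coefficients does not change the difference of two distributions, so centering would even give $R_{\max}/2$. The cruder bound already suffices for the factor $2R_{\max}$ once both brackets are summed.

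Next, apply Lemma~\ref{lem:sc-1} with $\omega_1 = \omega$, $\omega_2 = \hat\omega$ and $\mu = \tilde\mu$. This bounds the $\ell_1$ distance by $\chi(\eta d_{\tilde\mu}(\omega, \hat\omega)) \norm*{\omega - \hat\omega}_{\cI_{\tilde\mu}(\hat\omega)}$, with the Fisher information evaluated at $\hat\omega$, which is exactly the form in the statement. Summing the two brackets and bounding each by the maximum over $\tilde\mu \in \{\mu^*, \hat\mu\}$ gives the factor $2R_{\max}\max_{\tilde\mu}(\cdots)$.

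The only delicate point is the orientation in Lemma~\ref{lem:sc-1}: the Fisher matrix must sit at the estimate $\hat\omega$ rather than at the true $\omega$, and this is what makes the bound computable by the algorithm. This is handled by assigning the roles $\omega_2 = \hat\omega$ there. The function $\chi$ also needs to extend continuously with $\chi(0) = 1$ to cover the degenerate case $d_{\tilde\mu} = 0$. Beyond that, the argument is bookkeeping.
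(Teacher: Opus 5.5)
Your proposal is correct and matches the paper's proof: the same three-term decomposition (middle term dropped by optimality of $\hat\mu$ under $\hat\omega$), Hölder's inequality to reach $R_{\max}\norm{\nu^*_{\tilde\mu}(\omega)-\nu^*_{\tilde\mu}(\hat\omega)}_1$, and Lemma~\ref{lem:sc-1} with $\omega_2=\hat\omega$ so the Fisher matrix is evaluated at the estimate. One unused side remark is wrong in general: centering the coefficients $\sum_a\tilde\mu(a)r_1(a,b)$ gives half their span, and since $r_1$ may take values in $[-R_{\max},R_{\max}]$ this can equal $R_{\max}$, so the $R_{\max}/2$ claim would need e.g.\ nonnegative $r_1$ (the argument does not depend on it).
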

\begin{proof}
    \begin{align*}
        \SubOpt_{\omega}(\hat\mu)
        &= J^{\text{leader}}_{\omega}(\mu^*) - J^{\text{leader}}_{\omega}(\hat\mu)\\
        &\leq \left[J^{\text{leader}}_{\omega}(\mu^*) - J^{\text{leader}}_{\hat\omega}(\mu^*)\right] + \left[J^{\text{leader}}_{\hat\omega}(\mu^*) - J^{\text{leader}}_{\hat\omega}(\hat\mu)\right]\\
        &\qquad \qquad + \left[J^{\text{leader}}_{\hat\omega}(\hat\mu) - J^{\text{leader}}_{\omega}(\hat\mu)\right]\\
        &\leq \SubOpt_{\hat\omega}(\hat\mu) + 2\cdot\max_{\tilde\mu \in \{\hat\mu, \mu^*\}} \left\lvert J^{\text{leader}}_{\omega}(\tilde\mu) - J^{\text{leader}}_{\hat\omega}(\tilde\mu)\right\rvert\\
        &\leq 2R_{\text{max}}\cdot \max_{\tilde{\mu} \in \{\hat\mu, \mu^*\}} \norm*{\nu^*_{\tilde\mu}(\omega) - \nu^*_{\tilde\mu}(\hat{\omega})}_1,
    \end{align*}
    where the last inequality uses the optimality of $\hat\mu$ under $\hat\omega$.

    Then applying Lemma~\ref{lem:sc-1} gives the final result.
\end{proof}

\begin{remark}\label{rem:fisher-feasible}
    The upper bound of Lemma~\ref{lem:subopt-reward-transfer} vanishes exactly when
    $\hat{r}_2$ is feasible at both evaluated leaders, i.e.
        \begin{equation}
        \max_{\tilde\mu \in \{\mu^*, \hat\mu\}} \norm*{\omega - \hat\omega}_{\cI_{\tilde\mu}(\hat\omega)} = 0
        \quad \Longleftrightarrow \quad
        \hat{r}_2 \in \cR^{\mathrm{feas}}_{2}(\mu^*) \cap \cR^{\mathrm{feas}}_{2}(\hat\mu),
    \end{equation}
    for $\cR^{\mathrm{feas}}_{2}(\mu) = \left\{\tilde{r}_2: \nu^*_{\mu}(\tilde r_2) = \nu^*_{\mu}(r_2)\right\}$.
\end{remark}

Using our specific choice of leader policies in Algorithm~\ref{alg:stackelberg-static}, we are able to bound the two terms of the upper bound separately. This leads to our main result.

\subsection{Finite-sample Stackelberg sub-optimality guarantee}

Before deriving the proof, let us quickly note about the approximation given in the main paper for the burn-in phase.

Placing us in the same regime of Lemma~\ref{lem:rwarmup}, we will require $n_{\text{warm}} \geq \max\{C_{\text{rounding}}, C_{\text{certificate}}\}$.

Two things are in order:
\begin{enumerate}
    \item Typically, $C_{\text{rounding}} \ll C_{\text{certificate}}$.
    \item Since for all features we have $d^\diamond \leq d$, we can further bound
    \begin{equation}
        C_{\text{certificate}} \leq \tilde{\mathcal{O}}\left(e^{2\eta S}|\cB|d^{\diamond}d\log\left(\frac{e^{\eta S}}{\delta}\right)\right) \leq \tilde{\cO}\left(\eta Se^{2\eta S}|\cB|d^2\log \frac{1}{\delta}\right).
    \end{equation}
\end{enumerate}

This motivates our approximation $n_{\text{warm}} \approx \tilde{\cO}\left(\eta Se^{2\eta S}|\cB|d^2\log \delta^{-1}\right)$ for the informal statement, only using terms introduced in the main text.

\begin{theorem}
    \label{thm:stackelberg-static-formal}
    Set $n_{\text{warm}} = \max\{C_{\text{rounding}}, C_{\text{certificate}}\}$, $\lambda = 0$ and let $\hat{\omega}$ be the result of Algorithm~\ref{alg:stackelberg-static} after $K \geq 1$ online iterations. Then, with probability at least $1-\delta$, for any leader $\hat{\mu}$ such that $\SubOpt_{\hat\omega}(\hat{\mu}) = 0$:
    \begin{equation}
        \SubOpt_{\omega}(\hat\mu) \leq \tilde{\mathcal{O}}\left(\frac{R_{\text{max}}\sqrt{d^\diamond\left(d \eta S + \log \delta^{-1} + d\right)}}{\sqrt{K}}\right),
    \end{equation}
    where $\tilde{\mathcal{O}}$ hides polylog terms in $\log \delta^{-1}$, $K$ and environment parameters.
\end{theorem}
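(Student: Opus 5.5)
Everything is placed on the single good event $\mathcal{E}$ of Theorem~\ref{thm:CS} with $\lambda = 0$, which has probability at least $1-\delta$. On $\mathcal{E}$, Lemma~\ref{lem:rwarmup} with $n_{\text{warm}} \geq \max\{C_{\text{rounding}}, C_{\text{certificate}}\}$ gives $d_*(\omega, \hat\omega^{\text{on}}_k) \leq 1/\eta$ for every $0 \leq k \leq K$.

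\textbf{Reduction to a Fisher-uncertainty quantity.} Start from Lemma~\ref{lem:subopt-reward-transfer} with $\hat\omega = \hat\omega^{\text{on}}_K$. On $\mathcal{E}$ we have $\eta d_{\tilde\mu}(\omega, \hat\omega) \leq 1$, so $\chi(\cdot) \leq \chi(1) \leq 2$ and
\[
\SubOpt_\omega(\hat\mu) \leq 4R_{\max}\max_{\tilde\mu \in \{\mu^*, \hat\mu\}} \norm{\omega - \hat\omega}_{\cI_{\tilde\mu}(\hat\omega)}.
\]
Change measure inside $\cG$ with the generalized Cauchy--Schwarz of Lemma~\ref{lem:g-linear-algebra}, noting that $\omega - \hat\omega$ only matters through its projection on $\cG$ and that $G^0_K(\hat\omega) \in \mathbb{S}_{++}(\cG)$ by the warm-up design. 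This gives
\[
\norm{\omega - \hat\omega}^2_{\cI_{\tilde\mu}(\hat\omega)} \leq \norm{\omega - \hat\omega}^2_{G^0_K(\hat\omega)} \, \tr\big(G^0_K(\hat\omega)^\dagger \cI_{\tilde\mu}(\hat\omega)\big).
\]
For the estimation factor, use Lemma~\ref{lem:sc-integral} summed over samples: with $a \leq 1$ it gives $\Gamma^0_{N_K}(\hat\omega) \succeq \frac{1}{3} G^0_K(\hat\omega)$. Theorem~\ref{thm:CS} then bounds the factor by $3\beta^2_{N_K}(\delta) = \tilde{\cO}(\log\delta^{-1} + d\log(\eta S N_K/d))$. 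This is where the $d\eta S + \log\delta^{-1} + d$ term comes from, once $\log n_{\text{warm}} \lesssim \eta S + \log(\cdot)$ is absorbed.

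\textbf{Bounding the uncertainty at the final step by the online average.} The Fisher-uncertainty factor must be bounded by $\tilde\cO(d^\diamond/K)$. Fix $\tilde\mu$. Since $G_k$ is increasing in $k$, and by Lemma~\ref{lem:sc-2} with $d_*(\hat\omega^{\text{on}}_K, \hat\omega^{\text{on}}_{k-1}) \leq d_*(\cdot,\omega) + d_*(\omega,\cdot) \leq 2/\eta$, we move evaluation points at a constant cost:
\[
\tr\big(G_K(\hat\omega_K)^\dagger \cI_{\tilde\mu}(\hat\omega_K)\big) \leq e^{4}\,\tr\big(G_{k-1}(\hat\omega_{k-1})^\dagger \cI_{\tilde\mu}(\hat\omega_{k-1})\big)
\]
for every $k \leq K$. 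Here we also use that $G_K(\hat\omega_{k-1}) \succeq G_{k-1}(\hat\omega_{k-1})$, via monotonicity of the pseudo-inverse on $\cG$ (Lemma~\ref{lem:g-linear-algebra}). The greedy rule of Algorithm~\ref{alg:stackelberg-static} makes the right-hand side at most $\tr\big(G_{k-1}(\hat\omega_{k-1})^\dagger \cI_{\mu^{\text{on}}_k}(\hat\omega_{k-1})\big) =: x_k$. Averaging over $k$ then bounds the final uncertainty by $\frac{e^4}{K}\sum_k x_k$.

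\textbf{Elliptical potential.} To control $\sum_k x_k$, move every term to the common point $\omega$, again at an $e^{\cO(1)}$ cost by Lemma~\ref{lem:sc-2}. This yields $x_k \lesssim \tr(A_{k-1}^\dagger C_k)$ with $A_k = \sum_{\tau \leq N_k} \cI_{\mu_\tau}(\omega)$ and $C_k = \cI_{\mu^{\text{on}}_k}(\omega)$.

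A grouped elliptical potential lemma then gives $\sum_k \min\{1, \tr(A_{k-1}^\dagger C_k)\} \leq 2\log\det_\cG A_K - 2\log\det_\cG A_0$. This follows from the concavity identity of Lemma~\ref{lem:g-linear-algebra}(3) and $\log\det(I+X) \geq \tr X/2$ for $\tr X \leq 1$.

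The truncation is harmless. Warm-up guarantees $\tr(A_0^\dagger C_k) \lesssim e^{\cO(1)}\kappa\eta^2 d^\diamond/n_{\text{warm}} \leq 1$, using Lemma~\ref{lem:design-curvature} and Lemma~\ref{lem:warmup-schedule-trace} together with the covariance bound $\cI_\mu \preceq \eta^2 |\cB| M_\mu$ around the uniform mean. The log-det ratio is at most $d^\diamond\log(1 + K\eta^2/\lambda_{\min,\cG}(A_0))$. Since $\lambda_{\min,\cG}(A_0)$ is bounded below through the same warm-up lower bound, the ratio is $\tilde\cO(d^\diamond)$.

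Combining the three steps gives $\SubOpt_\omega(\hat\mu) \leq 4R_{\max}\sqrt{3\beta^2 \cdot e^{\cO(1)} d^\diamond\,\mathrm{polylog}/K}$, which is the claim.

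The main obstacle is the second step. The change of evaluation point in the uncertainty term must be done carefully with pseudo-inverses restricted to $\cG$, and the online iterates must not drift. I would handle drift exactly through the uniform-in-$k$ certificate of Lemma~\ref{lem:rwarmup}, which avoids any union bound.
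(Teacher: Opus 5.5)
Your argument tracks the paper's proof almost step for step. The matching steps are: the certificate event of Lemma~\ref{lem:rwarmup}; the transfer lemma with $\chi(1)$; trace splitting on $\cG$; $\Gamma^0_{N_K}(\hat\omega)\succeq\frac13 G^0_K(\hat\omega)$ via Lemma~\ref{lem:sc-integral} and Theorem~\ref{thm:CS}; monotonicity of $G_k$ plus Lemma~\ref{lem:sc-2} at cost $e^4$; the greedy rule; a further $e^2$ to move to $\omega$; and an elliptical potential argument with $A_0=G^0(\omega\mid\cD_{\text{warm}})$ and $\tr(A_0^\dagger C_k)\le q_{\text{warm}}\le 1$. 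You depart from the paper only in the final log-determinant bound, and there your justification does not hold as written.

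You bound $\log\det\nolimits_{\cG}A_K-\log\det\nolimits_{\cG}A_0\le d^\diamond\log\bigl(1+K\eta^2/\lambda_{\min,\cG}(A_0)\bigr)$ and assert that the warm-up bounds $\lambda_{\min,\cG}(A_0)$ from below. The warm-up gives two things: $A_0\succeq 2\kappa^{-1}V_{\text{warm}}$, and $\max_\mu\tr(V_{\text{warm}}^\dagger M_\mu)\le(1+3\epsilon)d^\diamond/n_{\text{warm}}$. The second only controls dual norms of the features $g_\mu(b)$, not the smallest eigenvalue of $V_{\text{warm}}$ on $\cG$. Indeed, for any unit $u\in\cG$ you have $u^\top V_{\text{warm}}u\le n_{\text{warm}}\max_\mu u^\top M_\mu u$. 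Hence $\lambda_{\min,\cG}(V_{\text{warm}})\le n_{\text{warm}}\gamma$, where $\gamma\ceq\min_{u\in\cG,\norm{u}_2=1}\max_\mu u^\top M_\mu u$ is a feature-conditioning constant. It can be arbitrarily small, for instance when some direction of $\cG$ is only reached through features with a tiny component along it. Since also $A_0\preceq\eta^2|\cB|V_{\text{warm}}$, the bound you wrote is at least $d^\diamond\log\bigl(1+K/(|\cB|n_{\text{warm}}\gamma)\bigr)$. So it degrades with $\log(1/\gamma)$, a dependence you do not account for and which is absent from the paper's bound.

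The repair uses only what you already established: whiten by $A_0$ rather than using Euclidean eigenvalues, as in Corollary~\ref{cor:EPL-whitened}. Write $\bar X\ceq U^\top XU$ for an orthonormal basis $U$ of $\cG$ and $\tilde C_k\ceq\bar A_0^{-1/2}\bar C_k\bar A_0^{-1/2}$. Then $\tr\tilde C_k=\tr(A_0^\dagger C_k)\le q_{\text{warm}}$, so AM--GM gives $\log\bigl(\det\nolimits_{\cG}A_K/\det\nolimits_{\cG}A_0\bigr)=\log\det\bigl(I+\sum_{k}\tilde C_k\bigr)\le d^\diamond\log(1+q_{\text{warm}}K/d^\diamond)$. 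This carries no conditioning constant, and with $q_{\text{warm}}\le 1/(16d)$ the logarithm depends only on $K$ and $d$.
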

\begin{proof}\label{proof:thm-stackelberg-static}
    In the following, we assume the certificate over the entire sequence of maximum likelihood estimators, which occurs with probability at least $1-\delta$, assuming a sufficient burn-in phase budget (Lemma~\ref{lem:rwarmup}).

    Therefore, combining the reward transfer lemma (Lemma~\ref{lem:subopt-reward-transfer}) with the warm-up result (Lemma~\ref{lem:rwarmup}), the bound reduces to the coefficient approximation in the Fisher information norm:
    \begin{equation}
        \SubOpt_{\omega}(\hat\mu) \leq 2\chi\left(1\right)R_{\text{max}} \cdot \max_{\tilde\mu \in \{\mu^*, \hat\mu\}} \left\lVert \omega - \hat\omega \right\rVert_{\cI_{\tilde\mu}(\hat\omega)},
    \end{equation}
    where $\chi(1) = 2(\sqrt{e}-1) \approx 1.297$ and $\hat\omega = \hat\omega^{\text{on}}_K$.
    
    Now, recall that for any $k$ and $\lambda$, the curvature of the log-likelihood corresponds to the cumulative information collected up to online iteration $k$:
    \begin{equation}
        -\nabla^2 \cL_k(\tilde\omega) = \eta^2 \sum_{\tau=1}^{N_k} \E_{b \sim \mathbb{P}(\cdot | \tilde{\omega}, \mu_\tau)}\left[\left(\psi_{\mu_\tau}(b) - \overline{\psi}_{\mu_\tau}(\tilde \omega)\right)^{\otimes 2}\right]= \sum_{\tau = 1}^{N_k} \cI_{\mu_\tau}(\tilde\omega),
    \end{equation}
    and in its regularized form $G^{\lambda}_k(\tilde\omega) \ceq \lambda I - \nabla^2 \cL_k(\tilde\omega)$.

    By Lemma~\ref{lem:warmup-schedule-trace}, the warm-up uniform allocation must be admissible. Hence, for all $k \in [K]$ we have $G^{0}_k(\tilde\omega) \in \mathbb{S}_{++}(\cG)$, i.e. the warm-up phase guarantees that the unregularized total information matrix covers the space $\cG$ for all $k$. Therefore, setting $\lambda = 0$, we simplify notations by denoting $G_k \ceq G^0_k$.
    
    For any $\tilde\mu \in \{\mu^*, \hat\mu\}$, under the certificate,
    \begin{align*}
        \left\lVert \omega - \hat\omega \right\rVert^2_{\cI_{\tilde\mu}(\hat\omega)}
        &= (\omega - \hat\omega)^\top \cI_{\tilde\mu}(\hat\omega) (\omega - \hat\omega)\\
        &\leq \left\lVert \omega - \hat\omega \right\rVert^2_{G_K(\hat\omega)} \cdot \tr\left(G_K^{\dagger}(\hat\omega)\cI_{\tilde\mu}(\hat\omega)\right) &&& (\text{Lemma}~\ref{lem:trace-split} \text{ in } \cG)\\
        &\leq 3\left\lVert \omega - \hat\omega \right\rVert^2_{\Gamma_K(\hat\omega)} \cdot \tr\left(G_K^{\dagger}(\hat\omega)\cI_{\tilde\mu}(\hat\omega)\right) &&& (\text{Lemma}~\ref{lem:sc-integral}, \text{warm-up})\\
        &\leq 3 \beta^2_{N_K}(\delta) \cdot \tr\left(G_K^{\dagger}(\hat\omega)\cI_{\tilde\mu}(\hat\omega)\right). &&& (\text{Theorem}~\ref{thm:CS})
    \end{align*}

    This last term is controlled by our greedy design.
    
    Indeed, first notice that:
    \begin{align*}
        &\tr\left(G_K^{\dagger}(\hat\omega)\cI_{\tilde\mu}(\hat\omega)\right)\\
        &\leq \frac{1}{K} \sum_{k=1}^K \tr\left(G_{k-1}^{\dagger}(\hat\omega)\cI_{\tilde\mu}(\hat\omega)\right) &&& (\text{non-increasing sequence})\\
        &\leq \frac{1}{K} \sum_{k=1}^K e^{2\eta d_*(\hat\omega, \hat\omega_{k-1})} \tr\left(G_{k-1}^{\dagger}(\hat\omega^{\text{on}}_{k-1})\cI_{\tilde\mu}(\hat\omega^{\text{on}}_{k-1})\right) &&& (\text{self-concordance})\\
        &\leq \frac{e^{2\eta d_*(\hat\omega, \omega)}}{K} \sum_{k=1}^K e^{2\eta d_*(\omega, \hat\omega^{\text{on}}_{k-1})} \tr\left(G_{k-1}^{\dagger}(\hat\omega^{\text{on}}_{k-1})\cI_{\tilde\mu}(\hat\omega^{\text{on}}_{k-1})\right) &&& (\text{triangle inequality})\\
        &\leq \frac{e^4}{K} \sum_{k=1}^K \tr\left(G_{k-1}^{\dagger}(\hat\omega^{\text{on}}_{k-1})\cI_{\tilde\mu}(\hat\omega^{\text{on}}_{k-1})\right). &&& (\text{certificate})
    \end{align*}

    Then, from our selection rule we get:
    \begin{align*}
        \tr(G_K^{\dagger}(\hat\omega)\cI_{\tilde\mu}(\hat\omega))
        &\leq \frac{e^4}{K} \sum_{k=1}^K \tr(G_{k-1}^{\dagger}(\hat\omega^{\text{on}}_{k-1})\cI_{\mu^{\text{on}}_k}(\hat\omega^{\text{on}}_{k-1})) &&& (\text{selection rule})\\
        &\leq \frac{e^6}{K} \sum_{k=1}^K \tr(G_{k-1}^{\dagger}(\omega)\cI_{\mu^{\text{on}}_k}(\omega)) &&& (\text{certificate})
    \end{align*}

    We will then apply the elliptical potential lemma (see classical isotropic version as Lemma~\ref{lem:EPL}) in the information geometry (Corollary~\ref{cor:EPL-whitened}), leveraging the warm-up phase.

    Specifically, let $U_0(\cdot) \ceq G^0\left(\cdot | \cD_{\text{warm}}\right)$ be the warm-up cumulative Fisher information. From the warm-up guarantees, the information uncertainty is bounded as follows:
    \begin{align*}
        \sup_{\mu\in\Pi_\mu}
        \tr\left(
            U_{0}(\omega)^\dagger
            \cI_\mu(\omega)
        \right)
        &\leq
        \frac{\kappa}{2} \cdot \sup_{\mu\in\Pi_\mu}
        \tr\left(
            V_{\text{warm}}^\dagger
            \cI_\mu(\omega)
        \right) &&& (\text{Lemma}~\ref{lem:design-curvature})\\
        &\leq
        \frac{\kappa \eta^2 |\cB|}{2} \cdot \sup_{\mu\in\Pi_\mu}
        \tr\left(
            V_{\text{warm}}^\dagger
            M_\mu
        \right) &&& (\text{second moment})\\
        &\leq
        \frac{\kappa \eta^2 |\cB|}{2} \cdot \frac{(1+3\epsilon)d^\diamond}{n_{\text{warm}}}. &&& (\text{Lemma}~\ref{lem:warmup-schedule-trace})\\
    \end{align*}
    
    We define this last term as $q_{\text{warm}}
    \ceq
    \frac{(1+3\epsilon)\kappa d^\diamond \eta^2 |\cB|}{2n_{\text{warm}}}
    \leq \frac{1}{16 d} \leq 1$, where the simplification is from the condition $n_{\text{warm}} \geq C_{\text{certificate}}$.

    We can then apply the elliptical potential lemma in $\cG$ (Corollary~\ref{cor:EPL-whitened}) with $U_0 \ceq U_0(\omega)$ and $A_k = \cI_{\mu^{\text{on}}_k}(\omega)$ and obtain
    \begin{equation}
        \tr\left(G_K^{\dagger}(\hat\omega)\cI_{\tilde\mu}(\hat\omega)\right)
        \leq
        \frac{2d^\diamond e^6\log\left(1 + \frac{q_{\text{warm}}K}{d^\diamond}\right)}{K}
        \leq
        \frac{2d^\diamond e^6\log\left(1 + \frac{K}{16d^\diamond d}\right)}{K}.
    \end{equation}

    To simplify notations, set $C \ceq \sqrt{d^\diamond \log\left(1 + \frac{K}{16d^\diamond d}\right)}$ and use $\log_+(x) \ceq \log\left(\max\{e, x\}\right)$. Then, putting everything together, we get with probability at least $1-\delta$,
    \begin{align*}
        \SubOpt_{\omega}(\hat\mu)
        &\leq 2\chi(1) R_{\text{max}} \sqrt{3\beta^2_{N_K}(\delta) \tr\left(G^\dagger_{K}(\hat\omega) \cI_{\tilde\mu}(\hat\omega)\right)}\\
        &\leq \frac{9e^3CR_{\text{max}}}{\sqrt{2K}}  \sqrt{\log \delta^{-1} + d\log_+\left(\frac{4 e\eta N_K S}{d}\right)}\\
        &\leq \frac{9e^3CR_{\text{max}}}{\sqrt{2K}}  \sqrt{\log \delta^{-1} + d\log\left(\frac{4 e\eta n_{\text{warm}} S}{d}\right) + d\log_+\left(\frac{4 e\eta K S}{d}\right)}\\
        &\leq \tilde{\mathcal{O}}\left(\frac{CR_{\text{max}}}{\sqrt{K}} \sqrt{\log \delta^{-1} + d\log\left(e^{2\eta S}\log\left(\frac{e^{2\eta S}}{\delta}\right)\right) + d}\right)\\
        &\leq \tilde{\mathcal{O}}\left(\frac{R_{\text{max}}\sqrt{d^\diamond\left(d \eta S + \log \delta^{-1} + d\log\log \delta^{-1}\right)}}{\sqrt{K}}\right)\\
        &\leq \tilde{\mathcal{O}}\left(\frac{R_{\text{max}}\sqrt{d^\diamond\left(d \eta S + \log \delta^{-1} + d\right)}}{\sqrt{K}}\right).
    \end{align*}
\end{proof}

\section{Useful results}

\begin{theorem}[Corollary of Theorem~3.1 in \citet{lee2024unified}]\label{thm:CS}
    Fix $\lambda \geq 0$ and $\delta \in (0, 1)$. Then, with probability at least $1-\delta$,
    \begin{equation}
        \norm*{\omega - \hat\omega_\tau}^2_{\Gamma^{\lambda}_\tau(\hat\omega_\tau)} \leq \beta^2_\tau(\delta) \ceq 2\lambda S^2 + \log\frac{1}{\delta} + d\log\left(\max\left\{e, \frac{4(\tau-1)e\eta S}{d}\right\}\right),
    \end{equation}
    for all $\tau \in [N]$, where we define $\Gamma^{\lambda}_\tau(\tilde\omega) \ceq \int_0^1 (1-s)G^{\lambda}\left(\tilde\omega + s(\omega - \tilde\omega)|\cD_{1:\tau}\right) \ud s$.
\end{theorem}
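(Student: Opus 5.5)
This statement is a corollary of Theorem~3.1 in \citet{lee2024unified}, which gives a likelihood-ratio confidence sequence for generalized linear models. The plan is to instantiate that theorem with our multinomial-logit model and then turn the likelihood gap into the displayed quadratic form by an exact second-order Taylor expansion.

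\textbf{Step 1 (likelihood-ratio bound).} In our setting the response is $b_\tau$ and the natural parameter is $\eta\Psi_{\mu_\tau}\tilde\omega$. The leader policy $\mu_\tau$ is predictable with respect to the filtration of past observations. Together with $\norm{\omega}_2\le S$ and $\norm{\psi_\mu(b)}_2\le1$, this satisfies the theorem's hypotheses. The theorem then gives, with probability at least $1-\delta$ and simultaneously for all $\tau$,
\begin{equation*}
\cL(\hat\omega_\tau|\cD_{1:\tau})-\cL(\omega|\cD_{1:\tau})\le \log\tfrac1\delta + d\log\max\{e,\tfrac{4(\tau-1)e\eta S}{d}\}.
\end{equation*}
The mechanism behind this is a martingale mixture (PAC-Bayes) argument. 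One uses a prior on a small ball of radius about $d/(\tau\eta)$ around the estimator. Its KL cost gives the $d\log$ term, and the Lipschitzness of the log-partition function (of order $\eta\tau$ times the displacement) controls the discretization loss. The value $\lambda=0$ corresponds to the unregularized constrained MLE. For $\lambda>0$ we use the penalized likelihood $\cL-\frac\lambda2\norm{\cdot}^2$, which contributes at most $\lambda S^2$ on $\Omega$; this is absorbed into the $2\lambda S^2$ term.

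\textbf{Step 2 (quadratic lower bound on the gap).} Set $F(\tilde\omega)\ceq-\cL(\tilde\omega|\cD_{1:\tau})+\frac\lambda2\norm{\tilde\omega}^2$, so that $\nabla^2F=G^\lambda(\cdot|\cD_{1:\tau})$. Taylor's theorem with integral remainder, expanded at $\hat\omega_\tau$, gives
\begin{equation*}
F(\omega)-F(\hat\omega_\tau)=\inner{\nabla F(\hat\omega_\tau)}{\omega-\hat\omega_\tau}+\norm{\omega-\hat\omega_\tau}^2_{\Gamma^\lambda_\tau(\hat\omega_\tau)}.
\end{equation*}
The remainder is exactly the $\Gamma$-norm, because $\Gamma^\lambda_\tau(\hat\omega_\tau)=\int_0^1(1-s)G^\lambda(\hat\omega_\tau+s(\omega-\hat\omega_\tau))\ud s$.

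\textbf{Step 3 (dropping the linear term).} Since $\hat\omega_\tau$ minimizes the convex $F$ over the convex set $\Omega$ and $\omega\in\Omega$, the first-order optimality condition gives $\inner{\nabla F(\hat\omega_\tau)}{\omega-\hat\omega_\tau}\ge0$. Hence $\norm{\omega-\hat\omega_\tau}^2_{\Gamma^\lambda_\tau(\hat\omega_\tau)}\le F(\omega)-F(\hat\omega_\tau)$, and Step~1 bounds the right-hand side by $\beta_\tau^2(\delta)$.

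\textbf{Main obstacle.} The delicate point is checking that the constants in \citet{lee2024unified} transfer to the multinomial, span-invariant setting. Their theorem is stated for GLMs with a scalar link, whereas here the log-partition function is a vector softmax. I would redo their mixture argument directly. This needs two facts: $\exp(\cL(\tilde\omega)-\cL(\omega))$ is a nonnegative supermartingale for each fixed $\tilde\omega$, and $\cL$ is $(2\eta\tau)$-Lipschitz in $\tilde\omega$, since $\norm{\nabla l}\le 2\eta$ by the feature normalization. These yield the $\log(4e\eta S\tau/d)$ factor. The resulting event is uniform in $\tau$ by Ville's inequality, so no union bound over $\tau$ is needed.
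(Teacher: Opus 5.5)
For $\lambda=0$, your Steps~2--3 together with the appeal to \citet{lee2024unified} match the paper's proof. Both use a Taylor expansion with integral remainder at the constrained MLE, first-order optimality on $\Omega$ to discard the gradient term, and the likelihood-ratio confidence sequence with a Lipschitz constant of order $2\eta\tau$.

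The gap is the case $\lambda>0$. In the statement, $\hat\omega_\tau=\argmax_{\tilde\omega\in\Omega}\cL(\tilde\omega|\cD_{1:\tau})$ is the \emph{unpenalized} constrained MLE for every $\lambda$. The regularizer enters only through $G^{\lambda}$ and $\Gamma^{\lambda}_\tau$: Algorithm~\ref{alg:stackelberg-static} uses $\lambda$ only to make the design step well posed and still computes the unpenalized MLE. By switching to the ridge-penalized estimator when $\lambda>0$, you prove a bound for a different estimator. For the actual one, Step~3 is not available. With your $F$,
\[
\inner{\nabla F(\hat\omega_\tau)}{\omega-\hat\omega_\tau}=-\inner{\nabla\cL(\hat\omega_\tau|\cD_{1:\tau})}{\omega-\hat\omega_\tau}+\lambda\inner{\hat\omega_\tau}{\omega-\hat\omega_\tau}.
\]
The first term is nonnegative, but the second is negative whenever, e.g., $\norm{\hat\omega_\tau}_2=S>\norm{\omega}_2$. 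So you can no longer drop the linear term.

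The repair is the paper's route, and it is also where the constant $2\lambda S^2$ comes from. Since $\Gamma^{\lambda}_\tau(\tilde\omega)=\Gamma^{0}_\tau(\tilde\omega)+\frac{\lambda}{2}I$, run Steps~2--3 with $\lambda=0$, where the unpenalized MLE is the correct optimizer. Then bound the ridge part crudely: $\frac{\lambda}{2}\norm{\omega-\hat\omega_\tau}_2^2\le\frac{\lambda}{2}(2S)^2=2\lambda S^2$. This gives $\norm{\omega-\hat\omega_\tau}^2_{\Gamma^{\lambda}_\tau(\hat\omega_\tau)}\le 2\lambda S^2+\cL(\hat\omega_\tau|\cD_{1:\tau})-\cL(\omega|\cD_{1:\tau})$, and Step~1 applies unchanged.

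Two smaller remarks on Step~1:
\begin{itemize}
\item Re-deriving the confidence sequence for the multinomial model from the likelihood-ratio martingale plus Lipschitzness is sound, and more careful than the paper, which simply cites the scalar-GLM result. In that mixture argument, though, the prior must be data-independent (uniform on $\Omega$). The small ball near $\hat\omega_\tau$ is the data-dependent comparison distribution in the Donsker--Varadhan step; its KL to the prior produces the $d\log$ term.
\item With $\tau$ summands, each $2\eta$-Lipschitz, you get $\tau$ rather than $\tau-1$ inside the logarithm. Your display and your Lipschitz computation disagree on this. The paper's proof has the same harmless off-by-one (it asserts $2\eta(\tau-1)$), but you should settle on one convention.
\end{itemize}
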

\begin{proof}
    Combining a first-order Taylor expansion and the optimality of $\hat\omega_\tau$ with respect to $\cL\left(\cdot | \cD_{1:\tau}\right)$,
    \begin{align*}
        \cL\left(\omega | \cD_{1:\tau}\right) - \cL\left(\hat\omega_\tau | \cD_{1:\tau}\right)
        &= \inner*{\nabla \cL\left(\hat\omega_\tau | \cD_{1:\tau}\right)}{\omega - \hat\omega_\tau}\\
        & + (\omega - \hat\omega_\tau)^\top \left(\int_0^1(1-s)\nabla^2\cL\left(\hat\omega_\tau + s(\omega - \hat\omega_\tau) | \cD_{1:\tau}\right) \ud s\right)(\omega - \hat\omega_\tau)\\
        &\leq \frac{\lambda}{2} \norm*{ \omega - \hat\omega_\tau }^2_{2} - \norm*{ \omega - \hat\omega_\tau }^2_{\Gamma^{\lambda}_\tau(\hat\omega_\tau)}\\
        &\leq 2\lambda S^2 - \norm*{ \omega - \hat\omega_\tau }^2_{\Gamma^{\lambda}_\tau(\hat\omega_\tau)}.
    \end{align*}

    Now, note that $\cL\left(\cdot | \cD_{1:\tau}\right)$ is $2\eta(\tau-1)$-Lipschitz by Rademacher's theorem.
    
    We can then apply Theorem~3.1 in \citet{lee2024unified} and conclude.
\end{proof}

\begin{lemma}[Trace splitting]\label{lem:trace-split}
    Let $x \in \R^d$, $A \succeq 0$, and $B \succ 0$. Then
    \begin{equation}
        \left\lVert x \right\rVert^2_{A} \leq \tr\left(B^{-1} A\right) \cdot \left\lVert x \right\rVert^2_{B}.
    \end{equation}
\end{lemma}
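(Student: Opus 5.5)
The plan is to reduce the claim to the Cauchy--Schwarz inequality in the geometry induced by $B$. Since $B \succ 0$, it has a symmetric positive definite square root $B^{1/2}$ with inverse $B^{-1/2}$. I would introduce the whitened vector $y \ceq B^{1/2}x$, so that $\norm{x}_B^2 = y^\top y = \norm{y}_2^2$. In the same coordinates I would set $\tilde A \ceq B^{-1/2} A B^{-1/2}$. Then
\begin{equation}
    \norm{x}_A^2 = x^\top A x = y^\top \tilde A\, y.
\end{equation}
The matrix $\tilde A$ is symmetric positive semidefinite by congruence, because $A \succeq 0$.

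The key step is the Rayleigh-quotient bound together with a spectral comparison. First, $y^\top \tilde A y \leq \lambda_{\max}(\tilde A)\norm{y}_2^2$. Second, because all eigenvalues of $\tilde A$ are nonnegative, the largest one is at most their sum, so $\lambda_{\max}(\tilde A) \leq \tr(\tilde A)$. By cyclicity of the trace,
\begin{equation}
    \tr(\tilde A) = \tr\left(B^{-1/2} A B^{-1/2}\right) = \tr\left(B^{-1} A\right).
\end{equation}
Chaining these gives $\norm{x}_A^2 \leq \tr(B^{-1}A)\,\norm{x}_B^2$.

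There is no genuine obstacle here. The only point needing care is the use of positive semidefiniteness of $\tilde A$ in the step $\lambda_{\max}(\tilde A) \leq \tr(\tilde A)$, since nonnegative eigenvalues are exactly what make that step valid. The proof of Theorem~\ref{thm:stackelberg-static-formal} uses the lemma on $\cG$ with pseudo-inverses. For that version I would apply the same argument to the restricted matrices $U^\top A U$ and $U^\top B U$, and reduce to the statement above via Lemma~\ref{lem:g-linear-algebra}. This reduction uses $x \in \cG$, or equivalently the fact that only the projection of $x$ onto $\cG$ matters.
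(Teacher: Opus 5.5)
Your proof is correct and follows the paper's own argument: you whiten with $y = B^{1/2}x$, bound the quadratic form of $B^{-1/2}AB^{-1/2}$ by its largest eigenvalue, bound that eigenvalue by the trace using positive semidefiniteness, and finish by cyclicity. Your remark on the version restricted to $\cG$ is also sound, since both matrices are supported on $\cG$ there and so only the projection of $x$ onto $\cG$ matters; this is how the lemma is applied in the main theorem.
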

\begin{proof}
    Set $y \ceq B^{1/2} x$, so that $\left\lVert x \right\rVert^2_{B} = \left\lVert y \right\rVert^2_2$ and
    \begin{equation}
        \left\lVert x \right\rVert^2_{A} = x^\top A x = y^\top \left(B^{-1/2} A B^{-1/2}\right) y \leq \lambda_{\max}\!\left(B^{-1/2} A B^{-1/2}\right) \left\lVert y \right\rVert^2_2.
    \end{equation}
    
    Now since $B \succ 0$, we note that:
    \begin{equation}
        \lambda_{\max}\!\left(B^{-1/2} A B^{-1/2}\right) \leq \tr\left(B^{-1/2} A B^{-1/2}\right) = \tr\left(B^{-1} A\right),
    \end{equation}
    
    which gives the claim.
\end{proof}

\begin{lemma}[Divergence decomposition]\label{lem:div-decomp}
    Let $K \in \mathbb{Z}_{> 0}$ and let $Q^{(1)}, Q^{(2)}$ be two kernels from $\cX$ to a common measurable space such that $Q^{(1)}(\cdot | x) \ll Q^{(2)}(\cdot | x)$ for every $x \in \cX$. Define a common kernel family $\{\pi_k\}_{k \leq K}$ inducing
    \begin{equation}
        \mathbb{P}^{(j)} = \operatorname{Law}\left(\tau_K \ceq (x_1, y_1, \dots, x_K, y_K)\right),
        \qquad y_k \sim Q^{(j)}(\cdot | x_k),
        \qquad \forall j \in \{1, 2\},
    \end{equation}
    with $x_k \sim \pi_k(\cdot \mid \tau_{<k})$ for the sub-trajectory $\tau_{<k} \ceq (x_1, y_1, \dots, x_{k-1}, y_{k-1})$.
    
    Then,
    \begin{equation}
        \KL\left(\mathbb{P}^{(1)}\,\big\|\,\mathbb{P}^{(2)}\right) = \sum_{k=1}^{K}\E^{(1)}\left[\KL\left(Q^{(1)}(\cdot | x_k)\,\big\|\,Q^{(2)}(\cdot | x_k)\right)\right].
    \end{equation}
\end{lemma}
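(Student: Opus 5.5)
The plan is the standard chain-rule argument for KL divergences of adaptively collected data. First, write the joint density of the trajectory $\tau_K$ under $\mathbb{P}^{(j)}$ with respect to a common dominating measure. Since the policy kernels $\pi_k(\cdot\mid\tau_{<k})$ are shared, this density factorizes as
\begin{equation}
    p^{(j)}(\tau_K) = \prod_{k=1}^{K} \pi_k(x_k \mid \tau_{<k})\, q^{(j)}(y_k \mid x_k),
\end{equation}
where $q^{(j)}(\cdot\mid x)$ is a density of $Q^{(j)}(\cdot\mid x)$ with respect to $Q^{(2)}(\cdot\mid x)$. Concretely, take $q^{(2)} \equiv 1$ and $q^{(1)} = \ud Q^{(1)}/\ud Q^{(2)}$, which exists by the absolute continuity assumption. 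The factors $\pi_k$ appear identically in both laws, so they cancel in the likelihood ratio. This gives $\mathbb{P}^{(1)} \ll \mathbb{P}^{(2)}$ together with
\begin{equation}
    \log\frac{\ud \mathbb{P}^{(1)}}{\ud \mathbb{P}^{(2)}}(\tau_K) = \sum_{k=1}^{K} \log \frac{\ud Q^{(1)}(\cdot\mid x_k)}{\ud Q^{(2)}(\cdot\mid x_k)}(y_k).
\end{equation}

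Second, take the expectation under $\mathbb{P}^{(1)}$ and use linearity to handle each term separately. For the $k$-th term, condition on $(\tau_{<k}, x_k)$ via the tower property. Conditionally on $x_k$, the observation $y_k$ is distributed as $Q^{(1)}(\cdot\mid x_k)$ under $\mathbb{P}^{(1)}$, so the inner conditional expectation equals $\KL(Q^{(1)}(\cdot\mid x_k)\,\|\,Q^{(2)}(\cdot\mid x_k))$. Taking the outer expectation $\E^{(1)}$ and summing over $k$ gives the claimed identity.

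The main obstacle is measure-theoretic bookkeeping rather than anything conceptual. One must check that the Radon--Nikodym derivative $x\mapsto \ud Q^{(1)}(\cdot\mid x)/\ud Q^{(2)}(\cdot\mid x)$ can be chosen jointly measurable in $(x,y)$. In our application $\cX = \Delta(\cA)$ and the outcome space $\cB$ is finite with softmax kernels, so this density is explicit and continuous and the issue disappears. One must also justify exchanging expectation and sum when some terms may be infinite. If one insists on full generality, the cleanest route is to proceed by induction on $K$ using the chain rule for KL divergence of joint distributions, $\KL(P_{XY}\|Q_{XY}) = \KL(P_X\|Q_X) + \E_{P_X}[\KL(P_{Y|X}\|Q_{Y|X})]$. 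At each step the policy-kernel contribution is zero because the kernels $\pi_k$ coincide under both laws.
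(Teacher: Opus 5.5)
Your proposal is correct and matches the paper's proof. Both factorize the two trajectory laws through the shared kernels $\pi_k$, cancel them so that $\log \frac{\ud \mathbb{P}^{(1)}}{\ud \mathbb{P}^{(2)}}(\tau_K) = \sum_{k=1}^{K} \log \frac{\ud Q^{(1)}(\cdot \mid x_k)}{\ud Q^{(2)}(\cdot \mid x_k)}(y_k)$ holds $\mathbb{P}^{(1)}$-a.s., and conclude by taking $\E^{(1)}$ and conditioning on $(\tau_{<k}, x_k)$ term by term; your remarks on joint measurability of the density in $(x,y)$ and on splitting the expectation when terms may be infinite are care the paper leaves implicit, and the latter is harmless because the negative part of each log-likelihood ratio has $\E^{(1)}$-expectation at most $1/e$, so every term lies in $(-\infty,+\infty]$.
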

\begin{proof}
    By construction, the trajectory laws factorize as:
    \begin{equation}
        \mathbb{P}^{(j)}(\ud \tau_K) = \prod_{k=1}^{K} \pi_k(\ud x_k \mid \tau_{< k})\, Q^{(j)}(\ud y_k \mid x_k),
    \end{equation}
    so that $\mathbb{P}^{(1)} \ll \mathbb{P}^{(2)}$.

    Taking the likelihood ratio, the common kernels cancel:
    \begin{equation}
        \log\frac{\ud \mathbb{P}^{(1)}}{\ud \mathbb{P}^{(2)}}(\tau_K)
        = \sum_{k=1}^{K}\log\frac{\ud Q^{(1)}(\cdot \mid x_k)}{\ud Q^{(2)}(\cdot \mid x_k)}(y_k), \qquad \mathbb{P}^{(1)}\text{-a.s.}
    \end{equation}
    
    Taking expectations on both sides and using the tower property and substitution law for conditional expectations give the final result.
\end{proof}

\begin{lemma}[KL divergence of softmax policies]\label{lem:kl-softmax}
    Let $v, v' \in \mathbb{R}^d$ and $\pi = \softmax(v), \pi' = \softmax(v')$ the two corresponding softmax policies with inverse temperature $\eta$. Then,
    \begin{equation}
        \KL(\pi || \pi') \leq \frac{\eta^2}{8}\norm{ v - v' }_{\text{span}}^2
    \end{equation}
    where $\norm{ x }_{\text{span}} = \max_a x_a - \min_a x_a$.
\end{lemma}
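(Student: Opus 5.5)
This is the standard Hoeffding-lemma bound for the log-partition function; the plan is to write the KL divergence as a Bregman divergence of the log-sum-exp. Set $F(x) \ceq \log\sum_a e^{\eta x_a}$ and $h \ceq v' - v$. Since $\log \pi(a) = \eta v_a - F(v)$ and $\log\pi'(a) = \eta v'_a - F(v')$, we get
\begin{equation}
    \KL(\pi \| \pi') = \E_{a\sim\pi}\left[\eta v_a - \eta v'_a\right] + F(v') - F(v) = F(v+h) - F(v) - \eta\,\E_{\pi}[h_a].
\end{equation}
Here $\nabla F(v) = \eta\,\pi$, so $\eta\,\E_\pi[h_a] = \inner{\nabla F(v)}{h}$. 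Hence $\KL(\pi\|\pi')$ is exactly the first-order Taylor remainder of $F$ at $v$ in the direction $h$.

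Next, define $\varphi(s) \ceq F(v + s h)$ for $s\in[0,1]$ and let $p_s \ceq \softmax(v+sh)$. Then $\varphi'(s) = \eta\,\E_{p_s}[h]$ and $\varphi''(s) = \eta^2 \Var_{p_s}(h)$, as already computed in the proof of Lemma~\ref{lem:self-concordant-f}. Taylor's theorem with integral remainder gives
\begin{equation}
    \KL(\pi\|\pi') = \int_0^1 (1-s)\,\eta^2 \Var_{p_s}(h)\,\ud s .
\end{equation}

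The key estimate is a Popoviciu-type bound on the variance. Under any distribution $p_s$, the random variable $h_a$ lies in the interval $[\min_a h_a, \max_a h_a]$, which has length $\norm{h}_{\text{span}}$. Therefore $\Var_{p_s}(h) \le \tfrac14 \norm{h}_{\text{span}}^2$, since a variable confined to an interval of length $L$ has variance at most $L^2/4$. Substituting this and using $\int_0^1(1-s)\,\ud s = \tfrac12$ yields
\begin{equation}
    \KL(\pi\|\pi') \le \frac{\eta^2}{8}\norm{v-v'}_{\text{span}}^2 .
\end{equation}

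This is the same argument as Hoeffding's lemma. No step is really an obstacle; the only care needed is to check that the sign conventions in the Bregman identity match the order of the arguments $\KL(\pi\|\pi')$, and that the span is invariant under $h\mapsto -h$, so either orientation of the difference gives the same bound.
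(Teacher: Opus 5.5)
Your proof is correct and is essentially the paper's argument: your Bregman remainder $F(v+h)-F(v)-\eta\,\E_\pi[h]$ is exactly the centered log-moment-generating function $\log\E_\pi\bigl[e^{\eta(h-\E_\pi[h])}\bigr]$ that the paper writes down. Where the paper cites Hoeffding's lemma at that point, you reprove it inline via the integral Taylor remainder and Popoviciu's variance bound under the tilted distributions $p_s$, which makes your version self-contained but otherwise identical in substance.
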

\begin{proof}
    Define the partition terms $Z, Z'$ such that for any $1 \leq i \leq d$,
    \begin{equation}
        \pi_i = e^{\eta v_i}/Z \qquad  \text{and} \qquad \pi'_i = e^{\eta v'_i}/Z'.
    \end{equation}

    Using the notation $\Delta_i = v_i' - v_i$, notice that:
    \begin{equation}
        Z' = \sum_i e^{\eta v'_i} = \sum_i e^{\eta v_i} \cdot e^{\eta \Delta_i} = Z \E_{\pi}\left[e^{\eta \Delta_i}\right].
    \end{equation}

    Therefore,
    \begin{align*}
        \KL(\pi || \pi')
        &= \sum_i \pi_i \log(\pi_i/\pi'_i)\\
        &= \sum_i \pi_i \cdot [\eta v_i - \log Z - \eta v'_i + \log Z']\\
        &= \log Z' - \log Z - \eta  \sum_i \pi_i \cdot \Delta_i\\
        &= \log\left(\E_{\pi}\left[e^{\eta \Delta_i}\right]\right) - \eta\E_{\pi}\left[\Delta_i\right]\\
        &= \log\left(\E_{\pi}\left[e^{\eta (\Delta_i - \E_{\pi}\left[\Delta_i\right])}\right]\right)\\
        &\leq \frac{\eta^2 \norm{ \Delta }_{\text{span}}^2}{8}
    \end{align*}

    where the last line holds by Hoeffding's lemma.
\end{proof}

\begin{lemma}[Elliptical Potential Lemma]\label{lem:EPL}
    Fix $K \geq 1$, $m \geq 1$, $\rho > 0$ and some sequence $A_1, \dots, A_K \in \mathbb{S}^m_+$ such that $\tr\left(A_k\right) \leq L$ for all $k \in [K]$. Define for all $k \in [K]$,
    \begin{equation}
        U_k := U_0 + \sum_{\tau=1}^{k} A_{\tau}, \qquad \mathrm{for\ a\ given} \quad U_0 \succeq \rho I_m.
    \end{equation}
    
    If $\rho \geq L$, then
    \begin{equation}
        \sum_{k=1}^{K} \tr\left(U^{-1}_{k-1} A_k\right)
        \leq 2 \log \frac{\det U_K}{\det U_0}
        \leq 2 m\log\left(1 + \frac{KL}{m\rho}\right).
    \end{equation}
\end{lemma}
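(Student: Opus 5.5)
The plan is the standard log-determinant telescoping argument, with the condition $\rho \geq L$ used to keep every per-step increment $x_k \ceq \tr(U_{k-1}^{-1}A_k)$ at most $1$. First, since $U_{k-1} \succeq U_0 \succeq \rho I_m$, we get $U_{k-1}^{-1} \preceq \rho^{-1} I_m$. Hence
\begin{equation}
    x_k \leq \rho^{-1}\tr(A_k) \leq L/\rho \leq 1.
\end{equation}

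Second, I will relate $x_k$ to the log-determinant increment. Write $U_k = U_{k-1}^{1/2}\left(I + B_k\right)U_{k-1}^{1/2}$ with $B_k \ceq U_{k-1}^{-1/2}A_kU_{k-1}^{-1/2} \succeq 0$. Then
\begin{equation}
    \log\det U_k - \log\det U_{k-1} = \sum_i \log(1+\lambda_i(B_k)),
\end{equation}
and $\sum_i \lambda_i(B_k) = x_k \leq 1$, so each eigenvalue lies in $[0,1]$. On $[0,1]$ the elementary inequality $x \leq 2\log(1+x)$ holds, because $\log(1+x) \geq x/2$ there: check the endpoints and use concavity. Applying it eigenvalue by eigenvalue gives $x_k \leq 2\log\frac{\det U_k}{\det U_{k-1}}$. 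Summing over $k$ and telescoping yields the first inequality. This eigenvalue-wise step is the only delicate point. One must use that each individual eigenvalue is at most $1$, not merely that the trace is at most $1$, which holds since all eigenvalues are nonnegative.

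Third, for the second inequality, I bound $\log\det U_K - \log\det U_0$. Concavity of $\log\det$ (equivalently AM--GM on the eigenvalues of $U_0^{-1/2}U_KU_0^{-1/2}$) gives
\begin{equation}
    \log\frac{\det U_K}{\det U_0} \leq m\log\left(\frac{1}{m}\tr\left(U_0^{-1}U_K\right)\right) = m\log\left(1 + \frac{1}{m}\sum_{k}\tr\left(U_0^{-1}A_k\right)\right).
\end{equation}
Finally, $\tr(U_0^{-1}A_k) \leq \rho^{-1}\tr(A_k) \leq L/\rho$, which gives $m\log(1+KL/(m\rho))$. No step is a serious obstacle; the only care needed is in the second step, as noted above.
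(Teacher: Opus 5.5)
Your proof is correct and follows the paper's argument step for step: the same whitened increment $U_{k-1}^{-1/2}A_kU_{k-1}^{-1/2}$ with trace at most $L/\rho \leq 1$, the same inequality $\log(1+x) \geq x/2$ on $[0,1]$ applied eigenvalue by eigenvalue, and the same telescoping. Your AM--GM step on the whitened matrix $U_0^{-1/2}U_KU_0^{-1/2}$ is a good choice. The paper's one-line ``AM--GM on the eigenvalues of $U_K$'' does not by itself give $m\log(1+KL/(m\rho))$ for a general $U_0 \succeq \rho I_m$ that is not a multiple of the identity, whereas your version does.
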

\begin{proof}
    Define $M_k := U^{-1/2}_{k-1} A_{k} U^{-1/2}_{k-1} \succeq 0$ such that
    \begin{equation}
        \lambda_{\max}(M_k) \leq \tr(M_k) = \tr\left(U_{k-1}^{-1}A_{k}\right) \leq \frac{\tr(A_{k})}{\lambda_{\min}(U_{k-1})} \leq \frac{L}{\rho} \leq 1.
    \end{equation}

    Therefore, using the relation $\log(1+x) \geq x/2$ for every $x \in [0, 1]$,
    \begin{equation}
        \log\left(\frac{\det U_k}{\det U_{k-1}}\right) = \log \det (I + M_k) = \sum_{i=1}^m \log(1 + \lambda_i(M_k)) \geq \frac{1}{2}\tr(M_k).
    \end{equation}

    Hence, by a telescopic argument,
    \begin{equation}
        \sum_{k=1}^{K} \tr\left(U_{k-1}^{-1} A_{k}\right) = \sum_{k=1}^{K} \tr(M_k) \leq 2 \log\left(\frac{\det U_K}{\det U_{0}}\right),
    \end{equation}
    
    which gives the first inequality.\\

    The second inequality follows from the AM-GM inequality on the eigenvalues of $U_K$.
\end{proof}

\begin{corollary}[Whitened Elliptical Potential Lemma on $\cG$]
    \label{cor:EPL-whitened}
    Fix $K \geq 1$, $q > 0$, $U_0 \in \mathbb{S}_{++}(\cG)$ and some sequence $A_1, \dots, A_K \in \mathbb{S}_+(\cG)$ such that for all $k \in [K]$,
    \begin{equation}
        \tr\left(U_0^\dagger A_k\right) \leq q \quad \text{and} \quad  U_k \ceq U_0 + \sum_{\tau=1}^k A_\tau.
    \end{equation}
    
    If $q \leq 1$, then
    \begin{equation}
        \sum_{k=1}^K
        \tr\left(U_{k-1}^\dagger A_k\right)
        \leq
        2\log
        \frac{\det_{\cG}U_K}{\det_{\cG}U_0}
        \leq
        2d^\diamond
        \log\left(1+\frac{qK}{d^\diamond}\right).
    \end{equation}
\end{corollary}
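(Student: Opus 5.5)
The plan is to reduce this statement to the isotropic Elliptical Potential Lemma (Lemma~\ref{lem:EPL}) by restricting to $\cG$ and then whitening with respect to $U_0$. Fix $U \in \R^{d\times d^\diamond}$ with orthonormal columns spanning $\cG$, and set $\bar U_k \ceq U^\top U_k U$ and $\bar A_k \ceq U^\top A_k U$. Since every matrix involved lies in $\mathbb{S}_+(\cG)$, we have $U_k^\dagger = U\bar U_k^{-1}U^\top$. Hence $\tr(U_{k-1}^\dagger A_k) = \tr(\bar U_{k-1}^{-1}\bar A_k)$ and $\det_\cG U_k = \det \bar U_k$. Because $U_0\in\mathbb{S}_{++}(\cG)$, $\bar U_0 \succ 0$, and every $\bar U_k \succeq \bar U_0$ is invertible. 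The claim is therefore equivalent to the same statement for $d^\diamond\times d^\diamond$ matrices, with $\bar U_0 \succ 0$ and $\tr(\bar U_0^{-1}\bar A_k)\le q$.

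Next I whiten by $\bar U_0$. Define $\tilde A_k \ceq \bar U_0^{-1/2}\bar A_k\bar U_0^{-1/2}\succeq 0$ and $\tilde U_k \ceq I + \sum_{\tau\le k}\tilde A_\tau = \bar U_0^{-1/2}\bar U_k\bar U_0^{-1/2}$. Cyclicity of the trace gives $\tr(\tilde U_{k-1}^{-1}\tilde A_k) = \tr(\bar U_{k-1}^{-1}\bar A_k)$. We also have $\det\tilde U_K/\det\tilde U_0 = \det\bar U_K/\det\bar U_0$, and $\tr(\tilde A_k) = \tr(\bar U_0^{-1}\bar A_k)\le q$.

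Now apply Lemma~\ref{lem:EPL} with $m=d^\diamond$, $\tilde U_0 = I$, $\rho=1$ and $L=q$. Its hypothesis $\rho\ge L$ is exactly the assumption $q\le 1$. It yields
\[
\sum_k \tr(\tilde U_{k-1}^{-1}\tilde A_k)\le 2\log\frac{\det \tilde U_K}{\det \tilde U_0}\le 2d^\diamond\log\left(1+\frac{qK}{d^\diamond}\right).
\]
The second inequality follows from AM-GM: $\tr\tilde U_K\le d^\diamond + Kq$. Translating back through the identities above gives both inequalities of the corollary.

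There is no real obstacle. The only care needed is the bookkeeping of the pseudo-inverse and $\det_\cG$ conventions. In particular, I must check that $U_k^\dagger = U\bar U_k^{-1}U^\top$ holds because $\Image U_k = \cG$. This follows from $U_k\succeq U_0$ together with all $A_\tau$ being supported on $\cG$, which is the linear-algebra fact recorded in Lemma~\ref{lem:g-linear-algebra}.
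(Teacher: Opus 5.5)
Your proposal is correct and follows essentially the same route as the paper: restrict to $\cG$ via an orthonormal basis, whiten by $\bar U_0^{-1/2}$, and apply Lemma~\ref{lem:EPL} with $m=d^\diamond$, $\rho=1$, $L=q$. Your explicit check that $\Image U_k=\cG$, which justifies $U_k^\dagger = U\bar U_k^{-1}U^\top$, is a detail the paper leaves implicit.
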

\begin{proof}
    Let $Q\in\R^{d\times d^\diamond}$ have orthonormal columns spanning
    $\cG$, and write the projections of $U_k$ and $A_k$ via $Q$ as:
    \begin{equation}
        \bar U_k \ceq Q^\top U_k Q \in \mathbb{S}^{d^\diamond}_{++},
        \qquad
        \bar A_k \ceq Q^\top A_k Q \in \mathbb{S}^{d^\diamond}_{+}.
    \end{equation}
    
    Define the following matrices, whitened by the initial $\bar U_0$:
    \begin{equation}
        \tilde U_k
        \ceq
        \bar U_0^{-1/2}\bar U_k\bar U_0^{-1/2},
        \qquad
        \tilde A_k
        \ceq
        \bar U_0^{-1/2}\bar A_k\bar U_0^{-1/2},
    \end{equation}
    such that $\tilde U_0=I_{d^\diamond}$ and $\tr(\tilde A_k) = \tr(U_0^\dagger A_k) \leq q$.
    
    We can then apply Lemma~\ref{lem:EPL} with $m = d^\diamond$, $\rho = 1$, and $L = q$ and obtain, given $q \leq \rho$:
    \begin{equation}
        \sum_{k=1}^{K} \tr\left(\tilde U^{-1}_{k-1} \tilde A_k\right)
        \leq 2 \log \frac{\det \tilde U_K}{\det \tilde U_0}
        \leq 2 d^\diamond\log\left(1 + \frac{qK}{d^\diamond}\right).
    \end{equation}
    
    We then conclude by the following equalities:
    \begin{equation}
        \tr(\tilde U_{k-1}^{-1}\tilde A_k)
        =
        \tr(U_{k-1}^\dagger A_k),
        \qquad
        \frac{\det\tilde U_K}{\det\tilde U_0}
        =
        \frac{\det_{\cG}U_K}{\det_{\cG}U_0}.
    \end{equation}
\end{proof}

\begin{lemma}\label{lem:nlnn}
    Let $a, b \in \R_+$ such that $ae^b > e$. Then we have $n \geq a(b + \log n)$ if $n \geq 2a(b + \log a)$.
\end{lemma}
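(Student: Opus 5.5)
Since $n \mapsto n - a\log n$ is increasing for $n \geq a$, it suffices to check the inequality at the threshold $n_0 \ceq 2a(b+\log a)$, provided $n_0 \geq a$. The statement is only meaningful when $n_0 > 0$, and the hypothesis $ae^b > e$ gives $b + \log a > 1$, so $n_0 > 2a \geq a$. For any $n \geq n_0$ we then have
\begin{equation}
    n - a\log n \;\geq\; n_0 - a\log n_0 ,
\end{equation}
and it remains to show $n_0 - a\log n_0 \geq ab$, i.e. $n_0 \geq a(b + \log n_0)$.

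Write $c \ceq b + \log a > 1$, so that $n_0 = 2ac$. The target inequality becomes
\begin{equation}
    2ac \;\geq\; a\left(b + \log(2ac)\right) = a\left(b + \log a + \log 2 + \log c\right) = a\left(c + \log 2 + \log c\right).
\end{equation}
Dividing by $a > 0$, this reduces to the scalar inequality $c \geq \log 2 + \log c$, i.e. $c \geq \log(2c)$.

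To verify it, set $h(c) \ceq c - \log(2c)$. Then $h'(c) = 1 - 1/c$, so $h$ is minimized at $c = 1$, where $h(1) = 1 - \log 2 > 0$. Hence $h(c) > 0$ for all $c > 0$, and in particular on the relevant range $c > 1$.

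The only delicate point is checking that $n_0$ lies in the increasing regime $n \geq a$, and that the logarithms are well defined. Both follow from $c > 1$, which is exactly what the condition $ae^b > e$ supplies. Everything else is a one-variable inequality.
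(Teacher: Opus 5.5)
Your proof is correct, but it takes a different route from the paper's. The paper rewrites $n \geq a(b+\log n)$ as $\left(-\frac{n}{a}\right)e^{-n/a} \geq -\frac{1}{a}e^{-b}$ and reads off the sufficient condition $n \geq -a\,W_{-1}\left(-\frac{1}{a}e^{-b}\right)$ from the lower branch of the Lambert $W$ function. It then relaxes this threshold to $2a(b+\log a)$ by citing Lemma~B.8 of \citet{metelli2021provably}. In that argument, the hypothesis $ae^b>e$ puts the argument of $W_{-1}$ in $(-1/e,0)$, and it is also the condition the cited bound requires. You instead use that $n \mapsto n - a\log n$ is nondecreasing on $[a,\infty)$ to reduce to the single point $n_0 = 2a(b+\log a)$. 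This leaves the scalar inequality $c \geq \log(2c)$ with $c = b+\log a$, which holds for every $c>0$ because $c-\log(2c) \geq 1-\log 2$. Your argument is self-contained and elementary, and it shows that the hypothesis is used only to ensure $a>0$ and $n_0 \geq a$. In fact, since $\min_{n>0}(n - a\log n) = a(1-\log a)$, the inequality holds for every $n>0$ whenever $b+\log a \leq 1$; so $ae^b>e$ only singles out the nontrivial regime. The paper's route, in exchange, identifies the exact cutoff $-aW_{-1}\left(-e^{-b}/a\right)$ beyond which the inequality holds, which makes the slack in the final threshold explicit. The cost is reliance on an external estimate for $W_{-1}$.
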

\begin{proof}
    \noindent In general for every positive integer $n$, some algebraic manipulations give:
    \begin{align*}
        n \geq a(b + \log n)
        &\Leftrightarrow n \geq ab + a\log n\\
        &\Leftrightarrow \frac{n}{a} - \log n \geq b\\
        &\Leftrightarrow \log n - \frac{n}{a} \leq -b\\
        &\Leftrightarrow n e^{-\frac{n}{a}} \leq e^{-b}\\
        &\Leftrightarrow \left(-\frac{n}{a}\right) e^{-\frac{n}{a}} \geq -\frac{1}{a}e^{-b}
    \end{align*}
    which gives the sufficient condition
    \begin{equation}
        n \geq -a \cdot W_{-1}\left(-\frac{1}{a}e^{-b}\right) = -2a' W_{-1}\left(-\frac{1}{2a'\sqrt{b'}}\right),
    \end{equation}
    where we define $a' \ceq \frac{a}{2}$ and $b' \ceq e^{2b}$. $W_{-1}$ represents the lower branch of the Lambert W function.\\

    \noindent Applying the simplification of \citet[Lemma~B.8]{metelli2021provably} we get:
    \begin{equation}
        -2a' W_{-1}\left(-\frac{1}{2a'\sqrt{b'}}\right) \leq 4a'\log(2a'\sqrt{b'}) = 2a\log(ae^b) = 2a(\log a + b),
    \end{equation}
    which works under the assumption $2a'\sqrt{b'} = ae^{b} > e$.\\

    \noindent Hence, given $2a'\sqrt{b'} = ae^{b} > e$, we can conclude:
    \begin{equation}
        n \geq 2a(b + \log a) \implies n \geq a(b + \log n).
    \end{equation}
\end{proof}

\section{Additional experiments}\label{appx:experiments}

We provide below additional experiments and ablations, complementing Section~\ref{sec:numerical-validation} in the main text. Unless stated otherwise, the plots report the mean and two-sided $95\%$ confidence interval across independent seeds.

We study the following algorithms.

\noindent\textbf{Active algorithms.} These methods are implemented with and without an initial warm-up phase (using Algorithm~\ref{alg:warm-up}), and are operating under the assumption of a follower best-response oracle, given a leader policy that is actively selected.
\begin{itemize}
    \item \texttt{Active IRL} (ours): An implementation of Algorithm~\ref{alg:stackelberg-static}.
    \item \texttt{Fixed random}: An MLE-based reward extraction similar to Algorithm~\ref{alg:stackelberg-static}, where the leader policy is randomly sampled then kept fixed throughout all online iterations.
    \item \texttt{Online random}: A similar MLE-based algorithm using a new randomly sampled leader policy at each online iteration.
\end{itemize}
\vspace{10pt}
\noindent\textbf{Fixed-expert algorithms.}
These methods assume access to a fixed expert, from which they can collect samples. The expert is defined with respect to a fixed (but unknown) leader reward, which is also in contrast to active algorithms.
\begin{itemize}
    \item \texttt{Fixed-expert MLE}: Uses the same constrained MLE as the active methods, but fixing the policy as a known Stackelberg optimal leader.
    \item \texttt{Follower-only OLS}: Recovering the ordinary least-square solution of the empirical version of the linear system of Equation~\ref{eq:lin-sys-linear}. This assumes access to an entropy-regularized joint-QRE expert oracle.
    \item \texttt{Zero-sum OLS}: Similar to \texttt{Follower-only OLS} and following the analysis of \citet{pmlr-v267-liao25i}, leverages the fact that the game is known to be zero-sum and taking the ordinary least squares solution of the stacked system formed by the linear equality constraint of Equation~\ref{eq:lin-sys-linear} and its leader counter-part.
\end{itemize}

For each algorithm, we term target SubOpt (respectively target Fisher) the Stackelberg sub-optimality of the policy obtained by optimizing the leader on the learned reward (respectively the norm in the Fisher information geometry described by the reward transfer lemma, Lemma~\ref{lem:transfer-lemma}), for the true leader reward used during training.

Similarly, the averaged transfer SubOpt and averaged transfer Fisher describe the same quantities, but averaged over ten selected leader rewards. These ten downstream tasks have been chosen from a pool of $500$ randomly chosen reward candidates, by greedily maximizing diversity of payoff matrices. The pool is disjoint from the tasks used for hyperparameter selection. Averaged transfer SubOpt and averaged transfer Fisher are used as proxies for measuring the strategic-transferability of rewards.

\subsection{Lobéké environment from the GUARD benchmark}
\label{appx:experiments:lobeke}

We provide additional experiments using the Lobéké National Park environment from the GUARD benchmark \citep{krever2026guard}, in the same setup as in the main text. We describe it below.

\textbf{Task} The environment models an anti-poaching scenario in the Lobéké National Park (Cameroon), designed from a dataset of recorded elephant movements. This classical green security game involves patrols (leader player) to protect wildlife against poachers (follower player). Both players distribute in the park and the followers are rewarded based on the area they occupy (as they can poach more animals) and the presence of patrols (as patrols are limiting their capacity to poach). On the other hand, the patrols can be incentivized in minimizing the rewards for the poachers, but also in trading-off this objective with the covering of prioritized areas.

\textbf{Model} (illustrated in Figure~\ref{fig:appendix-lobeke-environment}) The game is model as a $5 \times 5$ grid comprised of $23$ accessible locations. A ranger action action $a \in \cA$ is one of the $23$ feasible ten-step patrol schedules generated by GUARD, while the follower response $b \in \cB$ is the choice of one of the $23$ targets. At a given time step $t$, the leader first chooses a distribution over patrol schedules, to which the follower reacts by choosing a location distribution, modeled as the entropy-regularized best-response to the patrol policy. Taking the role of the leader and observing sampled actions $b_t \sim \nu^*_{\mu_t}$, the learner can therefore learn information about the follower's reward and, in the active case, adapt its subsequent patrol policies.

Additionally, Zero-sum OLS also observes a sampled expert leader action from its known expert follower. Therefore, this algorithm observes twice as many samples as the other method (one additional leader action for every follower action).

\begin{figure}[h]
  \centering
  \resizebox{0.8\textwidth}{!}{%
    \begin{tikzpicture}[font=\small, >=stealth]
\node[font=\scriptsize\bfseries, text=mairlInk] at (1.55cm,2.46cm)
  {(a) Target value};
\node[font=\scriptsize\bfseries, text=mairlInk] at (6.15cm,2.46cm)
  {(b) Ranger patrol};
\node[font=\scriptsize\bfseries, text=mairlInk] at (10.75cm,2.46cm)
  {(c) Poacher response};
\begin{scope}[shift={(0.600cm,0.180cm)}, x=0.380cm, y=0.380cm]
\fill[color={rgb,1:red,0.9593;green,0.9764;blue,0.9708}] (0,4) rectangle (1,5);
\fill[color={rgb,1:red,0.9322;green,0.9606;blue,0.9515}] (1,4) rectangle (2,5);
\fill[color={rgb,1:red,0.8244;green,0.8979;blue,0.8748}] (2,4) rectangle (3,5);
\fill[color={rgb,1:red,0.8748;green,0.9273;blue,0.9106}] (3,4) rectangle (4,5);
\fill[color={rgb,1:red,0.8177;green,0.8940;blue,0.8701}] (4,4) rectangle (5,5);
\fill[color={rgb,1:red,0.9525;green,0.9724;blue,0.9660}] (0,3) rectangle (1,4);
\fill[color={rgb,1:red,0.7406;green,0.8492;blue,0.8158}] (1,3) rectangle (2,4);
\fill[color={rgb,1:red,0.6170;green,0.7775;blue,0.7299}] (2,3) rectangle (3,4);
\fill[color={rgb,1:red,0.7306;green,0.8434;blue,0.8088}] (3,3) rectangle (4,4);
\fill[color={rgb,1:red,0.8714;green,0.9253;blue,0.9082}] (4,3) rectangle (5,4);
\fill[color={rgb,1:red,0.8748;green,0.9273;blue,0.9106}] (0,2) rectangle (1,3);
\fill[color={rgb,1:red,0.7473;green,0.8531;blue,0.8205}] (1,2) rectangle (2,3);
\fill[color={rgb,1:red,0.4786;green,0.6986;blue,0.6369}] (2,2) rectangle (3,3);
\fill[color={rgb,1:red,0.8210;green,0.8960;blue,0.8724}] (3,2) rectangle (4,3);
\fill[color={rgb,1:red,0.9864;green,0.9921;blue,0.9903}] (4,2) rectangle (5,3);
\fill[color={rgb,1:red,0.9796;green,0.9882;blue,0.9854}] (0,1) rectangle (1,2);
\fill[color={rgb,1:red,0.8613;green,0.9194;blue,0.9010}] (1,1) rectangle (2,2);
\fill[color={rgb,1:red,0.0863;green,0.5216;blue,0.4353}] (2,1) rectangle (3,2);
\fill[color={rgb,1:red,0.6404;green,0.7910;blue,0.7460}] (3,1) rectangle (4,2);
\fill[color={rgb,1:red,0.9898;green,0.9941;blue,0.9927}] (4,1) rectangle (5,2);
\fill[color={rgb,1:red,1.0000;green,1.0000;blue,1.0000}] (0,0) rectangle (1,1);
\fill[color={rgb,1:red,0.9898;green,0.9941;blue,0.9927}] (1,0) rectangle (2,1);
\fill[color={rgb,1:red,0.3916;green,0.6508;blue,0.5813}] (2,0) rectangle (3,1);
\fill[color={rgb,1:red,0.3881;green,0.6489;blue,0.5791}] (3,0) rectangle (4,1);
\fill[color={rgb,1:red,1.0000;green,1.0000;blue,1.0000}] (4,0) rectangle (5,1);
\draw[white, line width=0.25pt, step=1] (0,0) grid (5,5);
\draw[mairlGrid, line width=0.35pt] (0,0) rectangle (5,5);
\fill[black] (2.478,2.692) -- (2.298,2.372) -- (2.658,2.372) -- cycle;
\end{scope}
\begin{scope}[shift={(5.200cm,0.180cm)}, x=0.380cm, y=0.380cm]
\fill[color={rgb,1:red,1.0000;green,1.0000;blue,1.0000}] (0,4) rectangle (1,5);
\fill[color={rgb,1:red,1.0000;green,1.0000;blue,1.0000}] (1,4) rectangle (2,5);
\fill[color={rgb,1:red,1.0000;green,1.0000;blue,1.0000}] (2,4) rectangle (3,5);
\fill[color={rgb,1:red,1.0000;green,1.0000;blue,1.0000}] (3,4) rectangle (4,5);
\fill[color={rgb,1:red,1.0000;green,1.0000;blue,1.0000}] (4,4) rectangle (5,5);
\fill[color={rgb,1:red,1.0000;green,1.0000;blue,1.0000}] (0,3) rectangle (1,4);
\fill[color={rgb,1:red,1.0000;green,1.0000;blue,1.0000}] (1,3) rectangle (2,4);
\fill[color={rgb,1:red,1.0000;green,1.0000;blue,1.0000}] (2,3) rectangle (3,4);
\fill[color={rgb,1:red,1.0000;green,1.0000;blue,1.0000}] (3,3) rectangle (4,4);
\fill[color={rgb,1:red,1.0000;green,1.0000;blue,1.0000}] (4,3) rectangle (5,4);
\fill[color={rgb,1:red,1.0000;green,1.0000;blue,1.0000}] (0,2) rectangle (1,3);
\fill[color={rgb,1:red,1.0000;green,1.0000;blue,1.0000}] (1,2) rectangle (2,3);
\fill[color={rgb,1:red,1.0000;green,1.0000;blue,1.0000}] (2,2) rectangle (3,3);
\fill[color={rgb,1:red,1.0000;green,1.0000;blue,1.0000}] (3,2) rectangle (4,3);
\fill[color={rgb,1:red,1.0000;green,1.0000;blue,1.0000}] (4,2) rectangle (5,3);
\fill[color={rgb,1:red,1.0000;green,1.0000;blue,1.0000}] (0,1) rectangle (1,2);
\fill[color={rgb,1:red,1.0000;green,1.0000;blue,1.0000}] (1,1) rectangle (2,2);
\fill[color={rgb,1:red,0.0902;green,0.4118;blue,0.6667}] (2,1) rectangle (3,2);
\fill[color={rgb,1:red,1.0000;green,1.0000;blue,1.0000}] (3,1) rectangle (4,2);
\fill[color={rgb,1:red,1.0000;green,1.0000;blue,1.0000}] (4,1) rectangle (5,2);
\fill[color={rgb,1:red,1.0000;green,1.0000;blue,1.0000}] (0,0) rectangle (1,1);
\fill[color={rgb,1:red,1.0000;green,1.0000;blue,1.0000}] (1,0) rectangle (2,1);
\fill[color={rgb,1:red,0.8589;green,0.8871;blue,0.9402}] (2,0) rectangle (3,1);
\fill[color={rgb,1:red,0.8449;green,0.8760;blue,0.9342}] (3,0) rectangle (4,1);
\fill[color={rgb,1:red,1.0000;green,1.0000;blue,1.0000}] (4,0) rectangle (5,1);
\draw[white, line width=0.25pt, step=1] (0,0) grid (5,5);
\draw[mairlGrid, line width=0.35pt] (0,0) rectangle (5,5);
\fill[black] (2.478,2.692) -- (2.298,2.372) -- (2.658,2.372) -- cycle;
\end{scope}
\begin{scope}[shift={(9.800cm,0.180cm)}, x=0.380cm, y=0.380cm]
\fill[color={rgb,1:red,0.9281;green,0.6718;blue,0.6271}] (0,4) rectangle (1,5);
\fill[color={rgb,1:red,0.9237;green,0.6585;blue,0.6125}] (1,4) rectangle (2,5);
\fill[color={rgb,1:red,0.9032;green,0.5993;blue,0.5482}] (2,4) rectangle (3,5);
\fill[color={rgb,1:red,0.9134;green,0.6282;blue,0.5795}] (3,4) rectangle (4,5);
\fill[color={rgb,1:red,0.9017;green,0.5953;blue,0.5439}] (4,4) rectangle (5,5);
\fill[color={rgb,1:red,0.9270;green,0.6685;blue,0.6235}] (0,3) rectangle (1,4);
\fill[color={rgb,1:red,0.8829;green,0.5460;blue,0.4913}] (1,3) rectangle (2,4);
\fill[color={rgb,1:red,0.8443;green,0.4528;blue,0.3951}] (2,3) rectangle (3,4);
\fill[color={rgb,1:red,0.8802;green,0.5391;blue,0.4840}] (3,3) rectangle (4,4);
\fill[color={rgb,1:red,0.9128;green,0.6264;blue,0.5775}] (4,3) rectangle (5,4);
\fill[color={rgb,1:red,0.9134;green,0.6282;blue,0.5795}] (0,2) rectangle (1,3);
\fill[color={rgb,1:red,0.8847;green,0.5505;blue,0.4961}] (1,2) rectangle (2,3);
\fill[color={rgb,1:red,0.7852;green,0.3222;blue,0.2705}] (2,2) rectangle (3,3);
\fill[color={rgb,1:red,0.9024;green,0.5973;blue,0.5461}] (3,2) rectangle (4,3);
\fill[color={rgb,1:red,0.9321;green,0.6846;blue,0.6413}] (4,2) rectangle (5,3);
\fill[color={rgb,1:red,0.9312;green,0.6815;blue,0.6378}] (0,1) rectangle (1,2);
\fill[color={rgb,1:red,0.9108;green,0.6208;blue,0.5713}] (1,1) rectangle (2,2);
\fill[color={rgb,1:red,0.8265;green,0.4124;blue,0.3551}] (2,1) rectangle (3,2);
\fill[color={rgb,1:red,0.8526;green,0.4719;blue,0.4145}] (3,1) rectangle (4,2);
\fill[color={rgb,1:red,0.9326;green,0.6862;blue,0.6430}] (4,1) rectangle (5,2);
\fill[color={rgb,1:red,1.0000;green,1.0000;blue,1.0000}] (0,0) rectangle (1,1);
\fill[color={rgb,1:red,0.9326;green,0.6862;blue,0.6430}] (1,0) rectangle (2,1);
\fill[color={rgb,1:red,0.7686;green,0.2863;blue,0.2392}] (2,0) rectangle (3,1);
\fill[color={rgb,1:red,0.7698;green,0.2889;blue,0.2414}] (3,0) rectangle (4,1);
\fill[color={rgb,1:red,1.0000;green,1.0000;blue,1.0000}] (4,0) rectangle (5,1);
\draw[white, line width=0.25pt, step=1] (0,0) grid (5,5);
\draw[mairlGrid, line width=0.35pt] (0,0) rectangle (5,5);
\fill[black] (2.478,2.692) -- (2.298,2.372) -- (2.658,2.372) -- cycle;
\end{scope}
\node[font=\tiny, text=mairlMuted] at (1.55cm,-0.05cm)
  {ecological value};
\node[font=\tiny, text=mairlMuted] at (6.15cm,-0.05cm)
  {patrol distribution};
\node[font=\tiny, text=mairlMuted] at (10.75cm,-0.05cm)
  {attack probability};
\draw[->, draw=mairlInk, line width=0.45pt]
  (2.78cm,1.13cm) -- node[above, font=\tiny, text=mairlMuted]
  {optimize patrol} (4.92cm,1.13cm);
\draw[->, draw=mairlInk, line width=0.45pt]
  (7.38cm,1.13cm) -- node[above, font=\tiny, text=mairlMuted]
  {logit response} (9.52cm,1.13cm);
\end{tikzpicture}%
  }
  \caption{\textbf{Rewards, patrols, and poacher behavior in the Lob\'ek\'e environment.}
  The 23 accessible poaching targets are embedded in a $5\times5$ park grid,
  darker cells indicate larger concentration, and the triangle marks
  the ranger base.}
  \label{fig:appendix-lobeke-environment}
\end{figure}

For every pair $a, b \in \cA \times \cB$, let us denote by $v_b$ the poacher's value for target $b$ and $c_{a,b} \in [0, 1]$ the fraction of target $b$ covered by the patrol at $a$. From this, we define the following security-game types of features:
\begin{equation}
    \phi_{\mathrm{raw}}(a,b) \ceq \left(c_{a, b}, \ v_b, \ c_{a, b}v_b\right), \qquad \phi(\cdot, \cdot) \ceq \phi_{\mathrm{raw}}(\cdot, \cdot)/\max_{a, b} \norm{\phi_{\mathrm{raw}}(a,b)}_2.
\end{equation}

From this, we define the true reward as:
\begin{equation}
    r_2(a, b) \ceq \inner{\phi(a, b)}{\omega^*}, \qquad \omega^* \ceq (0, \ 1, \ -2/3),
\end{equation}
where $\omega^*$ is unknown to the algorithms and extracted from the generated GUARD matrix.

We define during training the target leader objective as $r_1 \ceq -r_2$ to make the game zero-sum and be compatible with all the baselines. In particular, this allows for the \texttt{Zero-sum OLS} method extracted from \citet{pmlr-v267-liao25i}. In general, this is not imposed, and we will measure transfer behaviors to multiple leader rewards. Throughout, we fix $\eta \approx 2.165$.

We parametrize the transfer objectives of the patrols with a target-priority vector $z \in [-1, 1]^{23}$. For each $z$, $r_1^z$ trades-off between covering the same cells as the poachers and covering prioritized cells:
\begin{equation}
    r_1^z(a,b) = (2c_{ab} - 1)z_b.
\end{equation}

To ease interpretation, we will report metrics computed from the \emph{normalized} objective $\overline{r}_1(\cdot, \cdot) = (r_1(\cdot, \cdot) - \min_{a, b} r_1(a, b))/(\max_{a, b} r_1(a, b) - \min_{a, b} r_1(a, b))$, such that $R_{\text{max}} = 1$ and to simplify the interpretation\footnote{Note that for constant rewards we would simply rescale, but this always induce a zero $\SubOpt$.}.

Sampling priority vectors uniformly as random from $[-1, 1]^{23}$, we generated $500$ candidate transfer objectives, before flattening their induced reward matrices. From those we select $10$ rewards maximizing diversity in payoff vectors. These are used to define our transfer metrics as averaged over these candidate objectives.

In the following, we provide a completion of the results present in Section~\ref{sec:numerical-validation}. For the different experiments, we summarize the tuning procedure in Appendix~\ref{fig:appendix-lobeke-hyperparameters}.

\subsubsection{Completed comparison of the algorithms: target and averaged transfer}

We evaluate how the different algorithms perform for the target task and show how they transfer to multiple downstream tasks, without additional interactions in the environment.

\begin{figure}[h]
  \centering
  \resizebox{0.9\textwidth}{!}{%
    \input{plots/appendix/i1_source_diagnostics}%
  }
  \caption{\textbf{Target and transfer diagnostics as a function of the number of observed poacher actions.} We evaluate the algorithms across $50$ distinct seeds.}
  \label{fig:appendix-lobeke-diagnostics}
\end{figure}

As can be seen on Figure~\ref{fig:appendix-lobeke-diagnostics}, our algorithm \texttt{Active IRL} presents a real advantage in terms of transfer to multiple downstream tasks compared to fixed-expert methods, and outperforms the other active method baselines. For the target task, the gap with the \texttt{Fixed-expert MLE} algorithm is significantly reduced, but our algorithm remains competitive without assuming access to a fixed-expert oracle.

\subsubsection{Completed comparison of the algorithms: behaviors at different transfer difficulties}

To deepen the understanding of transfer abilities of the different algorithms, we evaluate them on a broader spectrum of tasks

First, we draw a set of $180$ priority vectors independently at random, before normalizing their induced reward matrices to $[0, 1]$ using a positive affine transformation.

Second, we measure a notion of task difficulty through the sub-optimality of the source optimal leader policy:
\begin{equation}
    D_{\mathrm{gap}}(z,\mathrm{src}) = \SubOpt_{z}(\mu_{\mathrm{src}}^\star), \qquad \mu_{\mathrm{src}}^\star \in \argmax_{\tilde\mu} J^{\text{leader}}_{\text{src}}(\tilde\mu).
\end{equation}

Finally, we construct difficulty buckets and measure the average performance of the different algorithms for each of those buckets. Each bucket comprises exactly $60$ tasks.

The results can be found in the heatmaps of Figure~\ref{fig:appendix-lobeke-behavioral-transfer-gap} below.

\begin{figure}[H]
  \centering
  \resizebox{1.0\textwidth}{!}{%
    \begin{tikzpicture}
\begin{groupplot}[
  group style={group size=3 by 3, group name=lobekebehaviorgap,
                horizontal sep=1.00cm, vertical sep=0.80cm},
  mairl axis,
  width=0.31\textwidth,
  height=2.9cm,
  xmin=-0.5, xmax=4.5,
  ymin=-0.5, ymax=2.5,
  xtick={0,1,2,3,4},
  ytick={0,1,2},
  tick label style={font=\tiny, text=mairlMuted},
  title style={font=\scriptsize, text=mairlInk, yshift=-4pt, align=center},
  axis on top,
  grid=none,
  enlargelimits=false
]

\nextgroupplot[
  title={(a) Active IRL {\tiny (warm-up)}},
  xticklabels={{}},
  yticklabels={low,medium,high},
  point meta min=-4,
  point meta max=-2
]
\addplot[
  matrix plot*, mesh/cols=5,
  point meta=explicit, colormap name=mairl heat,
  mesh/check=false
] table[meta=meta, row sep=\\] {
x y meta\\
0 0 -2.80536355\\
1 0 -3.06988777\\
2 0 -3.22692199\\
3 0 -3.43525212\\
4 0 -3.57361251\\
0 1 -2.81869599\\
1 1 -3.0954144\\
2 1 -3.26770042\\
3 1 -3.40136922\\
4 1 -3.51203832\\
0 2 -2.7733774\\
1 2 -3.07309347\\
2 2 -3.28644028\\
3 2 -3.43080959\\
4 2 -3.55119959\\
};

\nextgroupplot[
  title={(b) Active IRL {\tiny (no warm-up)}},
  xticklabels={{}},
  yticklabels={{}},
  point meta min=-4,
  point meta max=-2
]
\addplot[
  matrix plot*, mesh/cols=5,
  point meta=explicit, colormap name=mairl heat,
  mesh/check=false
] table[meta=meta, row sep=\\] {
x y meta\\
0 0 -2.78354491\\
1 0 -3.01708673\\
2 0 -3.31898789\\
3 0 -3.46001264\\
4 0 -3.57538397\\
0 1 -2.80920511\\
1 1 -3.02056399\\
2 1 -3.26366207\\
3 1 -3.48025404\\
4 1 -3.538037\\
0 2 -2.74665008\\
1 2 -2.9868249\\
2 2 -3.27202788\\
3 2 -3.49785927\\
4 2 -3.57316431\\
};

\nextgroupplot[
  title={(c) Fixed-expert MLE},
  xticklabels={{}},
  yticklabels={{}},
  point meta min=-4,
  point meta max=-2
]
\addplot[
  matrix plot*, mesh/cols=5,
  point meta=explicit, colormap name=mairl heat,
  mesh/check=false
] table[meta=meta, row sep=\\] {
x y meta\\
0 0 -2.18949774\\
1 0 -2.35264534\\
2 0 -2.50163479\\
3 0 -2.50681113\\
4 0 -2.5144063\\
0 1 -2.36448256\\
1 1 -2.43875361\\
2 1 -2.4952534\\
3 1 -2.50505596\\
4 1 -2.45925892\\
0 2 -2.32410289\\
1 2 -2.3917769\\
2 2 -2.45066502\\
3 2 -2.46525646\\
4 2 -2.44293104\\
};

\nextgroupplot[
  title={(d) Fixed random {\tiny (warm-up)}},
  xticklabels={{}},
  yticklabels={low,medium,high},
  point meta min=-4,
  point meta max=-2
]
\addplot[
  matrix plot*, mesh/cols=5,
  point meta=explicit, colormap name=mairl heat,
  mesh/check=false
] table[meta=meta, row sep=\\] {
x y meta\\
0 0 -2.377432\\
1 0 -2.47587408\\
2 0 -2.46265841\\
3 0 -2.51785655\\
4 0 -2.50135989\\
0 1 -2.52923777\\
1 1 -2.59497276\\
2 1 -2.60266987\\
3 1 -2.65021224\\
4 1 -2.66104684\\
0 2 -2.51655913\\
1 2 -2.58407369\\
2 2 -2.60020953\\
3 2 -2.65626188\\
4 2 -2.67869105\\
};

\nextgroupplot[
  title={(e) Fixed random {\tiny (no warm-up)}},
  xticklabels={{}},
  yticklabels={{}},
  point meta min=-4,
  point meta max=-2
]
\addplot[
  matrix plot*, mesh/cols=5,
  point meta=explicit, colormap name=mairl heat,
  mesh/check=false
] table[meta=meta, row sep=\\] {
x y meta\\
0 0 -2.19819396\\
1 0 -2.27631627\\
2 0 -2.28946196\\
3 0 -2.31618713\\
4 0 -2.33331928\\
0 1 -2.38269561\\
1 1 -2.46169963\\
2 1 -2.47527891\\
3 1 -2.51680766\\
4 1 -2.53670638\\
0 2 -2.3603188\\
1 2 -2.45690901\\
2 2 -2.47593016\\
3 2 -2.50732128\\
4 2 -2.54739619\\
};

\nextgroupplot[
  title={(f) Zero-sum OLS},
  xticklabels={{}},
  yticklabels={{}},
  point meta min=-4,
  point meta max=-2
]
\addplot[
  matrix plot*, mesh/cols=5,
  point meta=explicit, colormap name=mairl heat,
  mesh/check=false
] table[meta=meta, row sep=\\] {
x y meta\\
0 0 -2.25356363\\
1 0 -2.2205123\\
2 0 -2.30282524\\
3 0 -2.29154845\\
4 0 -2.30564843\\
0 1 -2.43084976\\
1 1 -2.42833244\\
2 1 -2.4985235\\
3 1 -2.47058388\\
4 1 -2.51123407\\
0 2 -2.41481638\\
1 2 -2.41589605\\
2 2 -2.47624148\\
3 2 -2.44774366\\
4 2 -2.48846425\\
};

\nextgroupplot[
  title={(g) Online random {\tiny (warm-up)}},
  xticklabels={200,400,800,1200,1600},
  yticklabels={low,medium,high},
  point meta min=-4,
  point meta max=-2
]
\addplot[
  matrix plot*, mesh/cols=5,
  point meta=explicit, colormap name=mairl heat,
  mesh/check=false
] table[meta=meta, row sep=\\] {
x y meta\\
0 0 -2.3630629\\
1 0 -2.4343591\\
2 0 -2.54189342\\
3 0 -2.62535718\\
4 0 -2.72122994\\
0 1 -2.53305984\\
1 1 -2.57880144\\
2 1 -2.66632831\\
3 1 -2.69315631\\
4 1 -2.75204114\\
0 2 -2.52323802\\
1 2 -2.57845506\\
2 2 -2.66080298\\
3 2 -2.70169226\\
4 2 -2.76197462\\
};

\nextgroupplot[
  title={(h) Online random {\tiny (no warm-up)}},
  xticklabels={200,400,800,1200,1600},
  yticklabels={{}},
  point meta min=-4,
  point meta max=-2,
  colorbar horizontal,
  colorbar style={
      at={(0.5,-0.80)}, anchor=north,
      width=0.62\textwidth, height=0.16cm,
      xtick={-4,-3,-2}, xticklabels={$10^{-4}$,$10^{-3}$,$10^{-2}$},
      xlabel={Mean transfer $\operatorname{SubOpt}(\hat\mu)$},
      tick label style={font=\tiny}, label style={font=\scriptsize}
    }
]
\addplot[
  matrix plot*, mesh/cols=5,
  point meta=explicit, colormap name=mairl heat,
  mesh/check=false
] table[meta=meta, row sep=\\] {
x y meta\\
0 0 -2.24385558\\
1 0 -2.22993568\\
2 0 -2.28166053\\
3 0 -2.33343731\\
4 0 -2.47138272\\
0 1 -2.42621092\\
1 1 -2.43297581\\
2 1 -2.48705481\\
3 1 -2.52489991\\
4 1 -2.59844804\\
0 2 -2.40388787\\
1 2 -2.44272806\\
2 2 -2.49030401\\
3 2 -2.53679579\\
4 2 -2.61190566\\
};

\nextgroupplot[
  title={(i) Follower-only OLS},
  xticklabels={200,400,800,1200,1600},
  yticklabels={{}},
  point meta min=-4,
  point meta max=-2
]
\addplot[
  matrix plot*, mesh/cols=5,
  point meta=explicit, colormap name=mairl heat,
  mesh/check=false
] table[meta=meta, row sep=\\] {
x y meta\\
0 0 -2.16969043\\
1 0 -2.14147896\\
2 0 -2.15789945\\
3 0 -2.1656109\\
4 0 -2.14801325\\
0 1 -2.34852498\\
1 1 -2.3284678\\
2 1 -2.33866244\\
3 1 -2.3455737\\
4 1 -2.33703427\\
0 2 -2.31035772\\
1 2 -2.29496698\\
2 2 -2.3013925\\
3 2 -2.30562446\\
4 2 -2.30820317\\
};

\end{groupplot}
\node[font=\footnotesize, text=mairlInk, anchor=south]
  at ($ (lobekebehaviorgap c1r1.north)!0.5!(lobekebehaviorgap c2r1.north) + (0,0.80cm) $)
  {General data-collection assumptions};
\node[font=\footnotesize, text=mairlInk, anchor=south]
  at ($ (lobekebehaviorgap c3r1.north) + (0,0.80cm) $)
  {Fixed-expert assumptions};
\node[font=\scriptsize, text=mairlInk, rotate=90, anchor=south]
  at ($ (lobekebehaviorgap c1r2.outer west) + (-2pt,0) $)
  {Transfer-distance bucket};
\node[font=\scriptsize, text=mairlInk, anchor=north]
  at ($ (lobekebehaviorgap c1r3.south)!0.5!(lobekebehaviorgap c3r3.south) + (0,-0.40cm) $)
  {Number of observed poacher actions};
\end{tikzpicture}%
  }
  \caption{\textbf{Transfer SubOpt by the downstream value gap of the true source-optimal leader policy.} Cells average results for tasks in their respective bucket, each forming exactly $1/3$ of the tasks. Averaged over $50$ seeds.}
  \label{fig:appendix-lobeke-behavioral-transfer-gap}
\end{figure}

\newpage
\subsubsection{Completed comparison of the algorithms: illustrative transfer tasks}

We now extend the Figure~\ref{fig:downstream-transfer} from the main document by reporting in Figure~\ref{fig:appendix-lobeke-transfer-general} the results for the same downstream tasks and for all the algorithms, with and without initial warm-up.

\begin{figure}[h]
  \centering
  \input{plots/appendix/i1_downstream}
  \caption{\textbf{Illustrative patrol policies for diverse tasks.} Algorithms with logit-response and fixed-expert assumptions. Illustrations of the learned policies for different leader rewards. Averaged over $50$ seeds, reporting mean target and final SubOpt. w/ is for the performance when using warm-up.}
  \label{fig:appendix-lobeke-transfer-general}
\end{figure}

These illustrations are mostly for intuition, we do not argue that they represent neither typical realistic leader tasks nor that they highlight the worst-case leader rewards for the algorithms.

\subsection{Hyperparameter selection and sensitivity analysis}

\begin{figure}[h]
  \centering
  \resizebox{0.9\textwidth}{!}{%
    \begin{tikzpicture}
\begin{axis}[
  mairl axis, width=0.88\textwidth, height=7cm,
  xmin=-0.5, xmax=6.5, ymin=-0.5, ymax=4.5,
  xtick={0,1,2,3,4,5,6}, xticklabels={0.001,0.003,0.01,0.03,0.1,0.3,1},
  ytick={0,1,2,3,4}, yticklabels={0,25,50,100,200},
  xlabel={Information multiplier $\lambda/\eta^2$}, ylabel={Warm-up observations},
  axis line style={draw=none}, tick style={draw=none}, grid=none,
  colormap={sensitivity}{color(0cm)=(white); color(1cm)=(mairlHeatOne)},
  point meta min=1.5, point meta max=3.5, colorbar,
  colorbar style={ylabel={Mean transfer $\operatorname{SubOpt}$ ($10^{-4}$)},
    scaled y ticks=false, width=0.2cm}
]
\addplot[draw=none, only marks, mark=none, point meta=explicit]
  coordinates {(0,0) [1.5] (0,0) [3.5]};
\path[draw=white, fill=mairlHeatOne!80.161015!white] (axis cs:-0.5,-0.5) rectangle (axis cs:0.5,0.5);
\node[text=white, font=\small] at (axis cs:0,0) {3.10};
\path[draw=white, fill=mairlHeatOne!10.983510!white] (axis cs:0.5,-0.5) rectangle (axis cs:1.5,0.5);
\node[text=mairlInk, font=\small] at (axis cs:1,0) {1.72};
\path[draw=white, fill=mairlHeatOne!67.878955!white] (axis cs:1.5,-0.5) rectangle (axis cs:2.5,0.5);
\node[text=white, font=\small] at (axis cs:2,0) {2.86};
\path[draw=white, fill=mairlHeatOne!50.460019!white] (axis cs:2.5,-0.5) rectangle (axis cs:3.5,0.5);
\node[text=mairlInk, font=\small] at (axis cs:3,0) {2.51};
\path[draw=white, fill=mairlHeatOne!7.986563!white] (axis cs:3.5,-0.5) rectangle (axis cs:4.5,0.5);
\node[text=mairlInk, font=\small] at (axis cs:4,0) {1.66};
\path[draw=white, fill=mairlHeatOne!40.150284!white] (axis cs:4.5,-0.5) rectangle (axis cs:5.5,0.5);
\node[text=mairlInk, font=\small] at (axis cs:5,0) {2.30};
\path[draw=white, fill=mairlHeatOne!15.178056!white] (axis cs:5.5,-0.5) rectangle (axis cs:6.5,0.5);
\node[text=mairlInk, font=\small] at (axis cs:6,0) {1.80};
\path[draw=white, fill=mairlHeatOne!33.274588!white] (axis cs:-0.5,0.5) rectangle (axis cs:0.5,1.5);
\node[text=mairlInk, font=\small] at (axis cs:0,1) {2.17};
\path[draw=white, fill=mairlHeatOne!28.129555!white] (axis cs:0.5,0.5) rectangle (axis cs:1.5,1.5);
\node[text=mairlInk, font=\small] at (axis cs:1,1) {2.06};
\path[draw=white, fill=mairlHeatOne!23.305976!white] (axis cs:1.5,0.5) rectangle (axis cs:2.5,1.5);
\node[text=mairlInk, font=\small] at (axis cs:2,1) {1.97};
\path[draw=white, fill=mairlHeatOne!84.747364!white] (axis cs:2.5,0.5) rectangle (axis cs:3.5,1.5);
\node[text=white, font=\small] at (axis cs:3,1) {3.19};
\path[draw=white, fill=mairlHeatOne!38.884192!white] (axis cs:3.5,0.5) rectangle (axis cs:4.5,1.5);
\node[text=mairlInk, font=\small] at (axis cs:4,1) {2.28};
\path[draw=white, fill=mairlHeatOne!34.586130!white] (axis cs:4.5,0.5) rectangle (axis cs:5.5,1.5);
\node[text=mairlInk, font=\small] at (axis cs:5,1) {2.19};
\path[draw=white, fill=mairlHeatOne!43.948796!white] (axis cs:5.5,0.5) rectangle (axis cs:6.5,1.5);
\node[text=mairlInk, font=\small] at (axis cs:6,1) {2.38};
\path[draw=white, fill=mairlHeatOne!32.008233!white] (axis cs:-0.5,1.5) rectangle (axis cs:0.5,2.5);
\node[text=mairlInk, font=\small] at (axis cs:0,2) {2.14};
\path[draw=white, fill=mairlHeatOne!11.605112!white] (axis cs:0.5,1.5) rectangle (axis cs:1.5,2.5);
\node[text=mairlInk, font=\small] at (axis cs:1,2) {1.73};
\path[draw=white, fill=mairlHeatOne!38.519555!white] (axis cs:1.5,1.5) rectangle (axis cs:2.5,2.5);
\node[text=mairlInk, font=\small] at (axis cs:2,2) {2.27};
\path[draw=white, fill=mairlHeatOne!3.407800!white] (axis cs:2.5,1.5) rectangle (axis cs:3.5,2.5);
\node[text=mairlInk, font=\small] at (axis cs:3,2) {1.57};
\path[draw=white, fill=mairlHeatOne!32.516679!white] (axis cs:3.5,1.5) rectangle (axis cs:4.5,2.5);
\node[text=mairlInk, font=\small] at (axis cs:4,2) {2.15};
\path[draw=white, fill=mairlHeatOne!12.616920!white] (axis cs:4.5,1.5) rectangle (axis cs:5.5,2.5);
\node[text=mairlInk, font=\small] at (axis cs:5,2) {1.75};
\path[draw=white, fill=mairlHeatOne!13.269951!white] (axis cs:5.5,1.5) rectangle (axis cs:6.5,2.5);
\node[text=mairlInk, font=\small] at (axis cs:6,2) {1.77};
\path[draw=white, fill=mairlHeatOne!18.129969!white] (axis cs:-0.5,2.5) rectangle (axis cs:0.5,3.5);
\node[text=mairlInk, font=\small] at (axis cs:0,3) {1.86};
\path[draw=white, fill=mairlHeatOne!38.661115!white] (axis cs:0.5,2.5) rectangle (axis cs:1.5,3.5);
\node[text=mairlInk, font=\small] at (axis cs:1,3) {2.27};
\path[draw=white, fill=mairlHeatOne!46.851394!white] (axis cs:1.5,2.5) rectangle (axis cs:2.5,3.5);
\node[text=mairlInk, font=\small] at (axis cs:2,3) {2.44};
\path[draw=white, fill=mairlHeatOne!58.445866!white] (axis cs:2.5,2.5) rectangle (axis cs:3.5,3.5);
\node[text=mairlInk, font=\small] at (axis cs:3,3) {2.67};
\path[draw=white, fill=mairlHeatOne!68.972397!white] (axis cs:3.5,2.5) rectangle (axis cs:4.5,3.5);
\node[text=white, font=\small] at (axis cs:4,3) {2.88};
\path[draw=white, fill=mairlHeatOne!13.883136!white] (axis cs:4.5,2.5) rectangle (axis cs:5.5,3.5);
\node[text=mairlInk, font=\small] at (axis cs:5,3) {1.78};
\path[draw=white, fill=mairlHeatOne!33.341888!white] (axis cs:5.5,2.5) rectangle (axis cs:6.5,3.5);
\node[text=mairlInk, font=\small] at (axis cs:6,3) {2.17};
\path[draw=white, fill=mairlHeatOne!20.286139!white] (axis cs:-0.5,3.5) rectangle (axis cs:0.5,4.5);
\node[text=mairlInk, font=\small] at (axis cs:0,4) {1.91};
\path[draw=white, fill=mairlHeatOne!24.858552!white] (axis cs:0.5,3.5) rectangle (axis cs:1.5,4.5);
\node[text=mairlInk, font=\small] at (axis cs:1,4) {2.00};
\path[draw=white, fill=mairlHeatOne!32.214675!white] (axis cs:1.5,3.5) rectangle (axis cs:2.5,4.5);
\node[text=mairlInk, font=\small] at (axis cs:2,4) {2.14};
\path[draw=white, fill=mairlHeatOne!19.733065!white] (axis cs:2.5,3.5) rectangle (axis cs:3.5,4.5);
\node[text=mairlInk, font=\small] at (axis cs:3,4) {1.89};
\path[draw=white, fill=mairlHeatOne!74.391677!white] (axis cs:3.5,3.5) rectangle (axis cs:4.5,4.5);
\node[text=white, font=\small] at (axis cs:4,4) {2.99};
\path[draw=white, fill=mairlHeatOne!35.102171!white] (axis cs:4.5,3.5) rectangle (axis cs:5.5,4.5);
\node[text=mairlInk, font=\small] at (axis cs:5,4) {2.20};
\path[draw=white, fill=mairlHeatOne!5.963832!white] (axis cs:5.5,3.5) rectangle (axis cs:6.5,4.5);
\node[text=mairlInk, font=\small] at (axis cs:6,4) {1.62};
\end{axis}
\end{tikzpicture}%
  }
  \caption{\textbf{Joint selection of the warm-up budget and information multiplier for \texttt{Active IRL} after 1,600 observed poacher actions.} Means computed over $100$ fixed tuning tasks and $32$ paired seeds. We use an MLE ridge of $10^{-6}$ for speed and stability (same as in the evaluation pipeline).}
  \label{fig:appendix-lobeke-hyperparameters}
\end{figure}

We present in Figure~\ref{fig:appendix-lobeke-hyperparameters} a sweep over pairs of information multiplier $\lambda/\eta^2$ and warm-up budget, to describe the sensitivity of \texttt{Active IRL} to both parameters. We compute each transfer $\SubOpt$ mean over $100$ fixed and randomly selected tasks, each task performance averaged over $32$ seeds.

We use this to select hyperparameters for our experiments. Specifically, we select the minimum mean transfer $\SubOpt$. The tasks and corresponding random seeds are only used for tuning and not shared with the evaluation pipeline.

As can be seen in the heatmap, the performances are similar across the different configurations, and sensitive to the exact seeds and tasks used. Among those configurations, we therefore choose the best-performing configuration with positive warm-up and without warm-up ($n_{\text{warm}} = 50$, $\lambda/\eta^2 = 0.03$ and $n_{\text{warm}} = 0$, $\lambda/\eta^2 = 0.1$). This allows to understand the impact of the warm-up, which is minimal in the Lobéké task.

\subsection{Diagnosis in synthetic environments}
\label{appx:experiments:synthetic}

\subsubsection{Active selection in sparse-information games}\label{appx:experiments:synthetic:sparse}

We now highlight the benefits of active policy selection by designing a game where the Fisher information is very sparse, i.e. most of the possible leader actions are not informative for learning $\omega$. To this end, we set up a game with $m$ leader actions, such that only the first leader action affects the follower's reward.

Specifically, the game is defined as $\cA = [m]$, $\cB = [2]$, $\omega = 1$, $\eta = 2$ and $r_2^{\omega}(a, b) = \omega \cdot \mathbf{1}\{a=1,b=2\}$, where only the leader action $a = 1$ is informative. All other leader actions provide no information about $\omega$. The leader rewards are $r_1(1, b) = \mathbf{1}\{b=2\}$ and $r_1(a, b) = 3/4$ for $a > 1$.

For this experiment, we report results for \texttt{Active IRL} without warm-up to isolate the online policy-selection behavior of our algorithm.

\begin{figure}[h]
  \centering
  \resizebox{\textwidth}{!}{%
    \input{plots/appendix/i2_sparse_information}%
  }
  \caption{\textbf{Target metrics in the sparse-information game.} Averaged over $100$ seeds.}
  \label{fig:appendix-sparse-information}
\end{figure}

As can be seen in Figure~\ref{fig:appendix-sparse-information}, our algorithm \texttt{Active IRL} is able to quickly learn $\omega$ and thus minimizing the sub-optimality of the computed policy $\hat\mu$. In comparison, the two baselines are not able to find useful information efficiently. As expected, the performance discrepancy is increasing with the number of leader actions $m$.

\subsubsection{Active IRL on the lower-bound instance}
\label{appx:experiments:lower-bound}

We recall the hard instance as follows.
\defHardInstance*

From this hard instance, we were able to derive the following lower bound.
\thmLower*

This minimax result lower bounds the worst-case $\SubOpt$ over all learning algorithms. As we cannot enumerate and run all such possible algorithms, we instead report below the performance of our algorithm \texttt{Active IRL} for the hard instance. This is more illustrative than a direct validation of the lower bound (see the full proof in Appendix~\ref{appx:hardness}), but it does not invalidate the bound with similar rates, hence working as a sanity check.

In our experiment below, we simplify the instance by selecting $\eta = 1$, $R_{\max} = 1$ and $S = 1$. To ease the interpretation, we disable warm-up and set $\lambda = \eta^2$, therefore only keeping the online part of Algorithm~\ref{alg:stackelberg-static}. Finally, we directly use the available closed-form leader policy for evaluation.

\begin{figure}[h]
  \centering
  \resizebox{\textwidth}{!}{%
    \begin{tikzpicture}
\begin{groupplot}[
  group style={group size=2 by 1, group name=lowerbound, horizontal sep=1.85cm},
  mairl axis,
  width=0.410\textwidth,
  height=4.55cm,
  xmode=log,
  log basis x=2,
  xmin=4, xmax=64,
  xtick={4,8,16,32,64},
  ytick style={draw=none},
  minor y tick style={draw=none}
]
\nextgroupplot[
  title={(a) Target SubOpt rate},
  y tick label style={font=\tiny, text=mairlMuted},
  ymode=log,
  ylabel={${\operatorname{SubOpt}(\hat\mu)}$},
  legend to name=lowerbound-legend,
  legend style={legend columns=4,
                 font=\tiny, row sep=-1pt, column sep=3pt}
]
\addplot[draw=none, name path=lower-subopt-33Upper, forget plot] coordinates {(4,0.061113829) (8,0.042725037) (16,0.030250419) (32,0.021672803) (64,0.015091124)};
\addplot[draw=none, name path=lower-subopt-33Lower, forget plot] coordinates {(4,0.059235383) (8,0.041092692) (16,0.029094677) (32,0.020986981) (64,0.01449205)};
\addplot[fill=mairlActive, mairl ci, forget plot] fill between[of=lower-subopt-33Upper and lower-subopt-33Lower];
\addplot[draw=mairlActive, solid, line width=0.60pt, mark=*, mark size=1.0pt, forget plot] coordinates {(4,0.060174606) (8,0.041908865) (16,0.029672548) (32,0.021329892) (64,0.014791587)};
\addplot[draw=mairlActive, dashed, line width=0.55pt, forget plot] coordinates {(4,0.062176501) (8,0.044079443) (16,0.031209373) (32,0.022082712) (64,0.015619916)};
\addplot[draw=none, name path=lower-subopt-64Upper, forget plot] coordinates {(4,0.061247233) (8,0.043064174) (16,0.030617933) (32,0.021682066) (64,0.015336524)};
\addplot[draw=none, name path=lower-subopt-64Lower, forget plot] coordinates {(4,0.060069807) (8,0.04208147) (16,0.02997214) (32,0.021232579) (64,0.015018585)};
\addplot[fill=mairlZeroSum, mairl ci, forget plot] fill between[of=lower-subopt-64Upper and lower-subopt-64Lower];
\addplot[draw=mairlZeroSum, solid, line width=0.60pt, mark=*, mark size=1.0pt, forget plot] coordinates {(4,0.06065852) (8,0.042572822) (16,0.030295036) (32,0.021457323) (64,0.015177555)};
\addplot[draw=mairlZeroSum, dashed, line width=0.55pt, forget plot] coordinates {(4,0.062176501) (8,0.044079443) (16,0.031209373) (32,0.022082712) (64,0.015619916)};
\addplot[draw=mairlInk, dotted, line width=0.55pt] coordinates {(4,0.011363636) (8,0.0080353043) (16,0.0056818182) (32,0.0040176522) (64,0.0028409091)};
\addlegendentry{lower-bound floor}
\addlegendimage{draw=mairlActive, solid, line width=0.60pt}
\addlegendentry{$d=33$}
\addlegendimage{draw=mairlZeroSum, solid, line width=0.60pt}
\addlegendentry{$d=64$}
\addlegendimage{draw=mairlInk, dashed, line width=0.55pt}
\addlegendentry{max coordinate}
\nextgroupplot[
  title={(b) Rate normalization},
  ylabel={${\sqrt{K/d}\,\operatorname{SubOpt}(\hat\mu)}$},
  ylabel style={
    at={(axis description cs:-0.13,0.5)},
    anchor=south,
  }
]
\addplot[draw=none, name path=lower-normalized-33Upper, forget plot] coordinates {(4,0.12222766) (8,0.12084465) (16,0.12100168) (32,0.12259989) (64,0.12072899)};
\addplot[draw=none, name path=lower-normalized-33Lower, forget plot] coordinates {(4,0.11847077) (8,0.11622768) (16,0.11637871) (32,0.1187203) (64,0.1159364)};
\addplot[fill=mairlActive, mairl ci, forget plot] fill between[of=lower-normalized-33Upper and lower-normalized-33Lower];
\addplot[draw=mairlActive, solid, line width=0.60pt, mark=*, mark size=1.0pt, forget plot] coordinates {(4,0.12034921) (8,0.11853617) (16,0.11869019) (32,0.12066009) (64,0.11833269)};
\addplot[draw=mairlActive, dashed, line width=0.55pt, forget plot] coordinates {(4,0.124353) (8,0.12467549) (16,0.12483749) (32,0.12491868) (64,0.12495933)};
\addplot[draw=none, name path=lower-normalized-64Upper, forget plot] coordinates {(4,0.12249447) (8,0.12180388) (16,0.12247173) (32,0.12265229) (64,0.12269219)};
\addplot[draw=none, name path=lower-normalized-64Lower, forget plot] coordinates {(4,0.12013961) (8,0.11902437) (16,0.11988856) (32,0.12010961) (64,0.12014868)};
\addplot[fill=mairlZeroSum, mairl ci, forget plot] fill between[of=lower-normalized-64Upper and lower-normalized-64Lower];
\addplot[draw=mairlZeroSum, solid, line width=0.60pt, mark=*, mark size=1.0pt, forget plot] coordinates {(4,0.12131704) (8,0.12041412) (16,0.12118015) (32,0.12138095) (64,0.12142044)};
\addplot[draw=mairlZeroSum, dashed, line width=0.55pt, forget plot] coordinates {(4,0.124353) (8,0.12467549) (16,0.12483749) (32,0.12491868) (64,0.12495933)};
\end{groupplot}
\node[font=\small, text=mairlInk, anchor=north]
  at ($(lowerbound c1r1.south)!0.5!(lowerbound c2r1.south)+(0,-0.48cm)$)
  {Samples per dimension $K/d$};
\node[anchor=north]
  at ($(lowerbound c1r1.south)!0.5!(lowerbound c2r1.south)+(0,-1.02cm)$)
  {\ref*{lowerbound-legend}};
\end{tikzpicture}%
  }
  \caption{\textbf{Empirical scaling on the minimax hard instance.} We use $16$ independent seeds per coordinate and report the maximum sub-optimality mean over all coordinates.}
  \label{fig:appendix-lower-bound}
\end{figure}

\stopcontents[sections] %

\end{document}